\newcommand{\polylogn}{\text{polylog}(n)}
\newcommand{\Delt}{\Delta}
\newcommand{\inQuote}[1]{: ``#1''}
\newcommand{\FullOrShort}{full}
	  \newcommand{\fullOnly}[1]{#1}
	  \newcommand{\shortOnly}[1]{}
   \newcommand{\fullOnly}[1]{}
	 \newcommand{\shortOnly}[1]{#1}
\begin{document}

\mainmatter  % start of an individual contribution

% first the title is needed
\title{Bounds on Contention Management in Radio Networks}

% a short form should be given in case it is too long for the running head
\titlerunning{Bounds on Contention Management in Radio Networks}

% the name(s) of the author(s) follow(s) next
%
% NB: Chinese authors should write their first names(s) in front of
% their surnames. This ensures that the names appear correctly in
% the running heads and the author index.
%
%\author{Mohsen Ghaffari%
%\thanks{MIT, \texttt{ghaffari@csail.mit.edu}} \and Bernhard Haeupler\thanks{MIT, \texttt{haeupler@csail.mit.edu}} \and Nancy Lynch\thanks{MIT, \texttt{lynch@csail.mit.edu}}\and Calvin Newport\thanks{Georgetown University, \texttt{cnewport@cs.georgetown.edu}}}

  \author[*]{Mohsen Ghaffari}
	\author[*]{Bernhard Haeupler}
	\author[*]{Nancy Lynch}
	\author[**]{Calvin Newport\vspace{0.4cm}}
	\affil[*]{Computer Science and Artificial Intelligence Lab, MIT \authorcr \texttt{\{ghaffari, haeupler, lynch\}@csail.mit.edu}\vspace{0.2cm}}
	\affil[**]{Department of Computer Science, Georgetown University \authorcr \texttt{cnewport@cs.georgetown.edu}}

\renewcommand\Authands{ and }

\authorrunning{Ghaffari et al.}
% (feature abused for this document to repeat the title also on left hand pages)
\institute{}
% the affiliations are given next; don't give your e-mail address
% unless you accept that it will be published

%
% NB: a more complex sample for affiliations and the mapping to the
% corresponding authors can be found in the file "llncs.dem"
% (search for the string "\mainmatter" where a contribution starts).
% "llncs.dem" accompanies the document class "llncs.cls".
%

\toctitle{Bounds on Contention Management in Radio Networks }
\tocauthor{Ghaffari et al.}
\maketitle

\fullOnly{\thispagestyle{empty}}
%\pagenumbering{roman}

\begin{abstract} 
%We study the local broadcast problem in two
%synchronous wireless network models. 
The local broadcast problem assumes that processes in a 
wireless network are provided messages, one by one,
that must be delivered to their neighbors.
In this paper, 
we prove tight bounds for this problem in two well-studied
wireless network models:
the {\em classical} model,
in which links are reliable and collisions consistent,
and the more recent {\em dual graph} model,
which introduces unreliable edges.
Our results prove that the
{\em Decay} strategy,
commonly used for local broadcast in the classical setting,
is  optimal.
They also establish a separation between the two models,
proving that the dual graph setting is strictly harder
than the classical setting, with respect to this primitive.
%
%We conclude by proving another separation:
%in the classical model there is
%no significant difference in power between centralized and distributed local broadcast
%algorithms, while in the dual graph model the gap is exponential.
\end{abstract}

\iffalse
%Non-Latex Abstract for Arxiv:
%
%We study the local broadcast problem in two well-studied wireless network models. The local broadcast problem assumes that processes in a  wireless network are provided messages, one by one, that must be delivered to their neighbors. In the classical wireless network model, in which links are reliable and collisions consistent, the most commonly used local broadcast strategy is the Decay approach introduced 25 years ago by Bar-Yehuda et~al. During the 25-year period in which this strategy has been used, it has remained an open question whether it is optimal. In this paper, we resolve this long-standing question by proving matching lower bounds.
%
%We then turn our attention to the more recent dual graph model which generalizes the classical model by introducing unreliable edges. In this model we provide a new local broadcast algorithm and prove it optimal. Our results also establish a separation between the two models with respect to local broadcast, proving the dual
%graph model to be strictly harder than its classical predecessor. This separation underscores the warning that algorithms proved correct in the popular classical model might not remain correct if deployed in the more general (and realistic) dual graph model. Combined, our results provide an essentially complete characterization of this important problem in two important models.

\fi

%\begin{center}
%     \vspace*{0.8cm}
%     {\large \bf regular paper submission, eligible for best student paper award}\\ M.\ Ghaffari and B.\ Haeupler are full-time students

%     \vspace*{0.8cm}
%     {\bf Contact author:}\\[2mm]
%     Mohsen Ghaffari\\
%     Computer Science and Artificial Intelligence Laboratory, Massachusetts Institute of Technology\\
%     32 Vassar St, Cambridge, MA 02139, USA\\[0.2cm]
%     \emph{email:} \texttt{ghaffari@mit.edu}\\
%     \vspace*{0.8cm}
% \end{center}

%\pagebreak

%%%%%%%%%%%%%%%%%%%%%%%%%%%%%%%%%%%%%%%%%%%%%%%%%%

%%%%%%%%%%%%%%%%%%%%%%%%%%%%%%%%%%%%%%%%%%%%%%%%%%%%%%%%%%%%%%%%%%%%%%%%%%%%%%%%%%%%%%%%%%%%%%%%%%%%%%%%%%%%%%
\section{Introduction} 
At the core of every wireless network algorithm is the need
to manage contention on the shared medium.
In the theory community, this challenge is abstracted
as the {\em local broadcast problem}, in which processes
are given messages, one by one, 
that must be delivered to their neighbors.

This problem has been studied in multiple wireless network models.
The most common such model is the {\em classical} model, introduced
by Chlamatac and Kutten~\cite{CK85}, in which
links are reliable
and concurrent broadcasts by neighbors always generate collisions.
The dominant local broadcast strategy in this model
is the {\em Decay} routine introduced by Bar-Yehuda et al.~\cite{BGI87}.
In this strategy,
nodes cycle through an exponential distribution of broadcast
probabilities
with the hope that one will be appropriate for the current 
level of contention~(e.g., 
\cite{BGI87, CGR00, CGGPR00, CGOR00, CMS01, CCMPS01, CMS04, GPX05, KLNOR10}).
To solve local broadcast with high probability (with respect to the network size $n$),
the {\em Decay} strategy requires $O(\Delta\log{n})$ rounds,
where $\Delta$ is the maximum contention in the network (which is at most the maximum degree in the network topology).
It has remained an open question whether this bound can be improved
to $O(\Delta + \polylogn)$. 
In this paper, we resolve this open question by proving the {\em Decay}
bound optimal.
This result also proves that existing
constructions of {\em ad hoc selective families}~\cite{CCMPS01,CMS04}---a 
type
of combinatorial object used in wireless network 
algorithms---are optimal.

We then turn our attention to the more recent {\em dual graph}
wireless network model introduced by Kuhn et~al.~\cite{KLN09, KLN09BA, KLNOR10, CGKLN11}.
This model generalizes the classical model by allowing
some edges in the communication graph to be unreliable.
It was motivated by the observation 
that real wireless networks include links of dynamic quality (see~\cite{KLNOR10} for more
extensive discussion).
We provide tight solutions to the local broadcast problem in this setting, using algorithms
based on the {\em Decay} strategy.
Our tight bounds in the dual graph model are larger (worse) 
than our tight time bounds for the classical model, formalizing
a separation between the two settings (see Figure~\ref{fig:results} and
the discussion below for result details).
We conclude by proving another separation:
in the classical model there is no significant difference
in power between centralized and distributed local broadcast
algorithms, while in the dual graph model the gap is exponential.

These separation results are important 
because most wireless network algorithm
analysis relies on the correctness of the underlying contention
management strategy.
By proving that the dual graph model is strictly harder
with respect to local broadcast,
we have established that an algorithm
proved correct in the classical model will not necessarily
remain correct or might loose its efficiency in the more general (and more realistic) 
dual graph model.
%
%as it establishes that an algorithm
%proved correct in the classical model will not necessarily
%remain correct in the more general (and  more realistic) 
%dual graph model. 

%Fortunately, since we also describe an optimal local broadcast algorithm for the dual graph model, this paper provides algorithm designers the tools needed to build solutions that work efficiently in this more general setting. That is, we provide both a warning and a way to respond.

\underline{To summarize:} 
This paper provides an essentially complete characterization
of the local broadcast problem in the well-studied classical and dual graph wireless network models. 
In doing so, we: 
(1) answer the long-standing open question regarding the optimality of {\em Decay} in the classical model;
(2) provide a variant of Decay and prove it optimal for the local broadcast problem in the dual graph model; and
(3) formalize the separation between these two models, with respect to local broadcast.
%Because most wireless algorithms rely on contention management, we hope these results to be useful in the study of higher-level problems in wireless networks.
%
%
{\begin{figure}[t]
\centering
%%%%%
\shortOnly{\tiny}
\begin{tabular}{|c|c|c|}
\hline
  & \bf Classical Model & \bf Dual Graph Model \\ \hline \hline
\bf Ack. Upper  & $O(\Delta\log{n})$** & $O(\Delta'\log{n})$* \\ \hline
\bf Ack. Lower  & $\Omega(\Delta \log{n})$* 
  & $\Omega(\Delta'\log{n})$*\\ \hline \hline
\bf Prog. Upper & $O(\log{\Delta}\log{n})$** & $O(\min\{ k\log{k}\log{n}, \Delta'\log{n} \})$* \\ \hline
\bf Prog. Lower & $\Omega(\log{\Delta}\log{n})$** & $\Omega(\Delta'\log{n})$*\\ \hline
\end{tabular}
\caption{{\shortOnly{\footnotesize} \onehalfspacing A summary of our results for {\em acknowledgment} and {\em progress} 
for the local broadcast problem.
Results that are new, or significant improvements
over the previously best known result, are marked
with an ``*'' while a ``**'' marks results that where obtained from 
prior work via minor tweaks.
%\calvin{Mohsen, you need to check that what I indicate as new is actually new.} 
}}
\label{fig:results}
\end{figure}}

\paragraph{Result Details:}
As mentioned, 
the {\em local broadcast} problem assumes processes are
provided messages, one by one, which should be delivered
to their neighbors in the communication graph.
Increasingly, local broadcast solutions
are being studied separately from the higher level problems
that use them, 
improving the composability of 
solutions; e.g.,~\cite{KLN09,CLVW09,KKKL10,KKLMP11}.
Much of the older theory work in the wireless setting, however,
mixes the local broadcast logic with the logic
of the higher-level problem being solved; e.g.,~\cite{BGI87, CGR00, CGGPR00, CGOR00, CMS01, CCMPS01, CMS04, GPX05, KLNOR10}.
This previous work can be seen as implicitly solving local broadcast.

The efficiency of a local broadcast algorithm is characterized
by two metrics: 
(1) an {\em acknowledgment bound}, which measures the time for 
a sender process (a process that has a message for broadcast) 
to deliver its message to all of its neighbors; 
and (2) a {\em progress bound}, which measures
the time for a receiver process (a process that has a sender neighbor) 
to receive at least one message~\footnote{Note that with respect to these definitions, a process can be both a sender and a receiver, simultaneously.}.
%\footnote{Other bounds have also been proposed
%and analyzed, but these two seem to be the most consistently
%useful in analyzing  higher-level algorithms that use
%local broadcast solutions.}
The acknowledgment bound is obviously interesting; 
the progress bound has also been shown to be critical for analyzing 
algorithms for many problems, e.g., global broadcast~\cite{KLN09} %\footnote{\bernhard{We should probably add a reference here}}
where the reception of {\em any} message is normally sufficient to advance the algorithm.
The progress bound was first introduced and explicitly 
specified in~\cite{KLN09, KKKL10}
but it was implicitly used already in many previous works~\cite{BGI87, CGR00, CGGPR00, CGOR00, CMS01, GPX05}.
%For randomized algorithms
Both acknowledgment and progress bounds typically depend on two parameters, the maximum contention $\Delta$
and the network size $n$. In the dual graph model,
an additional measure of maximum contention, $\Delta'$, is introduced
to measure contention in the unreliable communication link
graph, which is typically denser than the reliable link graph.
In our progress result for the dual graph model, we also
introduce $k$ to capture the {\em actual} amount of contention relevant
to a specific message.
These bounds are usually required to hold with high probability. 

Our upper and lower bound results for the
local broadcast problem 
in the classical and dual graph models are summarized
in Figure~\ref{fig:results}.
Here we highlight three key points regarding these results. 
First, in both models, the upper and lower bounds match asymptotically.
Second, we show that $\Omega(\Delta\log{n})$ rounds
are necessary for acknowledgment in the classical model.
This answers in the negative the open question of whether
a $O(\Delta + \polylogn)$ solution is possible. 
Third, the separation between the classical and dual graph
models occurs with respect to the progress bound,
where the tight bound for the classical model
is {\em logarithmic} with respect to contention,
while in the dual graph model it is {\em linear}---an exponential
gap.
%
%Third, while the upper bounds in the classical model derive from the $25$-year old {\em Decay} strategy of~\cite{BGI87}, the upper bounds in the dual graph model are new.
%
Finally, in addition to the results described
in Figure~\ref{fig:results}, we also prove the following
additional separation between the two models:
in the dual graph model, the gap in progress
between distributed and centralized local
broadcast algorithms is (at least) linear in the
maximum contention $\Delta'$, whereas no such gap exists in the classical
model.

Before starting the technical sections, we remark that due to space considerations, the full proofs are omitted from the conference version and can be found in~\cite{GHLN12}.

%%%%%%%%%%%%%%%%%%%%%%%%%%%%%%%%%%%%%%%%%%%%%%%%%%%%%%%%%%%%%%%%%%%%%%%%%%%%%%%%%%%%%%%%%%%%%%%%%%%%%%%%%%%%%%
\section{Model}
\label{sec:model}
To study the local broadcast problem in synchronous multi-hop radio networks, we use two models, namely the \emph{classical radio network model} (also known as the radio network model) and the \emph{dual graph model}. The former model assumes that all connections in the network are reliable and it has been extensively studied since 1980s~\cite{CK85, BGI87, ABLP91, CGR00, CGGPR00, CGOR00, CMS01, CCMPS01, CMS04, GPX05, KLN09, KLN09, KKKL10}. On the other hand, the latter model is a more general model, introduced more recently in 2009~\cite{KLN09, KLN-DISC-09, KLN09BA}, which includes the possibility of unreliable edges. Since the former model is simply a special case of the latter, we use dual graph model for explaining the model and the problem statement. However, in places where we want to emphasize on a result in the classical model, we focus on the classical model and explain how the result specializes for this specific case.

In the dual graph model, radio networks have some reliable and 
potentially some unreliable links. Fix some $n\geq 1$. 
We define a network $(G,G')$
to consist of two undirected graphs, $G=(V,E)$
and $G'=(V,E')$,
where $V$ is a set of $n$ wireless nodes
and $E \subseteq E'$, where intuitively set $E$ is the set of reliable edges while $E'$ is the set of all edges, both reliable and unreliable. In the classical radio network model, there is no unreliable edge and thus, we simply have $G = G'$, i.e., $E = E'$.

We define an algorithm ${\cal A}$
to be a collection of $n$ randomized processes,
described by probabilistic automata.
An execution of ${\cal A}$ in network $(G,G')$
proceeds as follows:
first, we fix a bijection
$proc$ from $V$ to ${\cal A}$.
%from processes of ${\cal A}$ to $V$.
This bijection assigns
processes to graph nodes.
We assume this bijection is defined by
an adversary and is not known to the processes.
We do not, however, assume that the definition of $(G,G')$
is unknown to the processes (in many real world settings it
is reasonable to assume that devices can make some
assumptions about the structure of their network).
In this study, to strengthen our results,
our upper bounds make no assumptions
about $(G,G')$ beyond bounds on maximum contention and polynomial bounds on size of the network,
while our lower bounds allow full knowledge of the 
network graph.

An execution proceeds in synchronous rounds $1,2,...$,
with all processes starting in the first round.
At the beginning of each round $r$, every process
$proc(u), u\in V$
first receives inputs (if any) from the environment.
It then decides whether or not to transmit a message
and which message to send.
Next, the adversary chooses
a {\em reach set} that consists of $E$
and some subset, potentially empty,
of edges in $E'-E$. Note that in the classical model, set $E' - E$ is empty and therefore, the reach set is already determined. %\footnote{\textcolor{red}{Reminder: Ask Bernhard. We might consider dropping this footnote, as it's too long and not very informatory for the reader.  \\} While this behavior might seem to constrain the adversary to choose reliable edges to be symmetric this is in fact without loss of generality. To see this assume that there are two nodes $u,v$ with an unreliable edge between them. If both nodes send a message at the same time or both nodes listen at the same time it does not make a different whether any of the edges $\{u,v\},(u,v)$ or $(v,u)$ are in the reach set. If only one of them sends there is no difference between adding the undirected edge or adding the directed edge from the sender to the receiver. It is thus always possible to achieve the same message deliveries using only undirected edges.}
%In practice, however, this
%restriction has no noticeable effect. In more detail, the only way to learn about the reliability of a directed link $(u,v)$ is for $proc(u)$
%to broadcast and $proc(v)$ to receive 
%(as broadcasters receive only their own messages). 
%Therefore, even if the adversary could
%specify differing reliability on $(u, v)$ versus $(v, u)$, 
%only the reliability of one of these directions could be assessed in any given
%round.}
%
This set describes the links
that will behave reliably in this round. We assume that the adversary has full knowledge of the state of the network while choosing this reach set.
For a process $v$, let $B_{v,r}$ be the set all graph nodes $u$
such that, $proc(u)$ broadcasts in $r$ and $\{u,v\}$ is in the
reach set for this round.
What $proc(v)$ receives in this round \shortOnly{is determined as follows.}\fullOnly{depends
on the size of $B_{v,r}$, the messages
sent by processes assigned to nodes in $B_{v,r}$, 
and $proc(v)$'s behavior.}
If $proc(v)$ broadcasts in $r$, then it receives
only its own message. 
If $proc(v)$ does not broadcast, there are two cases:
(1) if $|B_{v,r}| = 0$ or $|B_{v,r}| > 1$, then $proc(v)$ 
receives $\bot$ (indicating {\em silence}); 
(2) if $|B_{v,r}| = 1$, then $proc(v)$ receives the message sent by
$proc(u)$, where $u$ is the single
node in $B_{v,r}$.
That is, we assume processes cannot send and receive simultaneously, 
and also, there is no collision detection in this model. However, to strengthen our results, we note that our lower bound results hold even in the model with collision detection, i.e., where process $v$ receives a special collision indicator message $\top$ in case $|B_{v,r}| > 1$. After processes receive their messages, they generate outputs (if any) to pass back to the environment.

\paragraph{Distributed vs. Centralized Algorithms:}
%\fullOnly{\paragraph{Distributed vs. Centralized Algorithms.}}\shortOnly{{\bf Distributed vs. Centralized Algorithms.}}
The model defined above describes distributed algorithms in a radio network setting. To strengthen our results, in some of our lower bounds we 
consider the stronger model of {\em centralized} algorithms.
%, allowing us, for example, to determine when a bound is due to the structure of the network itself (e.g., it holds for centralized algorithms) and when it is due to the lack of knowledge of distributed processes (e.g., it holds for distributed but not centralized algorithms).
%
We formally define a centralized algorithm to be defined
the same as the distributed algorithms above,
but with the following two modifications: (1) the processes
are given $proc$ at the beginning of the execution;
and (2) the processes can make use of the current state and inputs
of {\em all} processes in the network when making decisions
about their behavior.

%\paragraph{Classical vs. Dual Graph Model.}
%\fullOnly{\paragraph{Classical vs. Dual Graph Model.}}\shortOnly{{\bf Classical vs. Dual Graph Model.}}
%When $G = G'$, the behavior of our network model is the same as the standard classical model, which assumes no unreliable edges, and which has been extensively studied since the early 1990s~\cite{BGI87}-\cite{GPX05},~\cite{KLN09, KKKL10, LRS11}. When $G \neq G'$, the behavior is that of the dual graph model which was introduced more recently, in 2009~\cite{KLN09BA}. We study bounds for both models in this paper. \fullOnly{Formally, when we  say a bound holds for the {\em classical model}, we mean it holds in our model under the assumption $G=G'$. When we say a bound holds for the {\em dual graph model}, or do not specify, we mean it holds even when $G \neq G'$.}

\paragraph{Notation \& Assumptions:}
%\fullOnly{\paragraph{Notation \& Assumptions.}}\shortOnly{{\bf Notation \& Assumptions.}}
\fullOnly{The following notation and assumptions will simplify
the results to follow.}
For each $u\in V$, the notations
$N_G(u)$ and $N_{G'}(u)$ describe, respectively, the
neighbors of $u$ in $G$ and $G'$. Also, we define $N^{+}_{G}(u)= N_G(u) \cup \{u\}$ and $N^{+}_{G'}(u)= N_{G'}(u) \cup \{u\}$.
For any algorithm ${\cal A}$, 
we assume that each process ${\cal A}$
has a unique identifier. To simplify notation, we assume the identifiers are from $\{1,...,n\}$. We remark that our lower bounds hold even with such strong identifiers, whereas for the upper bounds, we just need the identifiers of different processes to be different.  
Let $id(u), u\in V$ describe the id of process $proc(u)$.
For simplicity,
throughout this paper we often use
the notation {\em process $u$}, or sometimes just $u$,
for some $u\in V$,
to refer to $proc(u)$ in the execution in question.
Similarly, we sometimes use {\em process $i$},
or sometimes just $i$,
for some $i\in\{1,...,n\}$,
to refer to the process with id $i$. We sometimes use the notation $[i,i']$, for integers $i' \geq i$, 
to indicate the sequence $\{i,...,i'\}$,
and the notation $[i]$ for integer $i$ to indicate $[1,i]$.
Throughout, we use the
the notation {\em w.h.p.} ({\em with high probability}) to indicate
a probability at least $1-\frac{1}{n}$. Also, unless specified, all logarithms are natural log. Moreover, we ignore the integral part signs whenever it is clear that omitting them does not effect the calculations more than a change in constants. 
 
%\paragraph{Geographic Network:}
%To strengthen our lower bound we sometimes constrain
%the networks we consider to better capture real world topologies.
%Fix some positive constant $r$.
%We say a network is {\em geographic} 
%if and only if the following holds with respect to this constant:
%We can assign each process $u\in V$ a location
%on the two-dimensional plane,
%such that $G$ contains every pair of nodes
%within distance $1$, and $G'$ contains no pairs of
%nodes at distance greater than $1+r$.
%We assume that processes {\em do not} know their positions.
%Notice that this definition generalizes constraints,
%such as the {\em unit disk graph} model,
%commonly used to describe wireless networks.
%

%%%%%%%%%%%%%%%%%%%%%%%%%%%%%%%%%%%%%%%%%%%%%%%%%%%%%%%%%%%%%%%%%%%%%%%%%%%%%%%%%%%%%%%%%%%%%%%%%%%%%%%%%%%%%%
\section{Problem}
\label{sec:problem}
%\fullOnly{
%Following the lead of recent work ~\cite{KLN09}, 
%we capture the challenge of contention management
%in radio networks with the {\em local broadcast problem}.\footnote{In
%previous work, solving the local broadcast problem is described
%as {\em implementing an abstract MAC layer}.}
%At a high-level, each process of a local broadcast algorithm
%is responsible for broadcasting messages to its neighbors
%in $G$. These messages are received as input from the environment,
%which waits until a process $u$ finishes sending
%its current message before potentially sending it a new message.
%
%
%As argued in ~\cite{KLN09}, this local broadcast
%problem captures the basic functionality of radio network
%contention management. Using a local broadcast solution
%as a base layer, one can build on top of this layer
%more advanced functionality.
%The correctness and time bounds proved for this base
%layer can be used to imply correctness
%and time bounds for the layers built above.
%To date, this strategy has been used to solve and analyze
%the following radio network problems:
%neighbor discover~\cite{CLVW09} 
%(which can then be used to enable unicast communication),
%leader election~\cite{KLN09, LRS11},
%global broadcast~\cite{KKKL10}, global multi-message
%broadcast~\cite{KLN09, KKKL10}, and clustering~\cite{LRS11} 
%(i.e., maximal independent set construction). 

\fullOnly{\paragraph{Preliminaries:}}
Our first step in formalizing the local broadcast
problem is to fix the input/output interface between the \emph{local broadcast module} (automaton) of a process and the higher layers at that process. In this interface, there are three actions as follows: (1) $bcast(m)_v$, an input action that provides the local broadcast module at process $v$ with message $m$ that has to be broadcast over $v$'s local neighborhood, (2) $ack(m)_v$, an output action that the local broadcast module at $v$ performs to inform the higher layer that the message $m$ was delivered to all neighbors of $v$ successfully, (3) $rcv(m)_u$, an output action that local broadcast module at $u$ performs to transfer the message $m$, received through the radio channel, to higher layers. To simplify definitions going forward, we assume w.l.o.g. that
every $bcast(m)$ input in a given execution is for a unique $m$.
We also need to restrict the behavior of the environment
to generate $bcast$ inputs in a {\em well-formed} manner,
which we define as strict alternation between $bcast$
inputs and corresponding $ack$ outputs at each process.
In more detail, for every execution and every process $u$,
the environment generates a $bcast(m)_u$ input only 
under two conditions: (1) it is the first input to $u$
in the execution; or (2) the last input or non-$rcv$ output action
at $u$ was an $ack$.

\fullOnly{\paragraph{Local Broadcast Algorithm:}}
We say an algorithm {\em solves the local
broadcast problem} if and only if 
in every execution, we have the following three properties:
(1) for every process $u$, for each $bcast(m)_u$ input, $u$ eventually
responds with a single $ack(m)_u$ output, and
these are the only $ack$ outputs generated by $u$;
(2) for each process $v$, for each message $m$, $v$ outputs $rcv(m)_v$ at most once and if $v$ generates a $rcv(m)_v$ output in round
$r$, then there is a neighbor $u \in N_{G'}(v)$ such 
that following conditions hold: $u$ received a $bcast(m)_u$ input before round $r$ and has not output $ack(m)_u$ before round $r$
(3) for each process $u$, if $u$ receives $bcast(m)_u$ in round
$r$ and respond with $ack(m)_u$ in round
$r' \geq r$, then w.h.p.: $\forall v\in N_G(u)$,
$v$ generates output $rcv(m)_v$ %(i.e., receives $m$)
within the round interval $[r,r']$.
We call an algorithm that solves the local broadcast
problem a {\em local broadcast algorithm}.

\paragraph{Time Bounds:}
We measure the performance of a local broadcast algorithm
with respect to the \fullOnly{three}\shortOnly{two} bounds first formalized
in~\cite{KLN09}: {\em acknowledgment} (the worst case bound
on the time between a $bcast(m)_u$ and the corresponding $ack(m)_u$),\fullOnly{{\em receive } (the worst case bound on the time between
a $bcast(m)_v$ input and a $rcv(m)_u$ output
for all $u\in N_{G}(v)$),}
and {\em progress} (informally speaking the worst case bound on the time 
for a process to receive at least one message %, be it from a $G$ or $G'$ neighbor,
when it has one or more $G$ neighbors with messages to send).
The first \fullOnly{two bounds represent}\shortOnly{bound represents} standard ways of measuring
the performance of local communication. The progress bound is crucial for obtaining tight performance
bounds in certain classes of applications. See \cite{KLN09, KKKL10} for examples of places where progress bound proves crucial explicitly. Also, \cite{BGI87, CGR00, CGGPR00, CGOR00, CMS01,  GPX05} use the progress bound implicitly throughout their analysis.

In more detail, a local broadcast algorithm has
\fullOnly{three}\shortOnly{two} {\em delay functions} which describe
these delay bounds as a function of the relevant contention: 
$f_{ack}$, \fullOnly{$f_{rcv}$, }and $f_{prog}$, respectively. 
In other words, every local broadcast algorithm can be characterized
by these \fullOnly{three} \shortOnly{two} functions which must satisfy
properties we define below.
Before getting to these properties, however,
we first present a few helper definitions that we use
to describe local contention during a given round interval.
The following are defined with respect to a fixed execution. 
(1) We say a process $u$ is {\em active} in round $r$,
or, alternatively, {\em active with $m$},
iff it received a $bcast(m)_u$ output in
a round $\leq r$ and it has not yet generated
an $ack(m)_u$ output in response. We furthermore
call a message $m$ active in round $r$ if there is a
process that is active with it in round $r$.  
(2) For process $u$ and round $r$, contention $c(u,r)$ equals
the number of active $G'$ neighbors of $u$ in $r$.
Similarly, for every $r' \geq r$, 
$c(u,r,r') = max_{r''\in[r,r']}\{c(u,r'')\}$. 
(3) For process $v$ and rounds $r' \geq r$, 
$c'(v,r,r') = max_{u\in N_G(v)}\{c(u,r,r')\}$. 
We can now formalize the properties our delay
functions, specified for a local broadcast algorithm, must satisfy
for any execution:
\begin{enumerate}
\fullOnly{\item {\em Receive bound:} Suppose that $v$ receives a $bcast(m)_v$
  input in round $r$ and $u\in N_{G'}(v)$ generates $rcv(m)_v$ in $r' \geq r$. Then with high probability we have $r' - r \leq f_{rcv}(c(u,r,r'))$.}

\item {\em Acknowledgment bound:} Suppose process $v$ receives a $bcast(m)_v$	input in round $r$. Then, if $r' \geq r$ is the round in which process $v$ generates corresponding output $ack(m)_v$,     
      then with high probability we have $r' - r \leq f_{ack}(c'(v,r,r'))$.

 \item {\em Progress bound:} For any pair of rounds $r$ and $r' \geq r$, and process $u$, if $r' - r > f_{prog}(c(u,r,r'))$ and there exists a neighbor $v\in N_G(u)$ that is active throughout the entire interval $[r,r']$, then with high probability, $u$ generates a $rcv(m)_u$ output in a round $r'' \leq r'$ for a message $m$ that was active at some round within $[r,r']$.

\end{enumerate}

We use notation $\Delta^\prime$ (or $\Delta$ for the classical model) to denote the maximum contention over all processes.\footnote{Note that since the maximum degree in the graph is an upper bound on the maximum contention, this notation is consistent with prior work, see e.g. ~\cite{KLN09, KKKL10, KKLMP11}.}  In our upper bound results, we assume that processes are provided with upper bounds on contention that are within a constant factor of $\Delta'$ (or $\Delta$ for the classical model). Also, for the sake of concision,
in the results that follow, 
we sometimes use the terminology ``{\em has an acknowledgment bound of}"
(resp.\fullOnly{{\em receive bound} and} {\em progress bound})
to indicate ``{\em specifies the delay function $f_{ack}$}" 
(resp. \fullOnly{$f_{rcv}$ and}$f_{prog}$).
For example,
instead of saying ``the algorithm specifies
delay function $f_{ack}(k) = O(k)$,'' we might
instead say ``the algorithm has an acknowledgment
bound of $O(k)$.''

\paragraph{Simplified One-Shot Setting for Lower Bounds:}% Variant, Configurations, and Deliverability.}
The local broadcast problem as just described assumes that processes can keep 
receiving messages as input forever and in an arbitrary asynchronous way.
This describes the practical reality of contention management, which
is an on going process. All our algorithms work in this general setting. 
For our lower bounds, we use a setting in which we restrict the environment
to only issue broadcast requests at the beginning of round one. We call 
this the {\em one-shot setting}.  \fullOnly{Note that this restriction only 
strengthens the lower bounds and it furthermore simplifies the notation.} Also, in most of our lower bounds, we consider, $G$ and $G'$ to be bipartite graphs, where nodes of one part are called \emph{senders} and they receive broadcast inputs, and nodes of the other part are called \emph{receivers}, and each have a sender neighbor. In this setting, when referring to contention $c(u)$, we furthermore mean $c(u,1)$. Note that in this setting, for any $r, r'$, $c(u,[r,r'])$ is less than or equal to $c(u,1)$. The same holds for $c'(u)$. Also, in these bipartite networks, the maximum $G'$-degree (or $G$-degree in the classical model) of the receiver nodes provides an upper bound on the maximum contention $\Delta'$ (or $\Delta$ in the classical model). When talking about these networks, and when it is clear from the context, we sometimes use the phrase {\em maximum receiver degree} instead of the maximum contention.

\section{Related Work}

\fullOnly{{\bf Single-Hop Networks}: The $k$-selection problem is the restricted case of the local broadcast problem for single-hop networks, in classical model. This problem is defined as follows. The network is a clique of size $n$, and $k$ arbitrary processes are active with messages. The problem is for all of these active processes to deliver their messages to all the nodes in the network. This problem received a vast range of attention throughout 70's and 80's, and under different names, see \emph{e.g.}  \cite{TM77}- \cite{GW85}. For this problem, Tsybakov and Mikhailov \cite{TM77}, Capetanakis \cite{C79a, C79b}, and Hayes \cite{H79}, (independently) presented deterministic tree algorithms with time complexity of $O(k + k \log (\frac{n}{k}))$ rounds. Komlos and Greenberg \cite{KG85} showed if processes know the value of $k$, there exists algorithms that work with the same time complexity in networks that do not provide any collision detection mechanism. Greenberg and Winograd \cite{GW85} showed a lower bound of $\Omega(\frac{k \log n}{\log k})$ for time complexity of deterministic solutions of this problem in the case of networks with collision detection. %Later, Clementei, Monti and Silvestri \cite{CMS01} showed a lower bound of $\Omega(k \log(\frac{n}{k}))$ on the time of deterministic algorithms for the time till even one active node succeeds in broadcasting its message, in networks with no collision detection. 

On the other hand, Tsybakov and Mikhailov \cite{TM77}, and Massey \cite{M80}, and Greenberg and Lander \cite{GL83} present randomized algorithms that solve this problem in expected time of $O(k)$ rounds. One can see that with simple modifications, these algorithms yield high-probability randomized algorithms that have time complexity of $O(k)+ polylog(n)$ rounds.\\}

\fullOnly{{\bf Multi-Hop Networks}:} Chlamatac and Kutten~\cite{CK85} were the first to introduce the classical radio network model. Bar-Yehuda et al. \cite{BGI87} studied the theoretical problem of local broadcast in synchronized multi-hop radio networks as a submodule for the broader goal of global broadcast. For this, they introduced {\em Decay} procedure, a randomized distributed procedure that solves the local broadcast problem. Since then, this procedure has been the standard method for resolving contention in wireless networks (see \emph{e.g.}~ \cite{GPX05, KLN09, KKKL10, KKLMP11}). In this paper, we prove that a slightly modified version of Decay protocol achieves optimal progress and acknowledgment bounds in both the classical radio network model and the dual graph model. A summary of these time bounds is presented in Figure~\ref{fig:results}.

Deterministic solutions to the local broadcast problem are typically based on combinatorial objects called \emph{Selective Families}, see \emph{e.g.} \cite{CGGPR00}-\cite{CMS04}. Clementi et al.  \cite{CMS01} construct $(n, k)$-selective families of size $O(k \log n)$ (\cite[Theorem 1.3]{CMS01}) and show that this bound is tight for these selective families (\cite[Theorem 1.4]{CMS01}). Using these selective families, one can get local broadcast algorithms that have progress bound of $O(\Delta \log n)$, in the classical model. These families do not provide any local broadcast algorithm in the dual graph model. Also, in the same paper, the authors construct $(n,k)$-strongly-selective families of size $O(k^2 \log n)$ (\cite[Theorem 1.5]{CMS01}). They also show (in \cite[Theorem 1.6]{CMS01}) that this bound is also, in principle, tight for selective families when $k \leq \sqrt{2n}-1$. Using these strongly selective families, one can get local broadcast algorithms with acknowledgment bound of $O(\Delta^2 \log n)$ in the classical model and also, with acknowledgment bound of $f_{ack}(k)=O((\Delta^\prime)^{2} \log n)$ in the dual graph model. As can be seen from our results (summarized in Figure~\ref{fig:results}), all three of the above time bounds are far from the optimal bounds of the local broadcast problem. This shows that when randomized solutions are admissible, solutions based on these notions of selective families are not optimal.

In \cite{CCMPS01}, Clementi et al. introduce a new type of selective families called Ad-Hoc Selective Families which provide new solutions for the local broadcast problem, if we assume that processes know the network. Clementi et al. show in \cite[Theorem 1]{CCMPS01} that for any given collection $\mathcal{F}$ of subsets of set $[n]$, each with size in range $[\Delta_{min}, \Delta_{max}]$, there exists an ad-hoc selective family of size $O((1+\log(\Delta_{max}/\Delta_{min})) \cdot \log |F|)$. This, under the assumption of processes knowing the network, translates to a deterministic local broadcast algorithm with progress bound of $O(\log\Delta \, \log n)$, in the classical model. This family do not yield any broadcast algorithms for the dual graph model. Also, in \cite{CMS04}, Clementi et al. show that for any given collection $\mathcal{F}$ of subsets of set $[n]$, each of size at most $\Delta$, there exists a Strongly-Selective version of Ad-Hoc Selective Families that has size $O(\Delta \log |F|)$ (without using the name ad hoc). This result shows that, again under the assumption of knowledge of the network, there exists a deterministic local broadcast algorithms with acknowledgment bounds of $O(\Delta \log n)$ and $O(\Delta' \log n)$, respectively in the classical and dual graph models. Our lower bounds for the classical model show that both of the above upper bounds on the size of these objects are tight.

%As two other related works, Alon et al. \cite{ABLP91} showed a lower bound of $\log^2 n$ on the time for global broadcast in networks with radius of two. In Section 6, we remark and explain that using the same methods, one can get a lower bound of $\log (\Delta) \log n$ on the progress time in the classical model. Also, in order to get the aforementioned optimal local broadcast algorithm, we use a local broadcast protocol that is similar to the Harmonic Broadcast Algorithm in \cite{KLNOR10} for global broadcast in dual graph networks. \footnote{\mohsen{These two related works seem to be a bit out of context. I am not sure how to (or where is) the best place to mention them.}}
\section{Upper Bounds for Both Classical and Dual Graph Models}\label{sec:upper}
In this section, we show that by slight modifications to Decay protocol, we can achieve upper bounds that match the lower bounds that we present in the next sections. \shortOnly{Due to space considerations, the details of the related algorithms are omitted from the conference version and can be found in~\cite{GHLN12}. 

\begin{theorem} In the classical model, there exists a distributed local broadcast algorithm that gives acknowledgment bound of $f_{ack}(k) = O(\Delta \log n)$ and progress bound of $f_{prog}(k) = O(\log \Delta \log n)$. \end{theorem}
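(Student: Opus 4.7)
The plan is to describe and analyze a slightly modified Decay protocol that interleaves two transmission schedules. An active process, in round $r$, transmits its current message with probability $2^{-i}$ for $i = r \bmod (\lceil \log \Delta \rceil + 1)$ when $r$ is odd (a \emph{cycling round}), and with probability $1/\Delta$ when $r$ is even (a \emph{fixed round}); all processes share the round counter and thus agree on the mode of each round.

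For the progress bound I would fix a receiver $u$ and an interval $[r,r']$ throughout which $u$ has an active $G$-neighbor, and partition the cycling rounds in $[r,r']$ into consecutive \emph{Decay cycles} of $\lceil \log \Delta \rceil + 1$ rounds each. In every Decay cycle some round uses a transmission probability $p = 2^{-i}$ lying in $[1/(2c(u,r'')),\, 1/c(u,r''))$ for the instantaneous contention $c(u,r'')$ at $u$ in that round $r''$. Since $c(u,r'') \geq 1$, the expected number of $u$'s neighbors that transmit in this round is $\Theta(1)$, and a standard computation shows that exactly one transmits with probability bounded below by a universal constant, so $u$ receives some message. Hence each cycle succeeds with probability $\Omega(1)$, and a Chernoff argument over the $\Theta(\log n)$ cycles contained in $[r,r']$ delivers success with high probability. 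Because cycling rounds are half of all rounds, this corresponds to $O(\log \Delta \log n)$ total rounds, matching $f_{prog}$.

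For the acknowledgment bound I would fix a sender $u$ with active message $m$ and a specific neighbor $v \in N_G(u)$, and restrict attention to the fixed rounds. In each such round $u$ transmits with probability $1/\Delta$, while each of the $c(v)-1 \leq \Delta-1$ other active neighbors of $v$ also transmits independently with probability $1/\Delta$; the probability that $u$ is the unique transmitter at $v$ is therefore at least
\[
\frac{1}{\Delta}\Bigl(1-\frac{1}{\Delta}\Bigr)^{\Delta-1} \;\geq\; \frac{1}{e\Delta}.
\]
Over $\Theta(\Delta \log n)$ fixed rounds the probability that $u$ fails to deliver $m$ to $v$ is therefore at most $(1-1/(e\Delta))^{\Theta(\Delta \log n)} \leq n^{-c}$ for any prescribed constant $c$. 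A union bound over the $|N_G(u)| \leq \Delta \leq n$ neighbors of $u$ yields delivery to every neighbor with high probability, so after doubling to account for the interleaving the total is $O(\Delta \log n)$ rounds, matching $f_{ack}$.

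The main technical subtlety I expect to wrestle with is the asynchronous nature of $bcast$ arrivals: the set of active contenders at each receiver may change across $[r,r']$, and the definition of $c'(v,r,r')$ takes the maximum contention over the entire interval rather than a single round. Fortunately, the per-round probability bounds above depend only on the uniform upper bound $\Delta$ on the instantaneous contention, so they continue to hold in every round regardless of which processes happen to be active, and both analyses extend to the general asynchronous setting without change. Coordinating the two interleaved schedules across processes is immediate from the shared round counter guaranteed by the synchronous model.
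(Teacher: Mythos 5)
Your overall architecture---interleaving a Decay-style cycling schedule (for progress) with a uniform-probability schedule (for acknowledgment)---is essentially the paper's interleaving of its SPP (Decay) and SAP protocols, and your acknowledgment half is correct; it is in fact simpler than the paper's SAP, which sweeps through geometrically growing phases, whereas a fixed probability $1/\Delta$ for $\Theta(\Delta\log n)$ rounds suffices for the non-adaptive target $O(\Delta\log n)$. The gap is in the progress half. You assert that every Decay cycle contains a round $r''$ whose probability $2^{-i}$ lies in $[1/(2c(u,r'')),\,1/c(u,r''))$ for the instantaneous contention $c(u,r'')$ \emph{at that same round}. Since the index $i$ is fixed by the global round counter while the contending set can change from round to round (new $bcast$ inputs arrive, and other processes acknowledge, at arbitrary times), this matching is not guaranteed. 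Concretely, if the contention is $\Delta$ during the low-$i$ half of a cycle and drops to $1$ for the high-$i$ half, then every round of that cycle succeeds with probability $O(1/\sqrt{\Delta})$: the early rounds have $p\cdot c\geq\sqrt{\Delta}$ expected transmitters (almost sure collision), and the late rounds have the lone active neighbor transmitting with probability at most $1/(2\sqrt{\Delta})$. Your closing claim that ``the per-round probability bounds depend only on the uniform upper bound $\Delta$'' is true for the acknowledgment computation, where you only need $(1-1/\Delta)^{c-1}\geq 1/e$ for $c\leq\Delta$ (a one-sided condition), but it is false for the progress computation, which needs the two-sided condition $p\cdot c(u,r'')=\Theta(1)$ at the chosen round. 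So the per-cycle $\Omega(1)$ success probability---the heart of the $O(\log\Delta\log n)$ bound---is not established.

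The fix is exactly the synchronization the paper builds into SPP: a process that receives a $bcast$ input waits until the start of the next full Decay cycle (epoch) before it begins transmitting, and acknowledgments are issued only at cycle boundaries (the paper ties them to the end of a SAP epoch, whose length is a multiple of the Decay cycle length). With this alignment the set of transmitting contenders at $u$ is fixed within each cycle, so some round of the cycle has $2^{-i}$ within a factor $2$ of the reciprocal of that cycle's contention and succeeds with constant probability, and your Chernoff argument over the $\Theta(\log n)$ cycles then goes through. You should add this alignment to the protocol description and rerun the cycle-by-cycle analysis with the per-cycle (rather than per-round or interval-maximum) contention.
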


\begin{theorem}There exists a distributed local broadcast algorithm that, in the classical model, gives bounds of $f_{ack}(k)=O(\Delta \log n)$ and $f_{prog}(k)= O(\log \Delta \log n)$, and in the dual graph model, gives bounds of $f_{ack}(k) = O(\Delta' \log n)$ and $f_{prog}(k) = O(\min\{k \log \Delta' \log n, \Delta' \log n\})$.\end{theorem}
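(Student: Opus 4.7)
The plan is to analyze a cycling variant of the Decay protocol in which each active process, in each round, transmits its current message with probability $2^{-i}$ where $i$ advances through $\{0, 1, \dots, \lceil \log D \rceil\}$ and then restarts; here $D$ is the upper bound on contention given to processes ($\Delta$ or $\Delta'$). For acknowledgment, a separate routine pins the probability at $1/D$ and runs for $\Theta(D \log n)$ rounds; for progress, the full cycle is repeated $\Theta(\log n)$ times. This is essentially standard Decay, so the classical bounds follow from the familiar analysis: in the acknowledgment routine, each neighbor $u$ of a sender $v$ receives $v$'s message alone in a single round with probability at least $(1/\Delta)(1-1/\Delta)^{c(u)-1} \geq 1/(e\Delta)$, so $\Theta(\Delta \log n)$ rounds plus a union bound over $\Delta$ neighbors yield acknowledgment w.h.p.; for progress at a receiver $u$ with contention $k$, the level $i^\star$ with $2^{i^\star} \in [k, 2k]$ delivers a message with constant probability per round, and the cycling schedule revisits each level $\Theta(\log n)$ times within $\Theta(\log \Delta \log n)$ rounds.

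For the dual graph model, the acknowledgment bound $O(\Delta' \log n)$ transfers verbatim because every $G$-edge is always in the reach set and $c'(v,\cdot) \leq \Delta'$. The progress analysis is the technical core. The bound $\Delta' \log n$ follows trivially from acknowledgment (by the receive property, once a $G$-neighbor acknowledges, $u$ must already have emitted a $rcv$). For the branch $O(k \log \Delta' \log n)$, fix the receiver $u$ whose active $G'$-neighborhood has size at most $k$ and contains some active $G$-neighbor $v$. At the level $i^\star$ with $2^{i^\star} \in [k, 2k]$, the event that $v$ transmits while no other $G'$-neighbor of $u$ transmits occurs with probability at least $(1/k)(1-1/k)^{k-1} \geq 1/(ek)$. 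When this event holds, the only transmission in $N_{G'}(u)$ is from $v$; since $\{u,v\} \in E$ is always in the reach set, $u$ receives $v$'s message regardless of the adversary's choices. The chosen level is visited $\Theta(k \log n)$ times during $\Theta(k \log \Delta' \log n)$ rounds, so the probability of never witnessing this event is at most $(1 - 1/(ek))^{\Theta(k \log n)} = n^{-\Omega(1)}$.

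The main obstacle is this dual-graph progress analysis, and specifically handling the adversary's ability to manipulate unreliable edges. The critical structural observation is that a powerful adversary only hurts us here: it will never deliver a $G'\setminus G$ transmission when doing so would benefit $u$, and its only sabotage tool is to turn exactly-one-transmitter events into collisions by including an extra $G'\setminus G$ edge in the reach set. Consequently, requiring \emph{no} other $G'$-neighbor of $u$ to transmit (rather than merely requiring $v$ to be the sole transmitter among $G$-neighbors) is what forces the extra factor of $k$ relative to the classical $\log \Delta \log n$ progress bound; this tightening of the single-round success event is exactly the feature that separates the dual-graph upper bound from the classical one, and makes the argument consistent with the $\Omega(\Delta' \log n)$ lower bound promised in the results table.
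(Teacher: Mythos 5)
Your proposal is correct and takes essentially the same approach as the paper: the paper likewise combines a Decay-style acknowledgment routine (fixed-length phases at probabilities $2^{-i}$, totalling $\Theta(\Delta'\log n)$ rounds) with a cycling Decay progress routine, and its dual-graph progress analysis rests on exactly your per-round event --- the designated $G$-neighbor transmits at the level with $2^{i}\approx k$ while all other active $G'$-neighbors of $u$ are silent, which happens with probability at least $1/(ek)$ and forces reception over the always-reliable $G$-edge no matter how the adversary sets the reach set --- repeated over $\Theta(k\log n)$ visits to that level. The one detail you leave implicit that the paper makes explicit is how the two routines coexist as a single algorithm: the paper interleaves them on odd and even rounds with globally synchronized epochs, so that neither routine's transmissions enter the other's contention calculation; you should state some such combination rule (simultaneous execution would still work up to constants here, but only after redoing the silence bound with both routines' transmission probabilities).
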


\begin{theorem} In the dual graph model, there exists a distributed local broadcast algorithm that gives acknowledgment bound of $f_{ack}(k) = O(\Delta' \log n)$ and progress bound of $f_{prog}(k) = O(\min\{k \log k \log n, \Delta' \log n\})$. \end{theorem}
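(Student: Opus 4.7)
The plan is to design a dual-graph variant of the Decay procedure and analyze it carefully in the presence of unreliable edges. When a process $u$ receives $bcast(m)_u$, it begins a schedule of phases $j = 0, 1, \ldots, \lceil \log \Delta' \rceil$; in phase $j$, $u$ broadcasts $m$ independently in each round with probability $2^{-j}$ for $R_j = \Theta(2^j \log n)$ rounds, and after the last phase it outputs $ack(m)_u$. The total schedule length is $\sum_j R_j = \Theta(\Delta' \log n)$, which will give the acknowledgment bound directly.

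For the acknowledgment bound, I would fix a sender $u$ and a reliable neighbor $v \in N_G(u)$ and focus on the final phase $j^* = \lceil \log \Delta' \rceil$. In each round of this phase, $u$ broadcasts at rate $1/\Delta'$, and the key step is to lower bound by $\Omega(1/\Delta')$ the probability that $u$ is the unique broadcaster among $v$'s active $G'$-neighbors. Even though other active $G'$-neighbors of $v$ may be in earlier phases of their own schedules and therefore broadcasting at higher rates, the contention at $v$ never exceeds $\Delta'$, so a product bound of the form $\prod_{w}(1-p_w(t)) = \Omega(1)$ should carry the argument---one bounds $\sum_w p_w(t)$ using the contention constraint and then applies the standard inequality. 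A Chernoff bound over the $\Theta(\Delta' \log n)$ rounds of the phase yields reception with probability at least $1 - n^{-2}$, and a union bound over the at most $n$ reliable neighbors of $u$ completes the proof.

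For the progress bound, let $u$, $[r, r']$, and $v^* \in N_G(u)$ active throughout be as in the specification, and write $k = c(u, r, r')$. The $O(\Delta' \log n)$ term in the minimum follows directly from applying the acknowledgment analysis to $v^*$. For the $O(k \log k \log n)$ term, I would isolate the window during which $v^*$ occupies its phase $\log k$, of length $R_{\log k} = \Theta(k \log n)$; this window lies within the first $\sum_{j \leq \log k} R_j = O(k \log n)$ rounds of $v^*$'s schedule. In each round of this window, $v^*$ broadcasts at rate $1/k$, and I would argue that the probability that $v^*$ is the unique broadcaster in $u$'s reach set (i.e., no other $G'$-neighbor of $u$ broadcasts) is $\Omega(1/k)$, using the contention cap $k$. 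A Chernoff-style concentration over the $\Theta(k \log n)$-round window then yields reception with high probability.

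The main technical obstacle is the de-synchronization of active processes' schedules: different $G'$-neighbors of a given node may be in wildly different phases at the same global round, so one cannot directly treat all broadcast probabilities as equal. The proof must carefully bound $\prod_w(1-p_w(t))$ using only the global contention constraint. A further subtlety is the extra $\log k$ factor in the progress bound, which I expect to arise from the worst-case alignment between $[r, r']$ and $v^*$'s schedule: a union bound over the $O(\log k)$ candidate phases $v^*$ could occupy at round $r$ before reaching phase $\log k$, combined with the need to tolerate an arbitrary starting offset, naturally introduces this extra logarithmic factor.
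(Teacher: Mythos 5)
The central step in both of your analyses --- that the contention cap alone bounds $\sum_w p_w(t)$ and hence gives $\prod_w (1-p_w(t)) = \Omega(1)$ --- does not hold, and this is exactly the difficulty your last paragraph names without resolving. The contention $c(u,r,r')\le \Delta'$ (resp.\ $\le k$) bounds the \emph{number} of active $G'$-neighbors, not the sum of their transmission probabilities: since your protocol is asynchronous, many neighbors of $v$ can simultaneously sit in phase $0$ and transmit with constant (or unit) probability, making the product exponentially small or zero. Nor is this a transient event you can average away: each active process injects total probability mass $\sum_j \Theta(2^j\log n)\cdot 2^{-j}=\Theta(\log\Delta'\,\log n)$ over its schedule, so with adversarially staggered $bcast$ arrivals the rounds in which $\sum_w p_w(t)\ge 1$ can cover the \emph{entire} $\Theta(\Delta'\log n)$-round final phase of $u$, and $v$ never hears $u$. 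So the acknowledgment argument fails as stated, and the identical objection applies to the phase-$\log k$ window in your progress argument.

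The paper escapes this in two different ways and needs both. For acknowledgment it uses a \emph{synchronized} protocol (SAP): epochs and phases are globally aligned, so in phase $i=\lfloor\log k\rfloor$ every ready process in $N^{+}_{G'}(u)$ transmits with the \emph{same} probability $2^{-i}\approx 1/k$, and only then is the product bound $(1-1/k)^{k}=\Omega(1)$ legitimate. For progress it keeps an asynchronous protocol (APP, essentially yours with $\log\log\Delta'$ extra phases) but replaces the per-round product bound by an amortized \emph{free/busy} accounting in the style of the Harmonic Broadcast analysis of Kuhn et al.: a round is busy if the total transmission probability of the $O(k)$ competing messages exceeds $1$; the total mass they inject is $O(k\log k\log n)$, so at most that many rounds are busy, and in each remaining free round the target sender is received with probability $\Omega(p_{v^*}(\tau))\ge\Omega(1/(k\log k))$. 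This is also where the extra $\log k$ actually comes from --- the sender's probability over a $\Theta(k\log k\log n)$-round stretch of its geometric schedule can fall to $1/(2k\log k)$ --- not from a union bound over candidate starting phases. Finally, no single copy of the protocol delivers both bounds (the asynchronous schedule alone cannot guarantee $O(\Delta'\log n)$ acknowledgment); the paper interleaves SAP on odd rounds with APP on even rounds and acknowledges when SAP does. To repair your proposal you would need to import both the synchronization for acknowledgment and the free/busy amortization for progress.
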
 
}
%------------------------------------------------------------------------------------------------------------------------------------------------------------------------------------------------------------------------------------------------------------------------------------------------------------------------------------------------------------------------------------------------------------------------------------------------------------------------------
\fullOnly{
We first present three local broadcast algorithms. The first algorithm, the Synchronous Acknowledgment Protocol (SAP), yields a good acknowledgment bound and the other two algorithms, Synchronous Progress Protocol (SPP) and Asynchronous Progress Protocol (APP), achieve good progress bounds. From these two progress protocols, the SPP protocol is exactly the same as the \emph{Decay procedure} in \cite{BGI87}. In that paper, this protocol was designed as a submodule for the global broadcast problem in the classical model. Here, we reanalyze that protocol for the Dual Graph model. Furthermore, the APP protocol is similar to the Harmonic Broadcast Algorithm in \cite{KLNOR10}. In that work, the Harmonic Broadcast Algorithm is introduced and used as a solution to the problem of global broadcast in the dual graph model. We analyze the modified version of this algorithm, which we call the APP protocol, and show that it yields good progress bounds in the dual graph model. Then, we show how to combine the acknowledgment and progress protocols to get both fast acknowledgment and fast progress. Particularly, one can look at the combination of SAP and SPP as an optimized version of the Decay procedure, adjusted to provide tight progress and acknowledgment together.

\subsection{The Synchronous Acknowledgment Protocol (SAP)}\label{subsec:SAP}
In this section, we present the SAP protocol and show that this algorithm has acknowledgment bounds of $O(\Delta' \log n)$ and $O(\Delta \log n)$, respectively, in the dual graph and the classical model. \fullOnly{The reason that we call this algorithm synchronous is that the rounds are divided into contiguous sets named epochs, and the epochs of different processes are synchronized (aligned) with each other.} In the SAP algorithm, each epoch consists of $\Theta(\Delta' \log n)$ rounds. Whenever a process receives a message for transmission, it waits till the start of next epoch. If a process $v$ has received input $bcast(m)_v$ before the start of an epoch and has not output $ack(m)_v$ by that time, we say that in that epoch, process $v$ is \emph{ready with message $m$} or simply \emph{ready}. 

As presented in Algorithm \ref{alg:SAP}, each epoch of SAP consists of $\log \Delta'$ phases as follows. For each $i \in [\log \Delta']$, the $i^{th}$ phase is comprised of $\Theta(2^i \log n)$ rounds where in each such round, each ready process transmits with probability $\frac{1}{2^i}$. After the end of the epoch, each ready process acknowledges its message.

\begin{algorithm}[th]\label{alg:SAP}
\caption{An epoch of SAP in process $v$ when $v$ is ready with message $m$}
\begin{algorithmic}
\footnotesize
\For {$i$ := $1$ to $\log \Delta'$} 
	\For {$j$:= $1$ to $\Theta(2^i \log n)$}
		\State transmit $m$ with probability $\frac{1}{2^i}$
	\EndFor
\EndFor
\State output $ack(m)_v$
\end{algorithmic}
\end{algorithm}

\begin{lemma} \label{lemma:SAP-Dual} The Synchronous Acknowledgment Protocol solves the local broadcast problem in the dual graph model and has acknowledgment time of $O(\Delta' \log n)$.
\end{lemma}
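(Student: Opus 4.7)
\medskip
\noindent\textbf{Proof proposal.} The plan is to check the delay bound and then each of the three correctness conditions from Section~\ref{sec:problem}. For the delay bound, note that a single epoch has length $\sum_{i=1}^{\log \Delta'} \Theta(2^i \log n) = \Theta(\Delta' \log n)$. A process that receives $bcast(m)_v$ in round $r$ becomes ready at the start of the next epoch (a wait of at most one epoch) and outputs $ack(m)_v$ at the end of that epoch, so $r'-r = O(\Delta' \log n)$ regardless of the value of $c'(v,r,r')$. Conditions (1) and (2) of the local broadcast problem follow immediately from the protocol: each $bcast(m)_v$ triggers exactly one $ack(m)_v$ at the end of $v$'s ready epoch, and a $rcv(m)_u$ output is only produced when $u$ actually receives $m$ over the radio channel, which requires a $G'$-neighbor to be ready with $m$ at the time of transmission.

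The substantive part is condition (3): for every ready sender $v$ and every $u \in N_G(v)$, I must show that $u$ generates $rcv(m)_v$ during $v$'s ready epoch with high probability. Fix such a pair $(v,u)$. Because the SAP protocol forces new $bcast$ inputs to wait until the start of the next epoch, the set of $G'$-neighbors of $u$ that are ready is fixed throughout $v$'s ready epoch; let $c$ denote its size, so $c \leq \Delta'$ and $v$ is one of the $c$ processes. Choose $i^{\star} = \max\{1,\lceil \log c \rceil\}$ so that $p := 1/2^{i^{\star}} \in [1/(2c),\,1/c]$ and $i^{\star} \leq \log \Delta'$. In every round of phase $i^{\star}$, the probability that $v$ transmits while every other ready $G'$-neighbor of $u$ and $u$ itself (if ready) stay silent is at least $p(1-p)^{c} \geq \frac{1}{2c}(1-1/c)^{c} = \Omega(1/c)$. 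Since $\{u,v\} \in E$, this edge is always in the reach set, so in such a round $u$ receives $m$. Phase $i^{\star}$ contains $\Theta(2^{i^{\star}} \log n) = \Theta(c \log n)$ rounds, so the probability that $u$ fails to receive $v$'s message during the phase is at most $(1-\Omega(1/c))^{\Theta(c \log n)} \leq e^{-\Omega(\log n)} = n^{-\Omega(1)}$. A union bound over the at most $n^{2}$ sender-receiver pairs yields condition (3) w.h.p.

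The main obstacle I expect to navigate carefully is the book-keeping around which processes count toward the contention at $u$ during the phase. Because $u$ itself may be ready with its own message and therefore transmitting, the ``silence'' factor $(1-p)^{c}$ (rather than $(1-p)^{c-1}$) must include $u$; this is what lets us conclude that $u$ is simultaneously listening when $v$ transmits alone. The other subtlety is that $c$ is defined relative to $v$'s ready epoch, not relative to the interval $[r,r']$ in the problem statement, so I need to observe that the epoch-alignment rule of SAP ensures the ready set is static within any one epoch, which both makes $c$ well-defined and guarantees $c \leq \Delta'$, keeping $i^{\star}$ inside the range $[1,\log \Delta']$ of phases that the protocol actually executes.
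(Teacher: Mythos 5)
Your proposal is correct and follows essentially the same route as the paper's proof: both arguments fix a pair $(v,u)$ with $u \in N_G(v)$, zoom in on the phase whose transmission probability $1/2^{i}$ matches the (epoch-static) contention $k$ at $u$, lower-bound the per-round delivery probability by $\frac{1}{2^{i}}(1-\frac{1}{2^{i}})^{k} = \Omega(1/k)$ over the $\Theta(2^{i}\log n)=\Theta(k\log n)$ rounds of that phase, and finish with a union bound. The only cosmetic differences are your choice of $\lceil \log c\rceil$ versus the paper's $\lfloor \log k\rfloor$ and your union bound over all $n^{2}$ pairs rather than over $N_G(v)$; both are immaterial.
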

\begin{proof} Consider a process $v$ a round $r$ such that $v$ receives an input of $bcast(m)_v$ in round $r$. First, note that process $v$ acknowledges message $m$ by at most two epochs after round $r$, \emph{i.e.}, process $v$ outputs an $ack(m)_v$ by time $r' = r + \Theta(\Delta' \log n)$. Now assume that epoch $\wp$ is the epoch that $v$ becomes ready with $m$. In order to show that SAP solves the local broadcast problem, we claim that by the end of epoch $\wp$, with high probability, $m$ is delivered to all the processes in $\mathcal{N}_{G}(v)$. Consider an arbitrary process $u \in \mathcal{N}_{G}(v)$. To prove this claim, we show that by the end of epoch $\wp$, with high probability, $u$ receives $m$. A Union Bound then completes the proof of the claim.

To show that $u$ receives $m$ by the end of epoch $\wp$, we focus on the processes in $\mathcal{N}^{+}_{G\,'}(u)$. Suppose that $r''$ is the last round of epoch $\wp$ and that the number of ready processes in $\mathcal{N}^{+}_{G'}(u)$ during this epoch is at most $k=c(u, r, r'')$. Now, consider the phase $i = \lfloor \log k \rfloor$ of epoch $\wp$. In each round of this phase, the probability that $u$ receives the message of $v$ is at least $ \frac{1}{2^i} \;(1-\frac{1}{2^i})^{k} \approx \frac{1}{k}\; e^{-\,\frac{k}{k}} = \frac{1}{e\cdot k}$, where the first term of the LHS is the probability of transmission of process $v$ and the other term is the probability that rest of the ready processes in $\mathcal{N}^{+}_{G'}(u)$ remain silent. Now, phase $i$ has $\Theta(2^i \log n) = \Theta(k \log n)$ rounds. Therefore, the probability that $u$ does not receive $m$ in phase $i$ is at most $(1-\frac{1}{e\cdot k})^{\Theta(k \log n)} = e^{-\Theta(\log n)} = (\frac{1}{n})^{\Theta(1)}$
Hence, the probability that $u$ does not receive the message $m$ in epoch $\wp$ is $(\frac{1}{n})^{\Theta(1)}$. This completes the proof. 
\end{proof}

\begin{corollary} \label{crl:SAP-classical}The SAP protocol solves the local broadcast problem in the classical model and has an acknowledgment bound of $O(\Delta \log n)$.
\end{corollary}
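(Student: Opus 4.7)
The plan is to obtain this corollary almost for free by invoking Lemma~\ref{lemma:SAP-Dual} on the degenerate dual graph that corresponds to the classical model. Recall from Section~\ref{sec:model} that the classical radio network model is exactly the special case of the dual graph model in which $E = E'$, so that $G = G'$. In that situation the reach set is deterministically equal to $E$ in every round, and the adversary has no nontrivial choice left to make. All the definitions used in the statement and proof of the previous lemma specialize in the obvious way: $N_{G'}(u) = N_G(u)$ for every process $u$, and therefore the contention $c(u,r,r')$ coincides with the quantity that would have been defined using $G$ alone, so $\Delta' = \Delta$.

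Given this, I would argue as follows. First, observe that SAP as defined in Algorithm~\ref{alg:SAP} makes no reference to the dual graph structure; it only uses the contention parameter that the processes are given as input. In the classical setting we instantiate this parameter as (a constant factor upper bound on) $\Delta$ in place of $\Delta'$. The epoch length then becomes $\Theta(\Delta \log n)$ rounds and the $i$-th phase of each epoch has $\Theta(2^i \log n)$ rounds for $i \in [\log \Delta]$.

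Second, I would apply Lemma~\ref{lemma:SAP-Dual} verbatim to the network $(G,G)$. The lemma guarantees that (i) every $bcast(m)_v$ input is followed by a matching $ack(m)_v$ output within two epochs, i.e., within $O(\Delta \log n)$ rounds, and (ii) with high probability every neighbor $u \in N_{G'}(v) = N_G(v)$ produces $rcv(m)_u$ within that window. Property (ii) is precisely the delivery guarantee required of a local broadcast algorithm, and property (i) is precisely the claimed acknowledgment bound $f_{ack}(k) = O(\Delta \log n)$.

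There is no real obstacle here; the only sanity check I would perform is that the calculation in the body of Lemma~\ref{lemma:SAP-Dual} still goes through when $G = G'$. This is immediate: the phase $i = \lfloor \log k \rfloor$ still contains $\Theta(k \log n)$ rounds in which $v$ transmits alone with probability $\tfrac{1}{2^i}(1 - \tfrac{1}{2^i})^k = \Theta(1/k)$ from the viewpoint of any fixed $u \in N_G(v)$, yielding a failure probability of $n^{-\Theta(1)}$ per receiver and, via a union bound over the at most $n$ receivers, w.h.p.\ success across $N_G(v)$. Hence the corollary follows with no additional work beyond noting the specialization $G = G'$ and $\Delta = \Delta'$.
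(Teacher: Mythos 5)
Your proposal is correct and matches the paper's proof exactly: the paper also obtains this corollary by specializing Lemma~\ref{lemma:SAP-Dual} to the case $G = G'$, under which $\Delta' = \Delta$. The additional sanity checks you include are harmless elaborations of the same one-line argument.
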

\begin{proof} The corollary can be easily inferred from Lemma \ref{lemma:SAP-Dual} by setting $G=G'$.
\end{proof}
%---------------------------------------------------------------------------------------------------------------------------------------------------
%---------------------------------------------------------------------------------------------------------------------------------------------------
%---------------------------------------------------------------------------------------------------------------------------------------------------
\subsection{The Synchronous Progress Protocol (SPP)}\label{subsec:SPP}
In this section, we present and analyze the SPP protocol, which is also known as decay procedure. From Theorem 1 in \cite{BGI87}, it can be inferred that this protocol achieves a progress bound of $O(\log\Delta \, \log n)$ in the classical model. Here, we reanalyze this protocol with a specific focus on its progress bound in the dual graph model. More specifically, we show that this protocol yields a progress bound of $f_{prog}(k) = O(k \log(\Delta') \log n)$ in the dual graph model. 
%Therefore, this algorithm will be used later to achieve a good progress bound. 
%
%The reason behind calling this protocol synchronous is similar to that of SAP, i.e., we divide the rounds of SPP into contiguous sets called epochs and epochs of different processes are synchronized with each other. 
%

Similar to the SAP protocol, the rounds of SPP are divided into contiguous sets called epochs\fullOnly{ and epochs of different processes are synchronized with each other}. The length of each epoch of SPP is $\log \Delta'$ rounds. Similar to the SAP protocol, whenever a process $v$ receives a message $m$ for transmission, by getting input $bcast(m)_v$, it waits till the start of next epoch. Moreover, if input $bcast(m)_v$ happens before the start of an epoch and process $v$ has not outputted $ack(m)_v$ by that time, we say that in that epoch, process $v$ is \emph{ready with message $m$} or simply \emph{ready}. 

As presented in Algorithm \ref{alg:SPP}, in each epoch of SPP and for each round $i \in [\log \Delta']$ of that epoch, each ready process transmits its message with probability $\frac{1}{2^i}$. Each process acknowledges its message $\Theta(\Delta' \log n)$ epochs after it receives the message.

\begin{algorithm}[th]\label{alg:SPP}
\caption{The procedure of SPP in process $v$ when $v$ becomes ready with message $m$}
\begin{algorithmic}
\footnotesize
\For {$j$ := $1$ to $\Theta(\Delta' \log n)$} 
\For {$i$ := $1$ to $\log \Delta'$} \hfill {/*Each turn of this loop is one epoch*/}
\State transmit $m$ with probability $\frac{1}{2^i}$
\EndFor
\EndFor
\State output $ack(m)_v$
\end{algorithmic}
\end{algorithm}

\fullOnly{From the above description, it is clear that the general approach used in the protocols SAP and SPP are similar. In both of these protocols, in each round $r$, each ready process transmits with some probability $p(r)$ and this probability only depends on the protocol and the round number, i.e., the probabilities of transmissions in different ready processes are equal. Also, one can see that in round $r$, a node $u$ has the maximum probability of receiving some message if $c(u, r)$ is around $\frac{1}{p(r)}$. Hence, having rounds with different transmission probabilities is like aiming for nodes that have different levels of contention, i.e., $c(u, r)$. Noting this point, we see that the core difference between the SAP and SPP protocols is as follows. In the SAP, each epoch starts with a phase of rounds all aimed at nodes with smaller contention. The number of rounds in this phase is designed so that all the nodes at that contention level receive all the messages that are under transmission in their $G$-neighborhood. Then, after clearing out one level of contention, SAP goes to the next level, and it continues this procedure till cleaning up the nodes at largest level of contention. On the other hand, SPP is designed so that makes progress on all levels of contention gradually and altogether. That is, in each epoch of SPP, which is much shorter than those in APP, all the levels of contention are aimed for exactly once. 

Now, we show that because of this property, SPP has a good progress bound.

}

\begin{lemma} \label{lem:SPP} The synchronous progress protocol solves the local broadcast problem in the dual graph model and provides progress bound of $f_{prog}(k)=O(k \log (\Delta') \log n)$. \fullOnly{Also, SPP provides receive bound of $f_{rcv}(k)=O(k \log (\Delta')\, \log n)$.}
\end{lemma}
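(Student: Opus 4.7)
The plan is to fix a receiver $u$, rounds $r \le r'$ with an active $G$-neighbor $v$ of $u$ throughout $[r,r']$, and $k = c(u,r,r')$, and to show that whenever $r' - r \ge c \cdot k \log(\Delta') \log n$ for a sufficiently large constant $c$, the receiver $u$ outputs some $rcv(m)_u$ for a currently-active $m$ with probability at least $1 - 1/n$. Since each SPP epoch lasts exactly $\log \Delta'$ rounds, such an interval contains $\Theta(k \log n)$ disjoint epochs, and I will extract one independent ``success'' trial per epoch that fires with probability $\Omega(1/k)$.

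The core single-epoch analysis is as follows. In every epoch, look at the round whose transmission probability is $p = 1/2^{\lceil \log k \rceil} \in [1/(2k),\, 1/k]$. Let $A \subseteq N^+_{G'}(u)$ be the set of processes ready in that round; because $v$ is active and $c(u,r,r') = k$, we have $v \in A$ and $|A| \le k+1$, with the $+1$ accounting for the possibility that $u$ itself is ready. The probability that $v$ transmits while every other process of $A$ stays silent is at least
\[
p (1-p)^{|A|-1} \;\ge\; \frac{1}{2k}\left(1 - \frac{1}{k}\right)^{k} \;=\; \Omega\!\left(\frac{1}{k}\right).
\]
Conditioned on this event, $B_{u,\cdot} = \{v\}$ regardless of which edges of $E' \setminus E$ the adversary includes in the reach set: the reliable edge $\{u,v\} \in E$ forces $v \in B_{u,\cdot}$, and no other process in $N_{G'}(u)$ transmits, so the adversary has no transmitter to inject into $B_{u,\cdot}$. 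Hence $u$ receives $v$'s active message.

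By independence of the transmission coins across epochs, the probability that $u$ fails in every one of the $\Theta(k \log n)$ epochs inside the interval is at most $\left(1 - \Omega(1/k)\right)^{\Theta(k \log n)} = e^{-\Theta(\log n)} \le n^{-\Omega(1)}$, which by tuning the constant can be made at most $1/n^2$, so a union bound over all receivers yields the required high-probability bound on $f_{prog}$. The receive bound $f_{rcv}(k) = O(k \log(\Delta') \log n)$ follows from essentially the same calculation applied to any fixed sender $v$ and any of its $G'$-neighbors $u$, using the same ``good'' round of each epoch as an $\Omega(1/k)$-probability reception event.

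The main obstacle I anticipate is handling the adversary's choice of the reach set. The argument deals with it by the single-transmitter conditioning above: once only $v$ is transmitting among $N^+_{G'}(u)$, unreliable edges in $E' \setminus E$ contribute nothing, because they lead to silent nodes, and the one reliable edge $\{u,v\}$ cannot be removed. A secondary technical point is that the contention measure $c(u,\cdot)$ counts only $G'$-neighbors and not $u$ itself, so one must allow $|A|$ to be as large as $k+1$; this only changes the silence factor from $(1-p)^{k-1}$ to $(1-p)^{k}$ and does not affect the asymptotic $\Omega(1/k)$ per-epoch success probability.
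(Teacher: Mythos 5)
Your proposal is correct and follows essentially the same route as the paper's proof: in each synchronized epoch you isolate the round with transmission probability $\approx 1/k$, lower-bound the probability that the fixed active $G$-neighbor transmits alone among the (at most $k$) active $G'$-neighbors by $\Omega(1/k)$, and multiply over the $\Theta(k\log n)$ epochs in the interval. Your explicit handling of the adversarial reach set (a lone transmitter on a reliable edge cannot be collided by unreliable edges to silent nodes) and of $u$ itself possibly being ready are details the paper leaves implicit, but they do not change the argument.
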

\fullOnly{\begin{proof}
It is clear that in SPP, each message is acknowledged after $\Theta(\Delta' \log n)$ epochs and therefore after $O(\Delta' \log(\Delta') \log n)$ rounds. Similar to the proof of Lemma \ref{lemma:SAP-Dual}, we can easily see that each acknowledged message is delivered to all the $G$-neighbors of its sender. Thus, SPP solves the local broadcast problem.

Now, we first show that SPP has progress time of $f_{prog}(k) = \Theta(k \log (\Delta') \log n)$. Actually, we show something stronger. We show that within the same time bound, $u$ receives the messages of each of its ready $G$-neighbors. For this, suppose that there exists a process $u$ and a round $r$ such that in round $r$, at least one process $w \in \mathcal{N}_{G}(u)$ has a message for transmission such that $u$ has not received it. Also, suppose that the first round after $r$ that $u$ receives the message of $w$ is round $r'$. Such a round exists w.h.p as the SPP solves the broadcast problem. Let $k= c(u, r, r')$, i.e., the total number of processes in $\mathcal{N}_{G\,'}(u)$ that are ready in at least one round in range $[r, r']$. We show that $r' \leq r + \Theta(k \log (\Delta') \log n)$. 

Suppose that $P$ consists of all the epochs starting with the first epoch after round $r$ and ending with the epoch that includes round $r'$. If $P$ has less than $\Theta(k \log \log n)$ epochs, we are done with the proof. On the other hand, assume that $P$ has at least $\Theta(k \log \log n)$ epochs. Let $i = \lfloor \log k \rfloor$. Now, for the $i^{th}$ round of each epoch in $P$, the probability that $u$ receives the message of $w$ in that round is at least 
\[ \frac{1}{2^i} \;(1-\frac{1}{2^i})^{k} \approx \frac{1}{k}\; e^{-\,\frac{k}{k}} = \frac{1}{e\cdot k}\]
Therefore, the probability that $u$ does not receive $w$'s message in the $\Theta(k \log n)$ epochs of $P$ is at most 
\[ (1-\frac{1}{e\cdot k})^{\Theta(k \log n)} = e^{-\Theta(\log n)} = (\frac{1}{n})^{\Theta(1)}\]

To see the second part of the lemma, suppose that process $v$ receives input $bcast(m')_v$ in round $\tau$ and outputs $ack(m')_v$ in round $\tau'$. Let $k' = c'(v, \tau, \tau')$. We argue that all processes in $\mathcal{N}_{G}(v)$ receive $m'$ by round $ \tau''= \tau + \Theta(k' \log (\Delta') \log n)$. Using the above argument, we see that each process $u \in \mathcal{N}_{G}(v)$ receives the message of $v$ in time $O(c(u, r, r') \log (\Delta') \log n)$ where $r$ and $r'$ are defined as above for $u$ and also, we have $r,r' \in [\tau, \tau']$. Moreover, by definition of the $c'(v, \tau, \tau')$, for each $u \in \mathcal{N}_{G}(v)$, we have $c(u, r, r') \leq c'(v, \tau, \tau') = k'$. Thus, all neighbors of $v$ receive $m'$ by time $\tau''$. This completes the proof of the second part.
\end{proof}}

\fullOnly{\begin{lemma} \label{lemma:SPP-classical} The SPP protocol solves the local broadcast problem in the classical model and gives a progress bound of $O(\log(\Delta) \log n)$. 
\end{lemma}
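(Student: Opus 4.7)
The plan is to leverage the classical Decay analysis, which yields a stronger per-epoch success probability in the classical model than in the dual graph model, precisely because the adversary cannot selectively mask receptions at individual neighbors. Correctness of SPP in this setting (i.e., that it solves local broadcast) is immediate from Lemma~\ref{lem:SPP} by specializing to $G' = G$, so only the sharper progress bound needs a new argument.

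For the progress bound, I would fix a process $u$ and rounds $r \leq r'$ with $r' - r > C \log(\Delta) \log n$ for a sufficiently large constant $C$, and assume some $G$-neighbor $v$ of $u$ is active throughout $[r, r']$. Let $k = c(u,r,r') \leq \Delta$. Since each epoch of SPP has length $\log \Delta$, the interval $[r, r']$ contains at least $\Theta(\log n)$ complete epochs. The central technical claim is that in each such epoch, the probability that $u$ generates an $rcv$ output for some currently active neighbor's message is bounded below by a positive constant $c_0$, independent of the contention at that moment; I would then combine these independent events across the $\Theta(\log n)$ epochs.

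To justify the per-epoch claim, consider any epoch $e$ contained in $[r, r']$ and let $k_e$ denote the number of active $G$-neighbors of $u$ at the start of round $i^\star := \max\{1, \lfloor \log k_e \rfloor\}$ of $e$; note $k_e \in [1, \Delta]$ because $v$ is always active. In the classical model the reach set equals $E$ every round, so $u$ receives in round $i^\star$ precisely when exactly one of its ready neighbors transmits, an event with probability
\[ k_e \cdot \frac{1}{2^{i^\star}} \cdot \left(1 - \frac{1}{2^{i^\star}}\right)^{k_e - 1}. \]
By the choice of $i^\star$ we have $2^{i^\star} \in (k_e/2, k_e]$, so this expression is bounded below by an absolute constant $c_0 > 0$ (a routine computation yields $c_0 \geq 1/(2e)$, with the boundary case $k_e=1$ giving probability exactly $1/2$). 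The crucial contrast with the dual graph argument is that here one sums over all possible lone transmitters rather than insisting on a specific one, which boosts the per-epoch success probability from $\Theta(1/k)$ to $\Theta(1)$ and is what ultimately removes the $k$ factor from the delay.

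Putting the ingredients together, the probability that $u$ fails to receive anything across all $\Theta(\log n)$ epochs within $[r, r']$ is at most $(1 - c_0)^{\Theta(\log n)} \leq 1/n^{\Omega(1)}$, establishing the $O(\log \Delta \log n)$ progress bound with high probability. The main subtlety, rather than a genuine obstacle, is that $k_e$ may vary across epochs and even within a single epoch as neighbors acknowledge; but because each epoch sweeps over all transmission probabilities $2^{-1}, 2^{-2}, \ldots, 2^{-\log \Delta}$, there is always a round whose probability is matched to the current contention level, so the per-epoch lower bound continues to apply and the epochs can be analyzed as independent Bernoulli trials.
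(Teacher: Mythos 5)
Your proposal is correct and follows essentially the same route as the paper's proof: correctness is inherited from Lemma~\ref{lem:SPP} by setting $G'=G$, and the progress bound comes from observing that in the round of each epoch whose transmission probability $2^{-i}$ matches the current contention, the probability that \emph{some} active neighbor transmits alone is a constant (the paper writes this as $\frac{c(u)}{2^i}(1-\frac{1}{2^i})^{k}\approx \frac{1}{e}$, i.e.\ exactly your observation that summing over all candidate lone transmitters upgrades $\Theta(1/k)$ to $\Theta(1)$), after which $\Theta(\log n)$ independent epochs give the high-probability bound. The only cosmetic difference is that the paper fixes $i=\lfloor\log k\rfloor$ from the interval-wide contention $k=c(u,r,r')$ rather than choosing the matched round per epoch as you do; both are the same argument at the same level of rigor.
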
 
\begin{proof} This bound can also be inferred from Theorem 1 in \cite{BGI87}. For the sake of completeness, and since analysis are simple and similar to the previous ones, we present the complete version here. 

Similar to Corollary \ref{crl:SAP-classical}, we can easily see that the SPP protocol solves the local broadcast problem in the classical model from the result about the dual graph model by setting by setting $G'=G$ in the Lemma \ref{lem:SPP}. To see the progress time bound, consider process $u$ and suppose that there is a round $r$ in which some process in $\mathcal{N}_G(u)$ has a message for transmission such that process $u$ has not received it so far. Also, let $r'$ be the first round after $r$ that $u$ receives a message. Again, such a round exists since SPP solves the local broadcast problem. Let $k = c(u, r, r')$ and $i = \lfloor \log k \rfloor$. The probability that $u$ receives a new message in $i^{th}$ round of the each epoch after round $r$ is at least $\frac{c(u)}{2^i} \;(1-\frac{1}{2^i})^{k} \approx \; e^{-\,\frac{k}{k}} = \frac{1}{e}$. Therefore, the probability that $r' > r+ \Theta(\log n \, \log(\Delta))$ is at most $(1-\frac{1}{e})^{\Theta(\log n)} = (\frac{1}{n})^{\Theta(1)}$. This completes the proof.
\end{proof}}
%---------------------------------------------------------------------------------------------------------------------------------------------------
%---------------------------------------------------------------------------------------------------------------------------------------------------
%---------------------------------------------------------------------------------------------------------------------------------------------------
\shortOnly{

{\noindent\bf The Asynchronous Progress Protocol (APP):}}\fullOnly{\subsection{The Asynchronous Progress Protocol (APP)}\label{subsec:APP}}
In this section, we present and study the APP protocol and show that it yields progress bound of $f_{prog}(k) = O(k \log(k) \log n)$ in dual graph model. Note that this is better than the bound achieved in SPP. However, in comparison to the bound achieved by SPP in the classical model, APP does not guarantee a good progress time. This protocol is, in principle, similar to the Harmonic Broadcast Algorithm in \cite{KLNOR10} that is used for global broadcast in the dual graph model.

Similar to the SAP and SPP protocols, the rounds of APP are divided into epochs as well. However, in contrast to those two protocols, and as can be inferred from the name, the epochs of APP in different processes are not synchronized with each other. Also, in APP, a process $v$ becomes \emph{ready} immediately after it receives the $bcast(m)_v$ input. 
\fullOnly{

}
Whenever a process becomes ready, it starts an epoch as follows. This epoch consists of $\log \Delta' + \log \log \Delta'$ phases. For each $i \in [\log \Delta'+ \log \log \Delta']$, the $i^{th}$ phase is comprised of $\Theta(2^i \log n)$ rounds where in each such round, each ready process transmits with probability $\frac{1}{2^i}$. Also, the process outputs $ack(m)_v$ at the end of this epoch.

\begin{algorithm}[th]\label{alg:APP}
\caption{An epoch of APP in process $v$ when $v$ is ready with message $m$}
\begin{algorithmic}
\For {$i$ := $1$ to $\log \Delta' + \log \log \Delta'$ }
	\For {$j$:= $1$ to $\Theta(2^i \log n)$} 
		\State transmit $m$ with probability $\frac{1}{2^i}$
	\EndFor
\EndFor
\State output $ack(m)_v$
\end{algorithmic}
\end{algorithm}

\begin{lemma} \label{lem:APP} The asynchronous progress protocol solves the local broadcast problem in the dual graph model and has progress time of $f_{prog}(k)= O(k \log (k) \log n)$. \fullOnly{Also, APP achieves receive bound of $f_{rcv}(k)=O(k \log (k) \log n)$.}
\end{lemma}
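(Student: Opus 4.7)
The proof splits into two parts: (i) correctness (that APP solves the local broadcast problem in the dual graph model) and (ii) the time bounds $f_{prog}(k) = f_{rcv}(k) = O(k \log k \log n)$.

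For (i), I would adapt the argument of Lemma~\ref{lemma:SAP-Dual}. For each $bcast(m)_v$ at round $r_0$, each $u \in N_G(v)$, and $k = c(u, r_0, r_0 + T_{\text{epoch}})$ with $T_{\text{epoch}} = \Theta(\Delta' \log \Delta' \log n)$, I would focus on phase $i^* = \lceil \log k \rceil$ of $v$'s epoch, which has $\Theta(k \log n)$ rounds at transmission probability $\approx 1/k$. In each such round, the probability that $v$ transmits and all other active $G'$-neighbors of $u$ are silent is at least $(1/k)(1 - 1/k)^{k-1} \geq 1/(ek)$ in the worst case (others all in phase $i^*$; other configurations only improve silencing). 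Summing over the $\Theta(k \log n)$ rounds gives delivery to $u$ w.h.p., and a union bound over $N_G(v)$ yields acknowledgment within one epoch.

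For (ii), fix $[r, r']$ of length $T = Ck \log k \log n$, $v \in N_G(u)$ active throughout, and $k = c(u, r, r')$. The plan has two steps. First, I would locate a subinterval $J \subseteq [r, r']$ of $\Theta(k \log n)$ consecutive rounds in which $v$'s transmission probability is $\Theta(1/k)$. Since $T = O(T_{\text{epoch}})$, $v$ sees $O(1)$ epoch starts in $[r, r']$ and spends at most $O(k \log n)$ rounds in low phases with $p_v > 2/k$. A case analysis on $v$'s phase at round $r$ (epoch is young, or $v$ is already past phase $\lceil \log k \rceil$ in a still-ongoing epoch since $T \leq T_{\text{epoch}}$) then produces the contiguous $J$, possibly using a slightly higher phase $j \geq \lceil \log k \rceil$ for which $p_v = \Theta(1/k)$ and the phase lasts $\Omega(k \log n)$ rounds within $[r, r']$.

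Second, within $J$, I would lower bound $\sum_{t \in J} \Pr[u\text{ receives some message}] \geq \sum_{t \in J} p_v(t) \prod_{w \neq v}(1 - p_w(t))$ by $\Omega(\log n)$, and then apply a Chernoff-type concentration bound. The core technical claim is that $\sum_{t \in J} \prod_{w \neq v}(1 - p_w(t)) = \Omega(k \log n)$, which relies on two bookkeeping facts: (a) each other active $w \in N_{G'}(u)$ spends at most $\Theta(k \log n)$ rounds per epoch in phases with $p_w(t) > 1/k$, and (b) each such $w$ undergoes only $O(1)$ epochs within $T$. Together these cap the interference mass $\sum_{t \in J} \sum_{w \neq v} p_w(t)$, forcing the silencing product to stay $\Omega(1)$ on all but an $o(1)$ fraction of $J$. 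The receive bound $f_{rcv}(k) = O(k \log k \log n)$ follows from exactly the same analysis, specialized to tracking $v$'s particular message rather than any successful reception.

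The main obstacle is controlling the asynchronous interference in $J$: the adversary, through choice of $bcast$ timings and reach sets, can attempt to pack other processes' low-probability (phase $1, 2, \ldots$) windows into $J$, inflating $\alpha(t) = \sum_{w \neq v} p_w(t)$ and suppressing the exactly-one-transmission event. The resolution is precisely the budgeting argument above---total low-phase time per process per epoch is $O(k \log n)$, and each process sees $O(1)$ epochs in the window $T$---so only an $o(1)$ fraction of $J$ can suffer high $\alpha(t)$, leaving enough good rounds for the expected success count to accumulate to $\Omega(\log n)$.
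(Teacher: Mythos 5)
Your part (i) and the general free/busy flavor of part (ii) are in the spirit of the paper's proof (which follows Lemma~11 of~\cite{KLNOR10}), but step two of your progress argument has a genuine quantitative gap: the window $J$ of length $\Theta(k\log n)$ is too short for your interference budget to close. Each of the $k-1$ interferers can deposit total transmission-probability mass $\Theta(\log k\cdot\log n)$ inside $J$, not $\Theta(\log n)$: phases $1,\dots,\lceil\log k\rceil$ each contribute mass $\Theta(2^i\log n)\cdot 2^{-i}=\Theta(\log n)$, and all of them together occupy only $\Theta(k\log n)$ rounds, so they can be packed entirely inside $J$ by an adversary that times the $bcast$ inputs. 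Hence $\sum_{t\in J}\sum_{w\neq v}p_w(t)$ can be as large as $\Theta(k\log k\log n)\gg |J|$, and your conclusion that only an $o(1)$ fraction of $J$ has large $\alpha(t)$ does not follow. Concretely, if all $k-1$ interferers become ready at the first round of $J$, then for the first $\sum_{i<\log k}\Theta(2^i\log n)=\Theta(k\log n)$ rounds---i.e., all of $J$---we have $\alpha(t)\geq (k-1)2^{-i}\geq 1$, so every round of $J$ is busy; more generally the adversary can keep $\alpha(t)=\Omega(\log k)$ on a constant fraction of $J$, driving the silencing product to $k^{-\Omega(1)}$ and the expected success count to $o(\log n)$.

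The paper's proof closes this hole precisely by \emph{not} shrinking the window: it runs the free/busy accounting over the entire interval $R$ of length $\Theta(k\log k\log n)$, where the total interference mass is still only $O(k\log k\log n)$ (each interferer completes at most $O(\log(k\log k))$ phases, each worth $\Theta(\log n)$ mass, over $O(1)$ epochs), so with a large enough constant at least half the rounds of $R$ are free. It then gives up on $p_v=\Theta(1/k)$ and instead uses the bound $p_v(\tau)\geq \frac{1}{2k\log k}$ valid throughout $R$, getting per-free-round success probability $\Omega\bigl(\frac{1}{k\log k}\bigr)$ over $\Omega(k\log k\log n)$ free rounds. The extra $\log k$ factor in the stated bound is exactly what absorbs the mass surplus that breaks your version. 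A secondary issue: your case analysis for locating $J$ says that if $v$ is already past phase $\lceil\log k\rceil$ you can use ``a slightly higher phase with $p_v=\Theta(1/k)$,'' but if $v$ entered its epoch long before $r$ it may sit in phases with $p_v\ll 1/k$ for the whole interval, and no such phase exists in $[r,r']$; any fix has to pin down when $v$'s epoch begins relative to $r$.
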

\fullOnly{\begin{proof} Suppose that there exists a process $u$ and a round $r$ such that in round $r$, some process $v$ in $\mathcal{N}_{G}(u)$ has a message $m$ that is not received by $u$, i.e., $m$ is new to $u$. Let $r'$ be an arbitrary round after round $r$ and let $R$ be the set of all rounds in range $[r, r']$. So, we have $r' = r+|R|-1$. Then, let $k= c(u, r, r')$. In order to prove the progress bound part of the theorem, we show that if $r' - r\geq \Theta(k \cdot\log k \cdot\log n)$, then, with high probability, $u$ receives $m$ by round $r'$. Note that this is even stronger than proving the claimed progress bound because this means that $u$ receives each of the new messages (new at round $r$) by $r'$. %that was in $\mathcal{N}_{G}(u)$ within $O(k \log (k) \log n)$ rounds after arrival of that message to $\mathcal{N}_{G}(u)$, where $c(u)$ is defined with respect to the rounds between arrival of that message to $\mathcal{N}_{G}(u)$ and its delivery to $u$. 
Since $k$ can be at most $\Delta'$, this would automatically show that APP solves the local broadcast problem. Also, similar to the proof of Lemma \ref{lem:SPP}, this would prove the second part of the theorem as well.

%Suppose that there exists a process $u$ and a round $r_1$ such that in round $r_1$, some process $v$ in $\mathcal{N}_{G}(u)$ has a message $m$ that is not received by $u$ so far, i.e., $m$ is new to $u$. Let $r_2$ be the round that $v$ receives the input $bcast(m)_v$. So $r_2 \leq r_1$. Let $R$ be a finite set of contiguous rounds starting with $r_2$ and let $r'_2$ be the last round of R, so we have $r'_2 = r_2+|R|$. Then, let $k= c(u, r_2, r_2')$. In order to show the aforesaid progress bound, we prove that if $R$ has $\Omega(k \cdot\log k \cdot\log n)$ rounds, with high probability, $u$ receives $m$ by round $r'_2$. Note that this is even stronger than the claimed progress bound because this means that $u$ receives each of the new messages within the above time bound. %that was in $\mathcal{N}_{G}(u)$ within $O(k \log (k) \log n)$ rounds after arrival of that message to $\mathcal{N}_{G}(u)$, where $c(u)$ is defined with respect to the rounds between arrival of that message to $\mathcal{N}_{G}(u)$ and its delivery to $u$. 
%Note that since $k$ can be at most $\Delta'$, this would automatically show that APP solves the local broadcast problem.

Let $S$ be the set of all processes in $\mathcal{N}_{G'}(u) - \{v\}$ that are ready in at least one round of $R$. Therefore, we have $|S|=k-1$. First, since the acknowledgment in APP takes a full epoch, during rounds of $R$, each process in $S$ transmits at most a constant number of messages and therefore, the total number of messages under transmission in $\mathcal{N}_{G'}(u)$ in rounds of $R$ is $O(k)$. 

We show that w.h.p. $u$ receives message $m$ by the end of rounds of $R$. In order to do this, we divide the rounds of $R$ into two categories of \emph{free} and \emph{busy}. Similar to \cite{KLNOR10}, we call a round $\tau$ \emph{busy} if the total probability of transmission of processes of $S$ in round $\tau$ is greater than or equal to $1$. Otherwise, the round $\tau$ is called \emph{free}. Similar to \cite[Lemma 11]{KLNOR10}, we can see that the total number of busy rounds in set $R$ is $O(k \cdot\log k \cdot\log n)$. Therefore, there are $\Theta(k \cdot \log k \cdot\log n)$ free rounds in $R$. On the other hand, similar to \cite[Lemma 11]{KLNOR10}, we can easily see that if $\tau \in R$ is a free round and the probability of transmission of $v$ in round $r$ is $p_v(\tau)$, then $u$ receives the message of process $u$ in round $\tau$ with probability at least $\frac{1}{4 p_v(\tau)}$. Now, because of the way that SPP chooses its probabilities and since $|R| = \Theta(k \;\log k \;\log n)$, we can infer that the transmission probability of $v$ for each round $\tau \in R$ is at least $\frac{1}{2k \,\log k}$. Therefore, since $R$ has $\Theta(k \cdot \log k \cdot\log n)$ free rounds and for each free round $\tau \in R$, $u$ receives the message of $v$ with probability at least $\frac{1}{4 p_v(\tau)}$, we can conclude that the probability that $u$ does not receive the message of $v$ by the end of rounds of $R$ is at most
\[ (1-\frac{1}{4 k \,\log k})^{\Theta(k \; \log k \;\log n)} \leq e^{-\Theta(\log n)} = (\frac{1}{n})^{\Theta(1)}\]
This completes the proof. 
\end{proof}
}
%---------------------------------------------------------------------------------------------------------------------------------------------------
%---------------------------------------------------------------------------------------------------------------------------------------------------
%---------------------------------------------------------------------------------------------------------------------------------------------------
\shortOnly{

{\noindent\bf {Interleaving Progress and Acknowledgment Protocols:}}}\fullOnly{
\subsection{Interleaving Progress and Acknowledgment Protocols}}
\fullOnly{In this section, we show how we can achieve both fast progress and fast acknowledgment bounds by combining our acknowledgment protocol, SAP, with either of the progress protocols, SPP or APP.} The general outline for combining the above algorithms is as follows. Suppose that we want to combine the protocol SAP with a protocol $P_{prog} \in \{SPP, APP\}$. Then, whenever process $v$ receives message $m$ for transmission, by a $bsast(m)_v$ input event, we provide this message as input to both of the protocols SAP and $P_{prog}$. Then, we run the SAP protocol in the odd rounds, and protocol $P_{prog}$ in the even rounds. In the combined algorithm, process $v$ acknowledges the message $m$ by outputting $ack(m)_v$ in the round that SAP acknowledges $m$. Moreover, in that round, the protocol $P_{prog}$ also finishes working on this message. \fullOnly{In the following, we show that using}\shortOnly{Using} this combination, we achieve the fast progress and acknowledgment bounds together. \fullOnly{More formally, we show that  the acknowledgment and progress time of the combined algorithm are respectively, two times the minimums of the acknowledgment and two times the minimum of the progress times of the respective two protocols.} %\footnote{\mohsen{This section remains to be polished. Particularly, I will present this last sentence in a more formal manner later. For example, I have to argue formally that even though SAP has a faster acknowledgment, in the interleaved version, the number of messages involved in the contention is still at most two times the number of processs involved in the contention as $f_{prog}$ is not greater than $f_{ack}$. Otherwise, interleaving with SAP would ruin the analysis of the progress bound in the APP algorithm.}}

\fullOnly{\begin{lemma} \label{lem:Comb-Ack}If we interleave the SAP protocol with a protocol $P_{prog} \in \{SPP, APP\}$, the resulting algorithm solves the local broadcast problem and has acknowledgment bound of $f_{ack}(k) = O(\Delta' \log n)$ in the dual graph model, and acknowledgment bound of $f_{ack}(k) = O(\Delta \log n)$ in the classical model.
\end{lemma}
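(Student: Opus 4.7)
The plan is to reduce correctness of the interleaved algorithm to the already-proved guarantees of SAP in isolation (Lemma \ref{lemma:SAP-Dual} and Corollary \ref{crl:SAP-classical}), with only a factor-$2$ overhead from the interleaving. Since SAP runs exclusively in odd rounds and $P_{prog}$ exclusively in even rounds, the transmissions made by the two protocols never coexist in the same round, so from the perspective of any receiver $u$, the set of messages reaching $u$ in an odd round is determined solely by the SAP-layer transmissions of the processes in $N_{G'}^+(u)$ together with the adversary's reach-set choice for that round. In particular, the sub-execution consisting of the odd rounds of the interleaved algorithm is statistically equivalent to a stand-alone execution of SAP, modulo an adversary that is allowed to additionally observe the transcript of $P_{prog}$ in the intervening even rounds.

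First, I would formalize this coupling. For each process $v$ that receives an input $bcast(m)_v$ in round $r$, let $r_{odd}$ be the first odd round $\geq r$ and index the odd rounds after $r_{odd}$ as a virtual timeline on which SAP is executing. Because the adversary is already assumed to have full knowledge of the network state when choosing reach sets in the base model, giving it additional information about even-round transmissions strengthens the adversary only trivially; the probabilistic arguments in the proof of Lemma \ref{lemma:SAP-Dual} depend only on the independent coin flips that the ready processes use in each phase of the current SAP epoch, and those flips are not revealed before the round in which they are used. Hence the concentration bound from Lemma \ref{lemma:SAP-Dual} carries over verbatim: after $O(\Delta'\log n)$ odd rounds following $r_{odd}$, process $v$ outputs $ack(m)_v$, and w.h.p.\ every $u\in N_G(v)$ has received $m$ in one of those odd rounds.

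Next I would verify the three defining properties of a local broadcast algorithm. Property (1) holds because SAP always outputs exactly one $ack(m)_v$ per $bcast(m)_v$; $P_{prog}$ emits no $ack$ outputs in the interleaved algorithm. For property (2), each $rcv(m)_u$ output by either sub-protocol is triggered by actually receiving $m$ from some neighbor that had $m$ active in the corresponding round, so the validity clause is immediate. Property (3) is precisely the statement above: when $v$ outputs $ack(m)_v$, Lemma \ref{lemma:SAP-Dual} applied to the SAP sub-execution guarantees that every $u\in N_G(v)$ has already generated $rcv(m)_u$ w.h.p. Translating the odd-round count back to real rounds multiplies by $2$, preserving the asymptotic bound $f_{ack}(k)=O(\Delta'\log n)$ in the dual graph model, and specializing to $G=G'$ yields $O(\Delta\log n)$ in the classical model via Corollary \ref{crl:SAP-classical}.

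The only subtlety is convincing oneself that the even-round activity cannot subvert SAP's guarantees; this is the step I would treat most carefully. The potential worry is twofold: an adversary that adaptively chooses reach sets in odd rounds using knowledge of even-round transmissions, and the possibility that a process is simultaneously \emph{ready} within both sub-protocols and thus its internal state during SAP is correlated with $P_{prog}$'s transmissions. Both concerns dissolve once one observes that (i) the coin flips driving SAP's odd-round behavior are drawn fresh and are independent of everything in the even rounds, and (ii) SAP's analysis is worst case over all reach-set schedules anyway. With this observation in place, the lemma follows immediately from the acknowledgment bound of SAP standing alone.
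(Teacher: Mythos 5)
Your proposal is correct and follows essentially the same route as the paper's own proof: observe that the round alignment means only one sub-protocol transmits network-wide in any given round, so the odd-round sub-execution is a bona fide SAP execution, and then invoke Lemma \ref{lemma:SAP-Dual} (resp.\ Corollary \ref{crl:SAP-classical}) with a factor-$2$ overhead. Your additional care about the adversary observing even-round transmissions and about verifying the three defining properties is sound but goes beyond what the paper spells out.
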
}

\fullOnly{\begin{proof}First, note that the even and odd rounds of different processes are aligned and therefore, in each round, only one of the protocols SAP and $P_{prog}$ is transmitting throughout the whole network. Because of this, it is clear that when in process $v$, the SAP protocol acknowledges message $m$, $m$ is successfully delivered to all the processes in $\mathcal{N}_{G}(v)$. Now, suppose that process $v$ receives an input $bcast(m)_v$ in round $r$. Using Lemma \ref{lemma:SAP-Dual}, we know that the SAP protocol acknowledges message $m$ by $\Theta(\Delta' \log n)$ odd rounds after $r$. Thus, process $v$ outputs $ack(m)_v$ in a round $r'=r+\Theta(\Delta' \log n)$. Hence, we have that the interleaved algorithm solves the local broadcast problem and has acknowledgment bounds of $f_{ack}(k) = \Theta(\Delta' \log n)$ and $f_{ack}(k) = \Theta(\Delta \log n)$, respectively for, the dual graph and the classical radio broadcast models.
\end{proof}
}
\begin{corollary}If we interleave SAP with SPP, in the dual graph model, we get acknowledgment bound of $f_{ack}(k) = O(\Delta' \log n)$ and progress bound of $f_{prog}(k) = O(\min\{k \log (\Delta') \log n, \Delta' \log n\})$. Also, this interleaving gives acknowledgment and progress bounds of, respectively, $O(\Delta \log n)$ and $O(\log (\Delta) \log n)$ in the classical radio broadcast model. \end{corollary}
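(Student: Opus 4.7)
The plan is to decompose the corollary into its four constituent bounds---acknowledgment and progress, in each of the classical and dual graph models---and reduce each to the corresponding analysis of the standalone SAP or SPP protocol via a simple round-doubling argument.

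For the acknowledgment bound, I would invoke Lemma~\ref{lem:Comb-Ack} essentially as a black box: interleaving SAP with any progress protocol preserves SAP's acknowledgment guarantee up to a factor of two in the number of rounds. This immediately yields $f_{ack}(k)=O(\Delta'\log n)$ in the dual graph model and, by specializing to $G=G'$, $f_{ack}(k)=O(\Delta\log n)$ in the classical model.

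For the progress bound, the key observation is that the odd/even round parities are aligned across all processes, so in each ``SPP round'' the SAP protocol is globally silent and vice versa. Hence the subsequence of even rounds behaves exactly like a standalone execution of SPP, merely stretched by a factor of two in wall-clock time. Applying Lemma~\ref{lem:SPP} then gives $f_{prog}(k)=O(k\log(\Delta')\log n)$ in the dual graph model, and Lemma~\ref{lemma:SPP-classical} gives $O(\log(\Delta)\log n)$ in the classical model. To obtain the $\min$ in the dual graph progress claim, I would additionally observe that any $G$-neighbor $v$ of $u$ that remains active for more than $C\Delta'\log n$ rounds (for a suitable constant $C$) must have acknowledged its message within that window by the acknowledgment bound already established; by the correctness of SAP (Lemma~\ref{lemma:SAP-Dual}), this in turn implies that $u$ has received $v$'s message w.h.p. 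Hence $f_{prog}(k)\le O(\Delta'\log n)$ holds unconditionally, and combining with the SPP-based bound yields the claimed $O(\min\{k\log(\Delta')\log n,\;\Delta'\log n\})$.

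The only mildly delicate point---and the most likely source of technical subtlety---is to confirm that the probabilistic analyses of SAP and SPP remain valid when the two protocols are executed in alternation. Since each protocol uses its own fresh, independent coin flips in its own rounds, and the receiver's view in any given round depends only on transmissions in that round, the original concentration arguments carry through verbatim with the number of protocol-internal rounds halved. No step of either analysis relies on cross-round or cross-protocol correlations, so combining the pieces is essentially a bookkeeping exercise.
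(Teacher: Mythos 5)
Your proposal is correct and follows essentially the same route as the paper: acknowledgment via Lemma~\ref{lem:Comb-Ack}, and progress by observing that the even-round subsequence is a standalone SPP execution in which the contention $c(u,\tau)$ is bounded by $k$, then invoking Lemmas~\ref{lem:SPP} and~\ref{lemma:SPP-classical}; the $\Delta'\log n$ term of the $\min$ indeed comes from the SAP acknowledgment guarantee, which you make more explicit than the paper does. (One wording nit: a neighbor that "remains active for more than $C\Delta'\log n$ rounds" has by definition \emph{not} yet acknowledged---the correct phrasing is that w.h.p.\ it acknowledges, and hence delivers to all $G$-neighbors, within that window.)
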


\fullOnly{\begin{proof} The acknowledgment bound parts of the corollary follow immediately from Lemma \ref{lem:Comb-Ack}. For the progress bound parts, consider a process $u$ and a round $r$ such that there exists a process $v \in \mathcal{N}_{G}(u)$ that is transmitting message $m$ and process $u$ has not received message $m$ before round $r$. Note that for each $r'>r$, if we have $c(u, r, r')=k$, then, by definition of $c(u, r, r')$, in each even round $\tau \in [r, r']$, we have $c(u, \tau) \leq k$. The rest of the proof follows easily from Lemmas \ref{lem:SPP} and \ref{lemma:SPP-classical}, and by focusing on the SPP protocol in the even rounds after $r$.
\end{proof}
} 
\begin{corollary}If we interleave SAP with APP, in the dual graph model, we get acknowledgment bound of $f_{ack}(k) = O(\Delta' \log n)$ and progress bound of $f_{prog}(k) = O(\min\{k \log (k) \log n, \Delta' \log n\})$. \end{corollary}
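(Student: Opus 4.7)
The plan is to combine two already-established pieces with one short new observation to produce both bounds. The acknowledgment bound $f_{ack}(k) = O(\Delta'\log n)$ falls out immediately from Lemma~\ref{lem:Comb-Ack}, which is stated generically for the interleaving of SAP with any $P_{prog} \in \{SPP, APP\}$. All that is used there is that SAP occupies a subsequence (the odd rounds), and its analysis in Lemma~\ref{lemma:SAP-Dual} is stable under a factor-of-two time dilation, so nothing new needs to be proved here.

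For the progress bound I would establish two independent upper bounds on the progress time and take their minimum. Fix $u$, rounds $r \leq r'$, and a neighbor $v \in N_G(u)$ that is active throughout $[r,r']$, and set $k = c(u,r,r')$. The first bound, $O(k \log k \log n)$, comes from restricting attention to APP's even rounds. In those rounds, the set of ready $G'$-neighbors of $u$ is precisely what APP sees, their transmission probabilities are those of a standalone APP, and SAP's odd-round transmissions do not affect what $u$ receives in even rounds. Hence Lemma~\ref{lem:APP} applies verbatim after rescaling by a factor of two, giving an $O(k \log k \log n)$ round bound with high probability.

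The second bound, $O(\Delta' \log n)$, is read off from the acknowledgment bound already in hand. Since $v$ is active throughout $[r,r']$, it received its $bcast$ input at some round $\leq r$, so by the acknowledgment bound it must output $ack$ at some round $\leq r + C \Delta' \log n$ w.h.p.\ for a suitable constant $C$. By the correctness condition for local broadcast, $v$'s $ack(m)_v$ cannot occur until $v$'s message $m$ has reached every $u' \in N_G(v)$, so in particular $u$ has already produced $rcv(m)_u$ by that round. Combining this with the $O(k \log k \log n)$ bound yields $f_{prog}(k) = O(\min\{k \log k \log n,\; \Delta' \log n\})$.

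The main obstacle in making the argument fully rigorous is the compositionality claim underlying the first progress bound: that APP's analysis in Lemma~\ref{lem:APP} survives unchanged when APP transmits only in even rounds. This is really bookkeeping — because the even and odd round subsequences are disjoint, the adversary's reach-set choices in odd rounds cannot flip what $u$ receives in an even round; the contention seen by APP in its own rounds is at most $k$ by definition of $c(u,r,r')$; and independence of the transmission coin flips is preserved across rounds — but it is the one place where care is needed rather than a direct appeal to a previous lemma.
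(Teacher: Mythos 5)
Your proposal matches the paper's proof essentially step for step: the acknowledgment bound is quoted from Lemma~\ref{lem:Comb-Ack}, and the progress bound is obtained by running the APP analysis on the even rounds (giving the $O(k \log k \log n)$ term) while the $O(\Delta' \log n)$ term falls back on the acknowledgment bound. The one detail the paper makes explicit that you leave inside your generic ``compositionality'' caveat is that, because acknowledgment in the interleaved algorithm is governed by SAP's $\Theta(\Delta' \log n)$ timing rather than APP's own epoch, each of the at most $k$ active $G'$-neighbors of $u$ injects only $O(1)$ distinct messages during the window, which is the ``$O(k)$ messages under transmission'' premise that the busy/free-round argument of Lemma~\ref{lem:APP} actually requires.
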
 

\fullOnly{

\begin{proof} Again, the acknowledgment bound part of the corollary follows immediately from Lemma \ref{lem:Comb-Ack}. For the progress part, consider a process $u$ and a round $r$ such that there exists some process $v \in \mathcal{N}_{G}(u)$ that is transmitting a message $m$ and process $u$ has not received message $m$. Suppose that $r'$ is the first round that $u$ receives $m$. Such a round exists with high probability as we know from Lemma \ref{lem:Comb-Ack} that the combined algorithm solves the local broadcast problem. Let $k = c(u, r, r')$.  Let $r'' = r + \Theta(\min\{k \log (k) \log n, \Delta' \log n\})$. If $r' \leq r''$, we are done with the proof. In the more interesting case, suppose that $r' < r''$.

Now, by definition of $c(u, r, r')$, we know that in each even round $\tau \in [r, r']$, we have $c(u, \tau) \leq k$. Hence, we also have that in each even round $\tau \in [r, r'']$, $c(u, \tau) \leq k$. Let $S$ be the set of processes in $\mathcal{N}_{G\,'}(u)$ that are active in at least one even round in range $[r, r'']$. Thus, $|S|\leq k$. Since $r'' - r \leq \Theta(\Delta' \log n)$ and the algorithm acknowledges each message after $\Theta( \Delta' \log n)$ rounds, during even rounds in range $[r, r'']$, each process $w \in S$ transmits only a constant number of messages. Therefore, the total number of messages under transmission during even rounds in range $[r, r'']$ is $O(k)$. The rest of the proof can be completed exactly as that in the proof of Lemma \ref{lem:APP}. 
\end{proof}
}

}
\section{Lower Bounds in the Classical Radio Broadcast Model}
In this section, we focus on the problem of local broadcast in the classical model and present lower bounds for both progress and acknowledgment times. We emphasize that all these lower bounds are presented for centralized algorithms and also, in the model where processes are provided with a collision detection mechanism. Note, that these points only strengthen these results. These lower bounds prove that the optimized decay protocol, as presented in the previous section, is optimal with respect to progress and acknowledgment times in the classical model. These lower bounds also show that the existing constructions of Ad Hoc Selective Families are optimal. Moreover, in future sections, we use the lower bound on the acknowledgment time in the classical model that we present here as a basis to derive lower bounds for progress and acknowledgment times in the dual graph model. 
\subsection{Progress Time Lower Bound}
In this section, we remark that following the proof of the $\Omega(\log^2 n)$ lower bound of Alon et al. \cite{ABLP89} on the time needed for global broadcast of one message in radio networks, and with slight modifications, one can get a lower bound of $\Omega(\log\Delta \log n)$ on the progress bound in the classical model.

\begin{lemma}
For any $n$ and any $\Delta \leq n$, there exists a one-shot setting with a bipartite network of size $n$ and maximum contention of at most $\Delta$ such that for any transmission schedule, it takes at least $\Omega(\log\Delta \log n)$ rounds till each receiver receives at least one message.
\end{lemma}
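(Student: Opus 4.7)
My plan is to adapt the probabilistic lower bound construction from Alon et al.~\cite{ABLP89}, which proves the $\Omega(\log^2 n)$ bound for global broadcast, so that the receiver degree range is capped at $\Delta$ instead of $n$. In their argument one $\log n$ factor comes from the range of receiver degrees and the other from a concentration-of-measure argument over random neighborhoods; truncating the degree range at $\Delta$ replaces the first factor by $\log\Delta$ and leaves the second unchanged.

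\textbf{Construction.} I would build a random bipartite graph with $\Theta(n)$ senders (all initially active with messages) and $\Theta(n)$ receivers, partitioned into $\log\Delta$ groups $R_0, R_1, \ldots, R_{\log\Delta-1}$ of equal size $\Theta(n/\log\Delta)$. Each receiver in $R_i$ picks a neighborhood of exactly $2^i$ senders uniformly at random and independently of the other receivers. By construction every receiver has $G$-degree at most $\Delta$, matching the maximum-contention hypothesis, and the total number of vertices is $\Theta(n)$.

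\textbf{Key technical step.} Fix any deterministic transmission schedule $\Sigma = (S_1,\ldots,S_T)$ and classify each round $r$ by the largest integer $b(r)$ with $|S_r| \leq n_s/2^{b(r)}$, giving at most $O(\log n)$ buckets. A standard estimate of $d \cdot (|S_r|/n_s)(1-|S_r|/n_s)^{d-1}$ shows that a random receiver of degree $d = 2^i$ has reception probability bounded by a constant only when $b(r)$ is within $O(1)$ of $i$, and exponentially small in $|b(r) - i|$ otherwise. Thus each round ``primarily helps'' a single degree class, and the per-round reception probability for receivers in the targeted class is at most a fixed constant strictly less than $1$.

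\textbf{Finishing and main obstacle.} Suppose for contradiction $T \le c\log\Delta\log n$ for a small constant $c$. By pigeonhole, some class $R_i$ receives only $O(\log n)$ useful rounds, so a single random receiver in $R_i$ escapes all reception events with probability at least $n^{-O(c)}$; by the independence of the $\Theta(n/\log\Delta)$ neighborhoods in $R_i$, the probability that \emph{every} receiver in $R_i$ is covered is at most $\exp(-\Omega(n^{1-O(c)}/\log\Delta))$. The main obstacle is that the schedule may adapt to the graph, which forces a union bound over schedules; following Alon et al., the behavior of a schedule of length $T$ is determined (up to constants in the reception probabilities) by its bucket sequence alone, yielding only $O(\log n)^T$ effective schedules. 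For $c$ sufficiently small this count is dominated by the per-schedule failure probability above, so with positive probability a random graph defeats every schedule, yielding the claimed $\Omega(\log\Delta \log n)$ bound.
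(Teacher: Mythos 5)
Your high-level plan coincides with the paper's: the paper proves this lemma only by a three-sentence outline stating that one takes the $\Omega(\log^2 n)$ construction of Alon et al.\ and truncates the range of receiver degrees so that there are $\Theta(\log\Delta)$ degree classes instead of $\Theta(\log n)$, each of which must be served by $\Omega(\log n)$ dedicated rounds. Your random degree-class construction, the bucketing of rounds by transmitter-set size, and the pigeonhole over classes all match that outline.

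The gap is in the step you yourself flag as the main obstacle: the union bound over schedules. With $\Theta(n)$ senders the number of length-$T$ schedules is $2^{\Theta(nT)}$, while the probability that a fixed schedule covers \emph{all} of $m=\Theta(n)$ receivers is at best $(1-p)^{m}\le e^{-pm}$ with $p\le 1$, i.e.\ at best $e^{-\Theta(n)}$; since $T=\Theta(\log\Delta\log n)\gg 1$, a union bound over schedules can never close for a balanced construction, no matter how the per-schedule analysis is sharpened. Your proposed repair, enumerating only the $O(\log n)^{T}$ bucket sequences, does not fix this: what depends only on the bucket sequence is an \emph{upper bound} on the coverage probability of any schedule with that bucket sequence, but the union bound must still range over actual schedules, so the relevant sum is $\sum_{b}\left|\{\sigma:\mathrm{bucket}(\sigma)=b\}\right|\cdot U(b)$ and the cardinality factor is again $2^{\Theta(nT)}$; two schedules with the same bucket sequence cover very different receiver sets and cannot be collapsed into one event. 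The repair the paper itself uses for its acknowledgment lower bound is to make the bipartition lopsided: $\eta=n^{\Theta(1)}$ senders with $\eta$ a small power of $n$ and $m=n^{1-o(1)}$ receivers (for the present lemma one may assume $\Delta\le n^{0.1}$, since otherwise $\log\Delta=\Theta(\log n)$ and a smaller contention witness suffices). Then the number of schedules is $2^{\eta T}=2^{n^{o(1)}}$, while a starved degree class contains $\Theta(m/\log\Delta)$ receivers each escaping independently with probability $n^{-O(c)}$, so the per-schedule coverage probability is $e^{-n^{1-o(1)}}$ and the union bound closes. With that change (plus an FKG-type inequality to handle correlations across rounds when lower-bounding a single receiver's escape probability, as in the paper's analysis of short transmission schedules), the rest of your argument goes through.
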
\fullOnly{
\begin{proof}[Proof Outline]
The proof is an easy extension of \cite{ABLP91} to networks with maximum contention of $\Delta$. The only change is that instead of choosing the receiver degrees to vary between $n^{0.4}$ and $n^{0.6}$, we choose the degrees between $14$ and $\Theta(\sqrt{\Delta})$. This leads to $\log(\Delta)$ (instead of $\log n$) different classes of degrees, and in turn, to the stated bound. The proof stays mostly unaffected.
\end{proof}}
\subsection{Acknowledgment Time Lower Bound}
In this section, we present our lower bound on the acknowledgment time in the classical radio broadcast model.

\begin{theorem} \label{thm:Ack_LB}
In the classical radio broadcast model, for any large enough $n$ and any $\Delta \in [20 \log n,  n^{0.1}]$, there exists a one-shot setting with a bipartite network %$H(n, \Delta)$
of size $n$ and maximum receiver degree at most $\Delta$ such that it takes at least $ \Omega(\Delta \log n)$ rounds until all receivers have received all messages of their sender neighbors.
\end{theorem}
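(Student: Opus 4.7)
Proof plan: I would reduce the theorem to a lower bound on \emph{ad-hoc strongly selective families} and then apply the probabilistic method. In a centralized setting a deterministic algorithm may be taken non-adaptive (collision detection conveys nothing extra to an omniscient scheduler), and randomization is handled by Yao's principle; such a schedule $\mathcal{T}=(T_1,\dots,T_T)$ succeeds on a bipartite network with receiver neighborhoods $\{S_v\}$ iff for every $v$ and every $s\in S_v$ there is some round $r$ with $T_r\cap S_v=\{s\}$. The task becomes exhibiting $M=n/2$ $\Delta$-subsets of a universe of $N=n/2$ senders for which every such selecting $\mathcal{T}$ has $|\mathcal{T}|\ge \Omega(\Delta\log n)$.

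To construct the hard collection I would draw the $S_v$ i.i.d.\ uniformly from $\binom{[N]}{\Delta}$. Letting $p_\mathcal{T}$ denote the probability that a fixed length-$T$ schedule strongly selects a single uniform random $\Delta$-subset, independence of the $S_v$ together with a union bound over the at most $2^{NT}$ possible length-$T$ schedules yields
$$\Pr\!\bigl[\exists\,\mathcal{T}\text{ of length }\le T\text{ succeeding on all }S_v\bigr]\;\le\;2^{NT}\,\max_\mathcal{T} p_\mathcal{T}^{M}.$$
With $N=M=n/2$ and $T=c\Delta\log n$ for a small constant $c$, the right-hand side is $<1$ whenever $p_\mathcal{T}<2^{-T}=n^{-\Omega(c\Delta)}$ for every $\mathcal{T}$, so a graph meeting the theorem then exists by the probabilistic method.

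The heart of the proof is therefore the estimate $p_\mathcal{T}\le n^{-\Omega(\Delta)}$ for every length-$T$ schedule. I would prove it by counting the $\Delta$-subsets of $[N]$ strongly selected by $\mathcal{T}$ and comparing to $\binom{N}{\Delta}$. Each selected $S$ admits an injective certifying map $\phi_S:S\to[T]$ with $T_{\phi_S(s)}\cap S=\{s\}$, which both picks out a $\Delta$-subset of rounds and imposes a strong mutual-exclusion constraint: each $T_{\phi_S(s)}$ must contain its certified element but none of the other $\Delta-1$ certified elements. I would exploit this by peeling: after fixing one certifying pair $(s,r)$, the surviving rounds, each restricted to $[N]\setminus T_r$, must strongly select a $(\Delta-1)$-subset of that residual universe. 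This gives a recurrence in $(\Delta,T)$ which, optimized over the possible distributions of the round sizes $|T_r|$, yields the target exponent.

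The main obstacle is precisely this recursive count. A naive bound that ignores the mutual-exclusion constraints between certifying rounds gives only $p_\mathcal{T}\le\binom{T}{\Delta}(\max_r|T_r|/N)^\Delta$, which is far too weak; the essential difficulty is quantifying the tension between ``$T_r\cap S=\{s\}$'' being exactly what makes $\mathcal{T}$ useful for $S$ and simultaneously what prevents it from being useful for many competing $S$. The parameter range $\Delta\in[20\log n,n^{0.1}]$ enters here: the lower bound on $\Delta$ makes the union bound and the concentration estimates inside the peeling work, while the upper bound keeps $N=n/2$ comfortably larger than $\Delta$, so that the residual universes appearing in the recursion never shrink below a useful size.
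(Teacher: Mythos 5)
Your overall framework (random bipartite graph, union bound over all short transmission schedules, probabilistic method) matches the paper's, but two of your concrete choices break the argument, and one of them is fatal to the stated plan rather than just to its execution.

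First, the parameter choice $N=M=n/2$ makes the union bound unwinnable. With $N$ senders the number of length-$T$ schedules is $2^{NT}$, so you need $p_{\mathcal T}^{M}<2^{-NT}$, i.e.\ $p_{\mathcal T}<2^{-T}$ when $N=M$. The paper avoids this by making the sender side tiny and the receiver side huge ($\eta=n^{0.12}$ senders, $m=n^{0.96}$ receivers): then there are only $2^{\eta^{3}}=2^{n^{0.36}}$ short schedules, while $m$ independent receivers amplify even a per-receiver failure probability of $\eta^{-2}=n^{-0.24}$ into a per-schedule coverage probability of at most $(1-\eta^{-2})^{\eta^{8}}\le e^{-\eta^{6}}=e^{-n^{0.72}}$, which crushes the union bound. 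This asymmetry is not a convenience; it is the only reason the union bound closes.

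Second, and more fundamentally, the estimate at the heart of your plan --- $p_{\mathcal T}\le n^{-\Omega(\Delta)}=2^{-\Omega(T)}$ for \emph{every} schedule of length $T=c\Delta\log n$ --- is false. Take $T$ to be $c\log n$ independent random partitions of $[N]$ into $\Delta$ classes of size $N/\Delta$ (essentially one pass of Decay at the right density, repeated). For a uniform random $\Delta$-subset $S$, each element of $S$ is isolated in a given partition with probability about $e^{-1}$, so it fails to be isolated in all $c\log n$ partitions with probability about $n^{-\Omega(c)}$, and by a union bound over the $\Delta\le n^{0.1}$ elements the schedule strongly selects $S$ with probability $1-o(1)$, not $n^{-\Omega(\Delta)}$. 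No amount of care in the peeling recursion can recover a bound that is simply not true: a single random receiver is \emph{easy} to serve in $O(\Delta\log n)$ rounds, and the hardness of the theorem lives entirely in having to serve many receivers with one schedule. Consequently the quantity you actually need to control is the opposite tail: you must show that every short schedule \emph{fails} on a random receiver with probability at least $n^{-O(1)}$ (large enough that $m$ independent receivers make joint coverage exponentially unlikely). That is what the paper proves, via an argument absent from your proposal: since a short schedule has fewer than $\eta/2$ rounds, at least $\eta/2$ senders are never isolated deterministically; assigning each such sender the potential $\Phi(s)=\sum_{t\in T_{s}}1/N(t)$ and averaging yields a sender $s^{*}$ that transmits only in high-contention rounds with $\Phi(s^{*})\le\Delta\log\eta/(6\eta)$, and then an FKG argument shows a random receiver is adjacent to $s^{*}$ yet blocked in every round of $T_{s^{*}}$ with probability at least $\eta^{-2}$. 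You would need to supply an argument of this kind (and rebalance $N$ versus $M$) before the rest of your outline can go through.
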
 

To prove this theorem, instead of showing that randomized algorithms have low success probability, we show a stronger variant by proving an impossibility result: we prove that there exists a one-shot setting with the above properties such that, even with a centralized algorithm, it is \emph{not possible} to schedule transmissions of nodes less than some bound of $\Omega(\Delta \log n)$ rounds such that each receiver receives the message of each of its neighboring senders successfully. In particular, this result shows that in this one-shot setting, for any randomized local broadcast algorithm, the probability that an execution shorter than that $\Omega(\Delta \log n)$ bound successfully delivers message of each sender to all of its receiver neighbors is zero.  

Let us first present some definitions. A transmission schedule $\sigma$ of length $L(\sigma)$ for a bipartite network is a sequence $\sigma_1, \ldots, \sigma_{L(\sigma)} \subseteq S$ of senders. Having a sender $u \in \sigma_r$ indicates that at round $r$ the sender $u$ is transmitting its message. For a network $G$, we say that transmission schedule $\sigma$ \emph{covers} $G$ if for every $v \in S$ and $u \in \mathcal{N}_{G}(v)$, there exists a round $r$ such that $\sigma_r \cap \mathcal{N}_{G}(v) = \{u\}$, that is, using transmission schedule $\sigma$ every receiver node receives all the messages of all of its sender neighbors. Now we are ready to see the main lemma which proves our bound. 

\begin{lemma}\label{lem:Ack_Schedules_LB}
For any large enough $n$ and $\Delta \in [20 \log n, n^{0.1}]$, there exists a bipartite network $G$ with size $n$ and maximum receiver degree at most $\Delta$ such that there does not exist a transmission schedule $\sigma$ such that $L(\sigma) < \frac{\Delta\log n}{100}$ and $\sigma$ covers $G$.
\end{lemma}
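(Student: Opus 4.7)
My approach is the probabilistic method. Fix $n_S = n_R = n/2$, and draw $G$ by letting each receiver $v$ independently have a uniformly random neighborhood $N(v)$ of size $\Delta$ in the sender set $S$. I will show that with positive probability, $G$ cannot be covered by any schedule $\sigma$ with $L(\sigma) < \Delta \log n / 100$.

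For a fixed schedule $\sigma$ of length $L$, let $C(\sigma)$ consist of those $\Delta$-subsets $N \subseteq S$ for which every $u \in N$ has some witness round $r$ with $\sigma_r \cap N = \{u\}$. By independence of the neighborhoods, $\Pr[\sigma \text{ covers } G] = \bigl(|C(\sigma)|/\binom{n_S}{\Delta}\bigr)^{n_R}$, and the probabilistic method succeeds if the sum $\sum_\sigma \Pr[\sigma \text{ covers } G] < 1$ when taken over schedules of length less than $\Delta \log n / 100$.

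The core combinatorial lemma is an upper bound on $|C(\sigma)|$. Any $N \in C(\sigma)$ admits an \emph{injective} witness map $w : N \to [L]$, since a round $r$ with $|\sigma_r \cap N| = 1$ witnesses exactly one element of $N$. Summing over the choice of witness-round set $R = \{r_1, \ldots, r_\Delta\} \subseteq [L]$, the valid $N$'s are transversals of the pairwise disjoint ``private'' sets $T_i := \sigma_{r_i} \setminus \bigcup_{j \ne i} \sigma_{r_j}$. Since $\sum_i |T_i| \le n_S$ by disjointness, AM-GM gives $\prod_i |T_i| \le (n_S/\Delta)^\Delta$, whence $|C(\sigma)| \le \binom{L}{\Delta}(n_S/\Delta)^\Delta$.

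The subtle step is the union bound. The raw count of at most $2^{n_S L}$ schedules, together with the bound above, is not tight enough across the entire range $\Delta \in [20 \log n, n^{0.1}]$, so I would first reduce to a canonical subclass of schedules. Rounds with $|\sigma_r|$ far from $n_S/\Delta$ essentially never contribute to covering a random $\Delta$-subset, because $\Pr[|\sigma_r \cap N| = 1]$ peaks at $|\sigma_r| \approx n_S/\Delta$ and decays exponentially outside a constant-factor window; consequently, up to negligible error, I may assume each $|\sigma_r|$ lies in a narrow interval around $n_S/\Delta$, shrinking the effective number of schedules drastically. Combined with the refined bound on $|C(\sigma)|$ (now accounting for the round-size profile $(|\sigma_1|, \ldots, |\sigma_L|)$), the union bound closes once $L < \Delta \log n / 100$, and the probabilistic argument delivers a graph $G$ that no short schedule can cover. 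The main obstacle is executing this canonical-form reduction rigorously and tracking constants so that the entire calculation works in the stated parameter regime rather than only for modest values of $n$.
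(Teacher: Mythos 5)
There is a genuine gap here, on two levels. First, your core combinatorial bound is vacuous in the stated parameter regime: since $\binom{n_S}{\Delta}\le (en_S/\Delta)^{\Delta}$, your estimate $|C(\sigma)|\le \binom{L}{\Delta}(n_S/\Delta)^{\Delta}$ exceeds the trivial bound $\binom{n_S}{\Delta}$ as soon as $\binom{L}{\Delta}\ge e^{\Delta}$, which holds for every short schedule because $L/\Delta=\log n/100\to\infty$. So the per-schedule coverage probability you derive is $\ge 1$ before any union bound is even attempted. Second, and more fundamentally, the balanced split $n_S=n_R=n/2$ makes the first-moment method provably unclosable, and the proposed canonical-form reduction cannot rescue it. Concretely, for $\Delta=20\log n$ a Decay-type schedule of length $L/2$ with all rounds of size about $n_S/\Delta$ (hence canonical) already covers a uniformly random $\Delta$-subset with probability $\ge 1/2$; appending $L/2$ arbitrary canonical rounds produces at least $\binom{n_S}{n_S/\Delta}^{L/2}=2^{\Theta(n_S\log\Delta\log n)}$ distinct short canonical schedules, each covering your random $G$ with probability at least $2^{-n_R}=2^{-n/2}$. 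The union-bound sum therefore diverges even over the restricted class: the entropy of the schedule space scales like $n_S\cdot L$ (or $n_S\log\Delta\cdot L/\Delta$ after your reduction), while the best possible failure exponent scales like $n_R$ times a per-receiver quantity that is at most $\Theta(\Delta\log n)$ only when almost all neighborhoods are uncovered --- which is false for good schedules. No tightening of constants fixes this; the construction itself must change.

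The paper escapes both problems by making the two sides extremely asymmetric: only $\eta=n^{0.12}$ senders versus $m=n^{0.96}$ receivers (with independent edges of probability $\Delta/(2\eta)$, plus a separate Chernoff argument for the degree bound). This caps the number of short schedules at $2^{\eta^3}=2^{n^{0.36}}$, and the per-schedule analysis is also different from yours: instead of bounding the covered fraction of all neighborhoods, it uses a potential argument ($\Phi(s)=\sum_{t\in T_s}1/N(t)$, whose total over all senders is at most $L$) to exhibit a single non-isolated sender $s^*$ that transmits only in high-contention rounds, and then shows via FKG that each receiver independently is adjacent to $s^*$ yet blocked in all of $s^*$'s rounds with probability at least $\eta^{-2}$. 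With $\eta^{8}$ independent receivers this gives $P(\sigma)\le e^{-\eta^{6}}=e^{-n^{0.72}}$, which comfortably beats the $2^{n^{0.36}}$ union bound. If you want to salvage your approach, you would need to adopt a comparably lopsided sender/receiver ratio and replace the global counting of $C(\sigma)$ with an argument that isolates a specific hard-to-deliver sender.
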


The rest of this subsection is devoted to proving this lemma. As in the previous subsection, our proof uses techniques similar to those of \cite{ABLP91, ABLP89, ABLP92} and utilizes the probabilistic method~\cite{AS00} to show the existence of the network $G$ mentioned in the \Cref{lem:Ack_Schedules_LB}.

First, we fix an arbitrary $n$ and a $\Delta \in [20 \log n, n^{0.1}]$ and let $\eta= n^{0.12}$ and $m = \eta^8 = n^{0.96}$. Next, we present a probability distribution over a particular family $\mathcal{G}$ of bipartite networks. The common structure of this graph family $\mathcal{G}$ is as follows. All networks of $\mathcal{G}$ have a fixed set of nodes $V$. Moreover, $V$ is partitioned into two nonempty disjoint sets $S$ and $R$, which are respectively the set of senders and the set of receivers. We have $|S|=\eta$ and $|R|=m$. The total number of nodes in these two sets is $\eta+m = n^{0.12}+n^{0.96}$. We adjust the number of nodes to exactly $n$ by adding enough isolated senders to the graph. Instead of defining the probability mass distribution of these graphs we describe the process that samples networks from $\mathcal{G}$. A random sample network is simply created by independently putting an edge between any $s \in S$ and $r \in R$ with probability $\frac{\Delta}{2\eta}$. Given a random network from this distribution we first show that with high probability the maximum receiver degree is at most $\Delta$, as desired. 

\begin{lemma} \label{lem:degrees}For a random sample graph $G \in \mathcal{G}$, with probability at least $1-\frac{1}{n^2}$, the degree of any receiver node $r \in R$ is at most $\Delta$.
\end{lemma}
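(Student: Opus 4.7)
The plan is to use a standard Chernoff bound followed by a union bound over the $m$ receivers. The key observation is that the expected receiver degree is exactly half of the threshold $\Delta$, so a Chernoff tail bound gives exponential decay in $\Delta$, and the assumption $\Delta \geq 20 \log n$ makes this decay strong enough to absorb both the union bound over receivers and the desired $n^{-2}$ error probability.

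More concretely, I would first fix an arbitrary receiver $r \in R$. Its degree $d_r$ is the sum of $\eta$ independent Bernoulli random variables, each with parameter $p = \frac{\Delta}{2\eta}$, so $\mathbb{E}[d_r] = \eta p = \frac{\Delta}{2}$. Since we want to bound the probability that $d_r > \Delta$, this is exactly the event that $d_r$ exceeds twice its expectation. Applying the multiplicative Chernoff bound in the form $\Pr[X > (1+\delta)\mu] \leq \exp(-\delta^2 \mu / 3)$ with $\delta = 1$ yields
\[
\Pr\bigl[d_r > \Delta\bigr] \;\leq\; \exp\!\left(-\tfrac{\Delta}{6}\right).
\]

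Next, I would plug in the hypothesis $\Delta \geq 20 \log n$, which gives $\exp(-\Delta/6) \leq \exp(-\tfrac{20}{6} \log n) \leq n^{-3}$ for large enough $n$. A union bound over all $|R| = m = n^{0.96}$ receivers then gives
\[
\Pr\bigl[\exists\, r \in R : d_r > \Delta\bigr] \;\leq\; m \cdot n^{-3} \;\leq\; n^{0.96 - 3} \;\leq\; \tfrac{1}{n^2},
\]
as required.

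I do not expect any real obstacle here: every step is a routine application of a textbook concentration bound, and the constants in the hypotheses ($\Delta \geq 20 \log n$, $m = n^{0.96}$) have clearly been chosen precisely so that this Chernoff-plus-union-bound calculation goes through with room to spare. The only minor care needed is to check that $n$ is large enough that the arithmetic $\tfrac{20}{6} \log n \geq 3 \log n$ holds and that the Chernoff tail bound in the stated form is applicable, both of which are immediate.
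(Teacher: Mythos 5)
Your proof is correct and follows essentially the same route as the paper's: compute $\mathbb{E}[d_r]=\Delta/2$, apply a multiplicative Chernoff bound to get $e^{-\Delta/6}$, then use $\Delta\geq 20\log n$ and a union bound over the $m=n^{0.96}$ receivers to land below $n^{-2}$. No gaps.
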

\begin{proof} For each $r \in R$, let $X_G(r)$ denote the degree of node $r$ in random sample graph $G$. Then, $\mathbb{E}[X_G(r)] = \eta \cdot \frac{\Delta}{2\eta} = \frac{\Delta}{2}$. Moreover, since edges are added independently, we can use a Chernoff bound and obtain that $\Pr[X_G(r) \geq \Delta] \leq e^{-\frac{\Delta}{6}}$. Using a union bound over all choices of receiver node $r$, and noting that $\Delta \geq 20\log n$, we get that 
\begin{eqnarray}\Pr [\exists r \in R \; s.t.\; X_G(r) \geq \Delta] \leq&& \eta^{8} \cdot e^{-\frac{\Delta}{6}} = e^{8\log \eta -\frac{\Delta}{6}} = e^{0.96 \log n -\frac{\Delta}{6}} \nonumber \\
<&& e^{0.96 \log n -3 \log n} \leq e^{-2\log n} =\frac{1}{n^2}\nonumber
\end{eqnarray}
\end{proof}

\medskip
Now, we study the behavior of transmission schedules over random graphs drawn from $\mathcal{G}$. For each transmission schedule $\sigma$, call $\sigma$ \emph{short} if $L(\sigma) < \frac{\Delta\log n}{100}$. Moreover, for any fixed short transmission schedule $\sigma$, let $P(\sigma)$ be the probability that $\sigma$ covers a random graph $G \in \mathcal{G}$. Using a union bound, we can infer that for a random graph $G \in \mathcal{G}$, the probability that there exists a short transmission schedule $\sigma$ that covers $G$ is at most sum of the $P(\sigma)$-s, when $\sigma$ ranges over all the short transmission schedules. Let us call this probability \emph{the total coverage probability}. In order to prove the lower bound, we show \Cref{lem:probs} about \emph{the total coverage probability}.  Note, that given Lemmas \ref{lem:degrees} and \ref{lem:probs}, using the probabilistic method~\cite{AS00}, we can infer that there exists a network $G \in \mathcal{G}$ such that $G$ has maximum receiver degree of at most $\Delta$ and no short transmission schedule covers $G$. This completes the proof of \Cref{lem:Ack_Schedules_LB}.

\begin{lemma} \label{lem:probs} $\sum_{\sigma \; s.t. L(\sigma) <\frac{\Delta\log n}{100}} P(\sigma) \leq e^{-\sqrt{n}} \ll e^{-2\log n}=\frac{1}{n^2}$.
\end{lemma}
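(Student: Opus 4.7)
The plan is to split the bound into three layered reductions: from a sum over schedules to a single receiver, from a single receiver to a single ``culprit'' sender, and from a single sender to an elementary geometric-mean estimate. Since the neighborhoods $\mathcal{N}_{G}(u)$ for the $m=n^{0.96}$ receivers $u\in R$ are sampled independently, we have $P(\sigma)=\pi(\sigma)^{m}$, where $\pi(\sigma):=\Pr[u\text{ is covered by }\sigma]$ for any fixed receiver $u$. The number of short schedules is at most $2^{\eta L_{\max}}\le 2^{n^{0.23}}$, so it suffices to establish the following key bound: for every short $\sigma$ and some absolute constant $c<\tfrac12$, $1-\pi(\sigma)\ge n^{-c}$. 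Once this is in hand,
\[
\sum_{\sigma}P(\sigma)\;\le\;2^{\eta L_{\max}}\cdot\pi(\sigma)^{m}\;\le\;\exp\!\bigl(n^{0.23}\ln 2-(1-\pi)\,m\bigr)\;\le\;\exp\!\bigl(n^{0.23}-n^{0.96-c}\bigr)\;\le\;e^{-\sqrt{n}}
\]
for all sufficiently large $n$, since the exponent $0.96-c>0.5$.

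For the single-sender reduction, write $T_{v}:=\{r:v\in\sigma_{r}\}$, $t_{r}:=|\sigma_{r}|$, and $q_{r}:=(1-p)^{t_{r}-1}$. For any sender $v\in S$, the event ``$v\in\mathcal{N}_{G}(u)$ and $v$ is never heard by $u$'' already certifies non-coverage of $u$, so $1-\pi(\sigma)\ge p\cdot\Pr[v\text{ never heard}\mid v\in\mathcal{N}_{G}(u)]$. Conditioned on $v\in\mathcal{N}_{G}(u)$, which is independent of the remaining indicators $\{\mathbf{1}[v'\in\mathcal{N}_{G}(u)]:v'\ne v\}$, the per-round silencing event $E_{r}=\{(\sigma_{r}\setminus\{v\})\cap\mathcal{N}_{G}(u)\ne\emptyset\}$ is an increasing function of those independent Bernoullis. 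The FKG inequality therefore gives
\[
\Pr[v\text{ never heard}\mid v\in\mathcal{N}_{G}(u)]\;=\;\Pr\Bigl[\bigcap_{r\in T_{v}}E_{r}\Bigr]\;\ge\;\prod_{r\in T_{v}}\Pr[E_{r}]\;=\;\prod_{r\in T_{v}}(1-q_{r}),
\]
so $1-\pi(\sigma)\ge p\cdot\max_{v\in S}\prod_{r\in T_{v}}(1-q_{r})$.

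The last step lower-bounds this maximum by an AM-GM / Jensen argument. Put aside the at-most-$L$ senders that appear in some round with $t_{r}=1$ (each such sender is heard in that round anyway, so they cannot be culprits); call the remaining set $S^{\star}$, and note $|S^{\star}|\ge\eta-L\ge\eta/2$ since $L\le n^{0.1}\log n/100\ll n^{0.12}=\eta$ for large $n$. For every $v\in S^{\star}$, all rounds in $T_{v}$ satisfy $t_{r}\ge 2$, hence every factor $1-q_{r}$ is strictly positive. The max dominates the geometric mean:
\[
\max_{v\in S^{\star}}\prod_{r\in T_{v}}(1-q_{r})\;\ge\;\Bigl(\prod_{v\in S^{\star}}\prod_{r\in T_{v}}(1-q_{r})\Bigr)^{1/|S^{\star}|}\;=\;\Bigl(\prod_{r}(1-q_{r})^{s_{r}}\Bigr)^{1/|S^{\star}|},
\]
where $s_{r}:=|\sigma_{r}\cap S^{\star}|\le t_{r}$. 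Taking logarithms, the one-round quantity $g(t):=t\,\ln\!\tfrac{1}{1-(1-p)^{t-1}}$ is easily checked to be $O(1/p)$ uniformly over integers $t\ge 2$ (a routine calculus check: $g$ peaks around $t\approx 1/p$ with value $\asymp 1/p$). Hence $\sum_{r}s_{r}\ln\!\tfrac{1}{1-q_{r}}\le L\cdot O(1/p)=O(\eta\log n)$, and after dividing by $|S^{\star}|\ge\eta/2$ we get $\ln(\text{GM})\ge -O(\log n)$, i.e.\ the geometric mean is $\ge n^{-O(1)}$. Therefore $1-\pi(\sigma)\ge p\cdot n^{-O(1)}\ge n^{-c}$ with $c<\tfrac12$, as needed.

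The main technical obstacle will be the geometric-mean estimate in the last step: in particular, verifying the single-round calculus bound $g(t)=O(1/p)$ with a constant small enough that the resulting exponent $c$ stays strictly below $1/2$ throughout the admissible regime $\Delta\in[20\log n,n^{0.1}]$, and bookkeeping around the separately treated $t_{r}=1$ rounds so that the restriction to $S^{\star}$ only costs a factor of $2$ in $|S^{\star}|$. The FKG step is routine once one correctly conditions on $v\in\mathcal{N}_{G}(u)$, and the factorization $P(\sigma)=\pi(\sigma)^{m}$ follows directly from the independent product construction of $\mathcal{G}$.
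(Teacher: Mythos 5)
Your proposal is correct and follows essentially the same route as the paper's proof: a union bound over the at most $2^{\eta^3}$ short schedules, a reduction to the event that some sender is a $G$-neighbor of a fixed receiver yet is never heard by it (with FKG handling the intersection of the per-round blocking events), and an averaging argument over the at least $\eta/2$ senders that avoid singleton rounds (your $S^{\star}$ is exactly the paper's set of ``not-lost'' senders). The only cosmetic difference is that the paper locates the good sender by pigeonhole on the potential $\Phi(s)=\sum_{t\in T_s}1/N(t)$ and then converts potential to a probability via $\Pr[E_t]\ge e^{-\frac{4\eta}{\Delta}\cdot\frac{1}{N(t)}}$, whereas you convert first and bound the maximum by the geometric mean; the calculus check you flag does go through (the per-round cost $g(t)$ peaks near $t\approx 1/p$ at roughly $0.5/p$, giving $1-\pi(\sigma)\ge p\cdot n^{-0.02}\ge n^{-0.14}$, comfortably below the threshold you need).
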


\begin{proof}
Note, that the total number of distinct short transmission schedules is less than $2^{\eta^3}$. This is because in each round there are $2^\eta$ options for selecting which subset of senders transmits. Then, each short transmission schedule has at most $\frac{\Delta\log n}{100} < \eta^2$ rounds. Therefore, the total number of ways in which one can choose a short transmission schedule is less than $2^{\eta^3}$. In order to prove that the total coverage probability is $e^{-\sqrt{n}}$, since the total number of short transmission schedules is less than $2^{\eta^3}=2^{n^{0.36}}$, it is enough to show that for each short transmission schedule $\sigma$, $P(\sigma) \leq e^{-n^{0.72}}$ as then the summation would be at most $2^{n^{0.36}} \cdot e^{-n^{0.72}} \leq e^{n^{0.36}-n^{0.72}} < e^{n^{-0.5}} = e^{-\sqrt{n}}$. Thus, it remains to prove that for each short transmission schedule $\sigma$, $P(\sigma)\leq e^{-n^{0.72}}$.

Fix an arbitrary short transmission schedule $\sigma$. for each round $t$ of $\sigma$, let $N(t)$ denote the number of senders that transmit in round $t$. Also, call round $t$ \emph{isolator} if $N(t)=1$. For each sender $s\in S$, if there exists an isolator round in $\sigma$ where only $s$ transmits in that round, then call sender $s$ \emph{lost}. Since $L(\sigma) \leq \frac{\Delta \log n}{100} \leq \frac{n^{0.1} \log n}{100} < \frac{n^{0.12}}{2} = \frac{\eta}{2}$, there are at least $\frac{\eta}{2}$ senders that \emph{are not lost}. 

For each not-lost sender $s$, we define a potential function $\Phi(s) = \sum_{t\in T_s} \frac{1}{N(t)}$  where $T_s$ is the set of rounds in which $s$ transmits. Note, that for each round $t$, the total potential given to not-lost senders in that round is at most $N(t) \cdot \frac{1}{N(t)} = 1$. Hence, the total potential when summer-up over all rounds is at most $\frac{\Delta \log n}{100} = \frac{\Delta \log \eta}{12}$. Therefore, since there are at least $\frac{\eta}{2}$ not-lost senders, there exists a not-lost sender $s^*$ for which $\Phi(s^*) \leq \frac{\Delta \log \eta}{6\eta }$.

%We define a potential function $\Psi(t)$ which assigns each round $t$ in which $N(t)$ nodes transmit a potential of $\Psi(t) = - N(t) \log{\big(1 - (1-\frac{\Delta}{2\eta})^{N(t)-1}\big)}$. It can be easily seen that if $N(t) \geq 2$, then $\Psi(t) \leq \frac{6\eta}{\Delta}$. Hence, over all non-isolator rounds, we have 
%\begin{eqnarray} \sum_{t\in \sigma \; s.t. \; N(t) \geq 2} \Psi(t) &&\leq L(\sigma) \cdot \frac{2\eta}{\Delta}    \frac{\Delta \log n}{100} \cdot \frac{6\eta}{\Delta} = 0.06 \eta\log n \nonumber\\
%&& = \frac{0.06\eta \log \eta}{0.12} =  \frac{\eta \log \eta}{2} \nonumber
%\end{eqnarray}

%We also assign each not-lost sender $s$ a potential $\Phi(s)$ by setting $$\Phi(s) = \sum_{t\in \sigma \; s.t. \; s \in \sigma_t \vee N(t) \geq 2} \log{\big(1 - (1-\frac{\Delta}{2\eta})^{N(t)-1}\big)},$$ that is,
%Moreover, we split the potential of any non-isolator round $t$ equally between all $N(t)$ transmitters. Since there are at least $\frac{\eta}{2}$ not-lost senders and since the total potential over all non-isolator rounds is at most $\frac{\eta \log \eta}{2}$, there exists a not-lost sender $s\in S$ such that the total potential that $s$ receives over all rounds is at most $\log \eta$. 

Now we focus on sender $s^*$ and rounds $T_{s^*}$. We show that, for each receiver $r \in R$, there is a probability at least $\frac{1}{\eta^2}$ that node $r$ is a neighbor of $s$ and it does not receive message of $s^{*}$. First note that the probability that $r$ is a neighbor of $s$ is $\frac{\Delta}{\eta} > \frac{1}{\eta}$. Now for each $t \in T_{S^*}$, the probability that $r$ is connected to a sender other than $s^*$ that transmits in round $t$ is $1 - (1-\frac{\Delta}{2\eta})^{N(t)-1} \geq 1 - e^{-\frac{\Delta}{2\eta} \cdot (N(t)-1)} \geq 1 - e^{-\frac{\Delta}{4\eta} \cdot N(t)} \geq e^{-\frac{4\eta}{\Delta} \cdot \frac{1}{N(t)}}$. Thus, by the FKG inequality\cite[Chapter 6]{AS00}, the probability that this happens for every round $t \in T_{s^{*}}$ is at least $e^{-\sum_{t\in T_{s^*}} \frac{4\eta}{\Delta} \cdot \frac{1}{N(t)}} = e^{-\frac{4\eta}{\Delta} \cdot \Phi(s^*)}$. By choice of $s^*$, we know that this probability is greater than $e^{-\log \eta} = \frac{1}{\eta}$.
Hence, for each receiver $r$, the probability that $r$ is a neighbor of $s^{*}$ but never receives a message from $s$ is greater than $\frac{1}{\eta} \cdot \frac{1}{\eta} = \frac{1}{\eta^2}$. Given this, since edges of different receivers are chosen independently, the probability that there does not exist a receiver $r$ which satisfies above conditions is at most $(1-\frac{1}{\eta^2})^{\eta^8} \geq e^{-\eta^6}$. This shows that $P(\sigma)\leq e^{-\eta^6} = e^{-n^{0.72}}$ and thus completes the proof.

\end{proof}

\section{Lower Bounds in the Dual Graph Model}
In this section,\fullOnly{ we present two lower bounds for the dual graph model. We}\shortOnly{ we} show a lower bound of $\Omega(\Delta' \log n)$ on the progress time of centralized algorithms in the dual graph model with collision detection. This lower bound directly yields a lower bound with the same value on the acknowledgment time in the same model. Together, these two bounds show that the optimized decay protocol presented in section \ref{sec:upper} achieves almost optimal acknowledgment and progress bounds in the dual graph model. On the other hand, this result demonstrates a big gap between the progress bound in the two models, proving that progress is unavoidably harder (slower) in the dual graph model.\fullOnly{ Also, we show an unavoidable big gap in the dual graph model between the receive bound, the time by which all neighbors of an active process have received its message, and the acknowledgment bound, the time by which this process believes that those neighbors have received its message.}
%%%%%%%%%%%%%%%%%%%%%%%%%%%%%%%%%%%%%%%%%%%%%%%%%%%%%%%%%%%%%%%%%%%%%%%%%%%%%%%%%%%%%%%%%%%%%%%%%%%%%%%%%%%%%%%%%%%%%%%%%%%%%%%%%%%%%%%%%%%%%%%%%%%%%%%%%%%%%%%%%%%%%%%%%%%%%%%%%%%%%%%%%%%%%%%%%%%%%%%%%%%%%%%%%%%%%%%%%%%%%%%%%%%%%%%%%%%%%%%%%%%%%%%%%%%%%%%%%%%%%%%%%%%%%%%%%%%%%%%%%%%%%%%%%%%%%%%%%%
\fullOnly{\subsection{Lower Bound on the Progress Time}\label{subsec:Prog_Dual}

In the previous section, we proved a lower bound of $\Omega(\Delta \log n)$ for the acknowledgment time in the classical radio broadcast model. %More precisely, we showed that for each $n$ and each $\Delta \in  [20 \log n, n^{0.1}]$, there exists a bipartite network $H(n, \Delta)$ with $n$ nodes and maximum receiver degree at most $\Delta$ such that no algorithm can have acknowledgment bound of $o(\frac{\Delta \log n}{\log^2 \log n})$ rounds. 
Now, we use that result to show a lower bound of $\Omega(\Delta' \log n)$ on the progress time in the dual graph model.}
%----
\fullOnly{

To get there, we first need some definitions. Again, we will work with bipartite networks and in a one-shot setting. However, this time, these networks would be in the dual graph radio broadcast model and for each such network, we have two graphs $G$ and $G^\prime$. For each algorithm $A$ and each bipartite network in the dual graph model, we say that an execution $\alpha$ of $A$, is \emph{progressive} if throughout this execution, every receiver of that network receives at least one message. Note that an execution includes the choices of adversary about activating the unreliable links in each round. Now we are ready to see the main result of this section. %Also, we say that algorithm $A$ has \emph{possibility progress bound} of $f(n, \Delta^\prime)$ if for any network with $n$ nodes and receiver degree at most $\Delta^\prime$ and any set of adversary's choices, there exists at least one execution of $A$ with length at most $f(n, \Delta^\prime)$ that is progressive. Note that we do not put any restriction on the probability of such executions and their existence is enough for us. Thus, an algorithm might have possibility progress bound of $T$ even if its probability of succeeding in progress in that time is zero. 
}
%----
\shortOnly{
\begin{theorem} \label{thm:worst-prog-dual} In the dual graph model, for each $n$ and each $\Delta' \in [20 \log n, n^{\frac{1}{11}}]$, there exists a bipartite network $H^*(n, \Delta')$ with $n$ nodes and maximum receiver $G^\prime$-degree at most $\Delta'$ such that no algorithm can have progress bound of $o(\Delta' \log n)$ rounds. In the same network, no algorithm can have acknowledgment bound of $o(\Delta' \log n)$ rounds.  
\end{theorem}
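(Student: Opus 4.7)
The plan is to reduce from the classical acknowledgment lower bound above (\Cref{thm:Ack_LB}), using the adversary's control of unreliable edges to collapse the dual-graph progress requirement into a covering requirement on a sparse ``must-cover'' subgraph $G$ sitting inside the Ack-hard graph $G'$. I fix a universal adversary: in every round, activate an unreliable edge $(s,u)\in G'\setminus G$ iff at least one other $G'$-neighbor of $u$ also transmits that round. A short case analysis shows that under this strategy a receiver $u$ obtains a message in round $t$ exactly when a single $G'$-neighbor of $u$ transmits in $t$ \emph{and} that neighbor lies in $N_G(u)$. Hence ``$u$ makes progress by round $r'$'' reduces to ``in some round $t \leq r'$, $u$'s $G$-neighbor transmits alone inside $N_{G'}(u)$.''

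For the construction I reuse the random bipartite graph of \Cref{lem:Ack_Schedules_LB}: $|S|=\eta:=n^{0.12}$, $|R|=\eta^8$, each potential edge independently present with probability $\Delta'/(2\eta)$; call this random graph $G'$. Independently across receivers, set $N_G(r)=\{s_r\}$ where $s_r$ is chosen uniformly from $N_{G'}(r)$, and pad with isolated nodes to $n$ total; \Cref{lem:degrees} gives max receiver $G'$-degree $\leq \Delta'$ w.h.p. Now fix any short schedule $\sigma$ of length $<\Delta'\log n/100$. The \Cref{lem:probs} machinery produces a not-lost, low-potential sender $s^*=s^*(\sigma)\in S$ (depending only on $\sigma$, not on the random graph) such that, independently for each $r\in R$,
\[
\Pr\bigl[r \in N_{G'}(s^*)\ \text{and}\ r\ \text{does not receive}\ s^*\text{'s message in }\sigma\bigr] \ \geq\ \tfrac{1}{\eta^2}.
\]
Calling this set $\mathrm{Bad}$, a Chernoff bound on these independent indicators gives $|\mathrm{Bad}|\ge \eta^6/2$ except with probability $e^{-\Omega(\eta^6)}$. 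Conditioning on $|\mathrm{Bad}|\ge \eta^6/2$ and on $G'$, each $r\in\mathrm{Bad}$ independently has $\Pr[s_r=s^*]\ge 1/\Delta'$ (since $|N_{G'}(r)|\leq\Delta'$), so $\Pr[\forall r\in\mathrm{Bad}:\ s_r\neq s^*] \leq (1-1/\Delta')^{\eta^6/2} \leq e^{-\eta^6/(2\Delta')} \leq e^{-\Omega(n^{0.6})}$ using $\Delta'\leq n^{1/11}$. Whenever some $r\in\mathrm{Bad}$ has $s_r=s^*$, that $r$ fails to make progress under $\sigma$ against our adversary.

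The number of short schedules is at most $2^{\eta^3}=2^{n^{0.36}}$, so the total failure probability summed over all short schedules is $o(1)$; the probabilistic method then yields a fixed dual graph $H^*(n,\Delta')$ against which every short schedule, together with the adversary above, leaves some receiver without progress, giving the claimed $\Omega(\Delta'\log n)$ progress lower bound. The acknowledgment lower bound is immediate: the failed receiver's unique $G$-neighbor cannot legally output $ack$ until it has delivered its message to that receiver, so its acknowledgment is delayed by at least the same $\Omega(\Delta'\log n)$ rounds. The main obstacle is the Chernoff concentration: it requires the indicators ``$r\in\mathrm{Bad}$'' to be independent across receivers, which rests on $s^*(\sigma)$ being determined by $\sigma$ alone (never by any individual receiver's row of $G'$-edges); this is exactly what upgrades \Cref{lem:probs}'s ``at least one bad receiver'' conclusion to ``$\Omega(\eta^6)$ bad receivers,'' which is what lets the random $G$-edge choice land on an uncovered pair with overwhelming probability.
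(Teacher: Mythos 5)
Your proof is correct, and it reaches the theorem by a genuinely different construction than the paper, even though both arguments share the same engine (the random bipartite graph of Lemma~\ref{lem:Ack_Schedules_LB}, the potential argument isolating a low-potential not-lost sender $s^*(\sigma)$, and the same collision-forcing adversary). The paper proceeds modularly: Lemma~\ref{lem:trans} gives a black-box reduction in which each receiver $u$ of a classical hard instance $H$ is split into $d_H(u)$ proxies, each reliably attached to exactly one associate of $u$ and unreliably attached to all of them, so that ``every proxy makes progress'' is \emph{deterministically equivalent} to ``$\sigma$ covers $H$''; Theorem~\ref{thm:Ack_LB} is then invoked as a black box. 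You instead build the dual graph directly---$G'$ is the random graph itself and $G$ keeps one uniformly random reliable edge per receiver---and pay for the loss of the exact correspondence by opening up Lemma~\ref{lem:probs} and strengthening its conclusion from ``at least one uncovered $(s^*,r)$ pair'' to ``$\Omega(\eta^6)$ uncovered pairs'' via Chernoff, so that the independent random reliable-edge choices hit an uncovered pair with overwhelming probability. The strengthening is legitimate: $s^*(\sigma)$ is indeed determined by $\sigma$ alone, and the per-receiver indicators are functions of independent edge rows, exactly as the paper's own union-bound step already exploits. What each route buys: the paper's reduction is cleaner and reusable (it is also what powers Theorem~\ref{thm:worst-prog-gap}), but it multiplies the node count by $\Delta'$, which is precisely where the $\Delta'\le n^{1/11}$ cap comes from; your direct construction avoids that blow-up and would in principle extend to $\Delta'$ up to $n^{0.1}$, at the cost of a non-black-box argument. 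Two cosmetic points: when you lower-bound $\Pr[s_r=s^*]$ by $1/\Delta'$ you should condition on the degree event of Lemma~\ref{lem:degrees} (or just use the unconditional bound $1/\eta$, which still suffices), and your final acknowledgment claim should be phrased, as in the paper, via property (3) of the local broadcast definition, which forbids $s^*$ from acknowledging while a reliable neighbor has surely not received its message.
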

}
%----
\fullOnly{
\begin{theorem} \label{thm:worst-prog-dual} In the dual graph model, for each $n_1$ and each $\Delta'_1 \in [20 \log n_1, n_1^{\frac{1}{11}}]$, there exists a bipartite network $H^*(n_1, \Delta'_1)$ with $n_1$ nodes and maximum receiver $G^\prime$-degree at most $\Delta'_1$ such that no algorithm can have progress bound of $o(\Delta' \log n_1)$ rounds.
\end{theorem}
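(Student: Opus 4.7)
The plan is to reduce this lower bound to the classical acknowledgement lower bound of \Cref{lem:Ack_Schedules_LB} by exhibiting $H^*$ as a blow-up of the extremal classical network. Let $G^*$ be the bipartite network guaranteed by \Cref{lem:Ack_Schedules_LB} on $n_0$ nodes with maximum receiver degree $\Delta_0 = \Delta'_1$, where $n_0$ is picked so that the total size of $H^*$ below is $\Theta(n_1)$. Let $S^{*}, R^{*}$ be its senders and receivers. I would build $H^* = (G, G')$ on nodes $S^{*} \cup R^{**}$, where $R^{**} = \{r_v : r \in R^{*},\, v \in N_{G^*}(r)\}$ splits each original receiver $r$ into one copy per $G^*$-neighbor; each $r_v$ gets a single reliable edge to $v$ (so $N_G(r_v) = \{v\}$) and $G'$-edges to the whole $N_{G^*}(r)$. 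Thus the maximum receiver $G'$-degree of $H^*$ is $\Delta_0 = \Delta'_1$, and in the one-shot setting every $r_v$ has the active $G$-neighbor $v$, so the progress hypothesis is in force.

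Against this $H^*$ I would analyse the ``greedy blocker'' adversary, which in each round activates every unreliable edge $\{r_v,w\}$ for which $w \neq v$ is currently transmitting. A three-case check on which $G$-neighbors of $r_v$ transmit in a round $\tau$ shows that $|B_{r_v,\tau}| = 1$ under this adversary if and only if $v \in \sigma_\tau$ and $\sigma_\tau \cap N_{G^*}(r) = \{v\}$---precisely the classical isolator condition for the pair $(r,v)$ in $G^*$. Consequently, any execution that achieves progress for every $r_v \in R^{**}$ within $T$ rounds induces a transmission schedule $\sigma_1,\dots,\sigma_T \subseteq S^{*}$ that covers $G^*$ in the sense of \Cref{lem:Ack_Schedules_LB}, and that lemma then forces $T \geq \Delta_0 \log n_0 / 100 = \Omega(\Delta'_1 \log n_1)$. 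Because \Cref{lem:Ack_Schedules_LB} is an impossibility statement about arbitrary schedules, the argument applies to any (possibly randomised, centralised) algorithm and to the collision-detection model, since the guaranteed-reception condition at $r_v$ does not depend on collision-detection information.

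The main obstacle will be parameter bookkeeping: the blow-up multiplies $|R^{*}| \approx n_0^{0.96}$ by up to $\Delta_0 \leq n_0^{0.1}$, so $n_1 = \Theta(n_0^{1.06})$. I will need to check that the hypothesis $\Delta'_1 \in [20 \log n_1,\, n_1^{1/11}]$ translates into an admissible $\Delta_0 \in [20 \log n_0,\, n_0^{0.1}]$ for \Cref{lem:Ack_Schedules_LB}: the relation $\log n_1 = \Theta(\log n_0)$ handles the lower end, while $n_1^{1/11} \approx n_0^{1.06/11} < n_0^{0.1}$ handles the upper end, which is exactly why the theorem statement uses the exponent $1/11$ rather than $1/10$. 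Once this arithmetic is done, the clean equivalence between progress in $H^*$ and covering in $G^*$ delivers the claimed $\Omega(\Delta'_1 \log n_1)$ bound.
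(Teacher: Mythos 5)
Your construction of $H^*$ is exactly the paper's $Dual(\cdot)$ transformation (each receiver split into per-neighbor proxies, one reliable edge each, $G'$-edges to all associates), and the overall plan---reduce progress in $H^*$ to covering of the extremal classical network of Lemma~\ref{lem:Ack_Schedules_LB}---is the paper's. But your adversary is specified incorrectly, and the error breaks the reduction. With the ``greedy blocker'' that activates every unreliable edge $\{r_v,w\}$ whose endpoint $w\neq v$ is transmitting, the reach set at $r_v$ contains every transmitting member of $N_{G^*}(r)$ (the reliable edge to $v$ is always present, and you add the unreliable edges to all the other transmitters), so $B_{r_v,\tau}=\sigma_\tau\cap N_{G^*}(r)$ and hence $|B_{r_v,\tau}|=1$ iff $|\sigma_\tau\cap N_{G^*}(r)|=1$---\emph{not} iff $\sigma_\tau\cap N_{G^*}(r)=\{v\}$. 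In particular, if some $w\neq v$ is the unique transmitting neighbor of $r$ in a round, then $r_v$ receives $w$'s message over a downgraded edge, and that already counts as progress for $r_v$ (the progress bound only asks for \emph{some} active message). So under your adversary all proxies of $r$ make progress as soon as \emph{any} neighbor of $r$ is isolated; that is the classical hitting/progress condition on $G^*$, which Decay satisfies in $O(\log\Delta'\log n)$ rounds, so your argument cannot force the covering condition and cannot yield a bound stronger than $\Omega(\log\Delta'\log n)$.

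The fix is the adversary the paper actually uses: if \emph{exactly one} $G'$-neighbor of $r_v$ transmits, activate \emph{only} the reliable edge of $r_v$ (so a lone transmitter $w\neq v$ gives $B_{r_v,\tau}=\emptyset$, i.e., silence); otherwise activate \emph{all} $G'$-edges of $r_v$ (giving silence or a collision). Then $r_v$ receives a message in round $\tau$ iff $\sigma_\tau\cap N_{G^*}(r)=\{v\}$, progress for every proxy is equivalent to $\sigma$ covering $G^*$, and Lemma~\ref{lem:Ack_Schedules_LB} delivers the $\Omega(\Delta'_1\log n_1)$ bound. Your parameter bookkeeping (the blow-up $n_1=\Theta(n_0^{0.96}\Delta_0)$, the translation of the admissible range for $\Delta'_1$, and the observation that an impossibility for all schedules covers randomized, centralized, and collision-detection-equipped algorithms) matches the paper and is fine.
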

}
%----
\begin{proof}[Proof Outline]In order to prove this lower bound, in Lemma \ref{lem:trans}, we show a reduction from acknowledgment in the bipartite networks of the classical model to the progress in the bipartite networks of the dual graph model. In particular, this means that if there exists an algorithm with progress bound of $o(\Delta^\prime \log n)$ in the dual graph model, then for any bipartite network $H$ in the classical broadcast model, we have a transmission schedule $\sigma(H)$ with length $o(\Delta \log n)$ that covers $H$. Then, we use \Cref{thm:Ack_LB} to complete the lower bound.
\end{proof}

\begin{lemma}\label{lem:trans} Consider arbitrary $n_2$ and $\Delta_2$ and let $n_1 = n_2 \Delta_2$ and $\Delta'_1 = \Delta_2$. Suppose that in the dual graph model, for each bipartite network with $n_1$ nodes and maximum receiver $G'$-degree $\Delta'_1$, there exists a local broadcast algorithm $A$ with progress bound of at most $f(n_1, \Delta_1^\prime)$. Then, for each bipartite network $H$ with $n_2$ nodes and maximum receiver degree $\Delta_2$ in the classical radio broadcast model, there exists a transmission schedule $\sigma(H)$ with length at most $f({n_2}{\Delta_2}, \Delta_2)$ that covers $H$.
\end{lemma}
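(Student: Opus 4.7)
The plan is to turn the hypothetical dual-graph progress algorithm $A$ into a static transmission schedule for $H$ via a direct reduction. Given a bipartite classical network $H=(S\cup R,E_H)$ with $|S|+|R|=n_2$ and maximum receiver degree $\Delta_2$, I would build a dual-graph network $(G,G')$ by splitting each $r\in R$ into one copy $r_s$ for every $s\in N_H(r)$. The reliable graph $G$ contains only the edges $(s,r_s)$; the graph $G'$ additionally contains $(s',r_s)$ for every $s'\in N_H(r)\setminus\{s\}$. Since $|S|+\sum_{r\in R}|N_H(r)|\le \Delta_2(|S|+|R|)= n_2\Delta_2 = n_1$, we may pad with isolated nodes so that the vertex count is exactly $n_1$; each $r_s$ has $G'$-degree $|N_H(r)|\le \Delta_2=\Delta'_1$, so the hypothesis on $A$ applies.

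The key step is the choice of adversary. Run $A$ on $(G,G')$ with $bcast$ inputs delivered to every sender in round~$1$, and in every round $t$ include the unreliable edge $(s',r_s)$ in the reach set iff both $s$ and $s'$ are among the transmitting senders $T_t$. A short case check then shows $B_{r_s,t}=T_t\cap N_H(r)$ in every round: when $s\notin T_t$ no unreliable edge incident on $r_s$ is active so $B_{r_s,t}=\emptyset$, and when $s\in T_t$ the active edges to $r_s$ are precisely those from transmitting members of $N_H(r)$. Hence $r_s$ outputs $rcv$ in round $t$ \emph{if and only if} $T_t\cap N_H(r)=\{s\}$, which is exactly the isolation event required of a covering schedule in $H$ for the pair $(r,s)$.

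Each $r_s$ has $s$ as an active $G$-neighbor and $G'$-contention at most $\Delta'_1$, so $A$'s progress guarantee gives a $rcv$ at $r_s$ within $f(n_1,\Delta'_1)$ rounds w.h.p. A union bound over the at most $n_1$ receiver copies (after standard boosting of $A$'s success probability if necessary) leaves positive probability that every $r_s$ progresses inside the window, so I fix a random seed and reach-set trace realising this joint event and set $\sigma_t := T_t$ for $t=1,\dots,f(n_1,\Delta'_1)$. By the equivalence just established, for every pair $(r,s)$ with $s\in N_H(r)$ some round $t$ of $\sigma$ satisfies $\sigma_t\cap N_H(r)=\{s\}$, so $\sigma$ covers $H$.

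The main obstacle is designing the adversary so that the dual-graph progress event at $r_s$ is exactly equivalent to a classical $H$-isolation event at $r$. Always activating the unreliable edges would let $r_s$ receive from a rogue neighbor $s'\neq s$ in rounds when $s$ is silent, while never activating them would let $r_s$ receive $s$'s message even when other $H$-neighbors of $r$ also transmit, i.e.\ in rounds that would cause a collision at $r$ in $H$. The conjunction rule ``activate iff both $s$ and $s'$ transmit'' threads this needle, and once it is in place the extraction of $\sigma$ is a routine probabilistic-method argument.
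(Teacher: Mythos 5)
Your proposal is correct and is essentially the paper's own reduction: the same construction (each receiver of $H$ split into per-sender proxies, one reliable edge each, all other $H$-neighbors attached unreliably) and the same extraction of the schedule from a successful execution. Your adversary rule (activate $(s',r_s)$ iff both $s$ and $s'$ transmit) is phrased differently from the paper's (activate only reliable edges when exactly one $G'$-neighbor transmits, all edges otherwise), but the two induce the identical reception condition $T_t\cap N_H(r)=\{s\}$, so the arguments coincide.
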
  
%----
\shortOnly{
\begin{proof}[Proof Sketch] Let $H$ be a network in the classical radio broadcast model with $n_2$ nodes and maximum receiver degree at most $\Delta_2$. We use algorithm $A$ to construct a transmission schedule $\sigma_H$ of length at most $f({n_2}{\Delta_2},\Delta_2)$ that covers $H$. We first construct a new bipartite network, \emph{Dual($H$)} = $(G, G')$, in the dual graph model with at most $n_1$ nodes and maximum receiver $G^\prime$-degree $\Delta'_1$. The set of sender nodes in the Dual($H$) is equal to that in $H$. For each receiver $u$ of $H$, let $d_{H}(u)$ be the degree of node $u$ in graph $H$. Let us call the senders that are adjacent to $u$ `the \emph{associates} of $u$'. In the network Dual($H$), we replace receiver $u$ with $d_{H}(u)$ receivers and we call these new receivers `the \emph{proxies} of $u$'. In graph $G$ of Dual($H$), we match proxies of $u$ with associates of $u$, i.e., we connect each proxy to exactly one associate and vice versa. In graph $G^\prime$ of Dual($H$), we connect all proxies of $u$ to all associates of $u$. It is easy to check that Dual($H$) has the desired size and maximum receiver degree.

Now we present a special adversary for the dual graph model. Later we construct transmission schedule $\sigma_H$ based on the behavior of algorithm $A$ in network Dual($H$) against this adversary. This special adversary activates the unreliable links using the following procedure. Consider round $r$ and receiver node $w$. (1) If exactly one $G^\prime$-neighbor of $w$ is transmitting, then the adversary activates only the links from $w$ to its $G$-neighbors, (2) otherwise, adversary activates all the links from $w$ to its $G^\prime$-neighbors.

We focus on the executions of algorithm $A$ on the network Dual($H$) against the above adversary. By assumption, there exists an execution $\alpha$ of $A$ with length at most $f(n_2 \Delta_2, \Delta_2)$ rounds such that in $\alpha$, every receiver receives at least one message. Let transmission schedule $\sigma_H$ be the transmission schedule of execution $\alpha$. Note that because of the above choice of adversary, in the execution $\alpha$, each receiver can receive messages only from its $G$-neighbors. Suppose that $w$ is a proxy of receiver $u$ of $H$. Then because of the construction of Dual($H$), each receiver node has exactly one $G$-neighbor and that neighbor is one of associates of $u$ (the one that is matched to $w$). Therefore, in execution $\alpha$, for each receiver $u$ of $H$, in union, the proxies of $u$ receive all the messages of associates of $u$. On the other hand, because of the choice of adversary, if in round $r$ of $\sigma$ a receiver $w$ receives a message, then using transmission schedule $\sigma_H$ in the classical radio broadcast model, $u$ receives the message of the same sender in round $r$ of $\sigma_H$. Therefore, using transmission schedule $\sigma_H$ in the classical broadcast model and in network $H$, every receiver receives messages of all of its associates. Hence, $\sigma_H$ covers $H$ and we are done with the proof of lemma.    
\end{proof}
}
\fullOnly{
\begin{proof} %Suppose that in the dual graph model and for each $n_1$ and $\Delta'_1$, and each bipartite network with $n_1$ nodes and maximum receiver $G''$-degree $\Delta'_1$, there exists a local broadcast algorithm $A$ for this network with progress bound of at most $f(n_1, \Delta_1^\prime)$ where $f(n_1, \Delta_1^\prime)$ is a nondecreasing function in $n_1$.
Consider an arbitrary $n_2$ and $\Delta_2$ and let $n_1 = n_2 \Delta_2$ and $\Delta'_1 = \Delta_2$. Suppose that in the dual graph model and for each bipartite network with $n_1$ nodes and maximum receiver $G'$-degree $\Delta'_1$, there exists a local broadcast algorithm $A$ for this network with progress bound of at most $f(n_1, \Delta_1^\prime)$. Let $H$ be a network in the classical radio broadcast model with $n_2$ nodes and maximum receiver degree at most $\Delta_2$. We show a transmission schedule $\sigma_H$ of length at most $f({n_2}{\Delta_2},\Delta_2)$ that covers $H$. 

For this, using network $H$, we first construct a special bipartite network in the dual graph model, \emph{Dual($H$)} = $(G, G')$ that has $n_1$ nodes and maximum receiver $G^\prime$-degree $\Delta'_1$. Then, by the above assumption, we know that there exists a local broadcast algorithm $A$ for this network with progress bound of at most $f(n_1, \Delta'_1) = f(n_2 \Delta_2, \Delta_2)$ rounds. We define transmission schedule $\sigma_H$ by emulating what this algorithm does on the network Dual($H$) and under certain choices of the adversary. Then, we argue why $\sigma_H$ covers $H$. 

The network Dual($H$) in the dual graph model is constructed as follows. The set of sender nodes in the Dual($H$) is exactly the same as those in $H$. Now for each receiver $u$ of $H$, let $d_{H}(u)$ be the degree of node $u$ in graph $H$. Also, let us call the senders that are adjacent to $u$ the \emph{associates} of $u$. Then, in the network Dual($H$), we replace receiver $u$ with $d_{H}(u)$ receivers and we call these new receivers the \emph{proxies} of $u$. Also, in graph $G$ of Dual($H$), we match proxies of $u$ with associates of $u$, i.e., we connect each proxy to exactly one associate and vice versa. In graph $G^\prime$ of Dual($H$), we connect all proxies of $u$ to all associates of $u$. Note that because of this construction, we have that the maximum degree of the receivers in $G^\prime$ is $\Delta_2$. Also, since each receiver is substituted by at most $\Delta_2$ receiver nodes, the total number of nodes mentioned so far in the Dual($H$) is at most $n_2 \Delta_2$. Without loss of generality, we can assume that the number of nodes in $Dual(H)$ is exactly $n_2 \Delta_2$. This is because we can simply adjust it by adding enough isolated senders. 

Now, we present a particular way of resolving the nondeterminism in the choices of adversary in activating the unreliable links for each round over Dual($H$). Later, we will study and emulate the algorithm $A$ under the assumption that the unreliable links are activated in this way. This method of resolving the nondeterminism is, in principle, trying to make the number of successful message deliveries as small as possible. More precisely, adversary activates the links using the following procedure. For each round $r$ and each receiver node $w$, we use these rules about the link activation: (1) if exactly one $G^\prime$-neighbor of $w$ is transmitting, then the adversary activates only the links from $w$ to its $G$-neighbors, (2) otherwise, adversary activates all the links from $w$ to its $G^\prime$-neighbors.

Now, we focus on the executions of algorithm $A$ on the network Dual($H$) and under the above method of resolving the nondeterminism. By the assumption that $A$ has progress time bound of $f(n_2 \Delta_2, \Delta_2)$ for network Dual($H$), there exists a progressive execution $\alpha$ of $A$ with length at most $f(n_2 \Delta_2, \Delta_2)$ rounds. Let transmission schedule $\sigma_H$ be the transmission schedule of execution $\alpha$. Note that in the execution $\alpha$, because of the way that we resolve collisions, each receiver can receive messages only from its $G$-neighbors. Suppose that $w$ is a proxy of receiver $u$ of $H$. Then because of the construction of Dual($H$), each receiver node has exactly one $G$-neighbor and that neighbor is one of associates of $u$ (the one that is matched to $w$). Therefore, in execution $\alpha$, for each receiver $u$ of $H$, in union, the proxies of $u$ receive all the messages of associates of $u$. Now, note that because of the presented method of resolving the nondeterminism, if in round $r$ of $\sigma$, a receiver $w$ receives a message, then using transmission schedule $\sigma_H$ in the classical radio broadcast model, $u$ receives the message of the same sender in round $r$ of $\sigma_H$. Therefore, using transmission schedule $\sigma_H$ in the classical broadcast model and in network $H$, every receiver receives messages of all of its associates. Hence, $\sigma_H$ covers $H$ and we are done with the proof of lemma.    
\end{proof}
\begin{proof}[Proof of \Cref{thm:worst-prog-dual}] The proof follows from Theorem \ref{thm:Ack_LB} and Lemma \ref{lem:trans}. Fix an arbitrary $n_1$ and $\Delta'_1 \in 20 \log n_1, n_1^{\frac{1}{11}}]$. Let $n_2 = \frac{n_1}{\Delta'_1}$ and $\Delta_2 = \Delta'_1$. By theorem \ref{thm:Ack_LB}, we know that in the classical radio broadcast model, there exists a bipartite network $H(n_2, \Delta_2)$ with $n_2$ nodes and maximum receiver degree at most $\Delta_2$ such that no transmission schedule with length of $o(\Delta_2 \log n_2)$ rounds can cover it. Then, by setting $f(n_1, \Delta_1) = \Theta (\Delta_1 \log n_1)$ in Lemma \ref{lem:trans}, we can conclude that there exists a bipartite network with $n_1$ nodes and maximum receiver $G'$-degree $\Delta'_1$ such that there does not exists a local broadcast algorithm for this network with progress bound of at most $f(n_1, \Delta_1^\prime)$. Calling this network $H^*(n_1, \Delta'_1)$ finishes the proof of this lemma.

 %implies that there does not exist any local broadcast algorithm for bipartite networks in the dual graph model with progress bound of at most $o(\frac{\Delta \log n}{(\log \log n)^2})$. 
\end{proof}
}
%----
\fullOnly{\begin{corollary} In the dual graph model, for each $n$ and each $\Delta' \in [20 \log n, \frac{n^{\frac{1}{11}}}{2}]$, there exists a bipartite network with $n$ nodes and maximum receiver $G^\prime$-degree at most $\Delta'$ such that for every $k \in [20 \log n, \Delta']$, no algorithm can have progress bound of $f_{prog}(k) = o(k \log n)$ rounds. 
\end{corollary}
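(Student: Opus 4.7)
The plan is to construct the witness network as a disjoint union of sub-networks supplied by \Cref{thm:worst-prog-dual}, one per dyadic scale of $k$, mirroring how \Cref{crl:Ack-LB} was obtained from \Cref{thm:Ack_LB}. Because these components share no edges in either $G$ or $G'$, a local broadcast algorithm executes independently on each, and therefore any progress lower bound established inside one component is inherited by the combined network.

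Concretely, for every power of two $k = 2^j$ with $20\log n \le k \le \Delta'$, I would invoke \Cref{thm:worst-prog-dual} with parameters $(n_k, k)$ to obtain a bipartite network $H^{*}(n_k, k)$ on $n_k$ nodes with maximum receiver $G'$-degree at most $k$ in which no algorithm attains progress $o(k \log n_k)$. The sizes $n_k$ must be selected to simultaneously satisfy three constraints: (i) $n_k^{1/11} \ge k$, so that the hypothesis of \Cref{thm:worst-prog-dual} is met at scale $k$; (ii) $\log n_k = \Theta(\log n)$, so that the progress lower bound $\Omega(k \log n_k)$ upgrades to $\Omega(k \log n)$; and (iii) $\sum_k n_k \le n$, so that the disjoint union, padded with isolated senders, realizes a single graph on exactly $n$ nodes. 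A workable choice is $n_k = \Theta(\max\{n^{10/11},\, k^{11}\})$: the first argument of the max delivers (ii) for small $k$, and the second delivers (i) for large $k$. The sum in (iii) is geometric in $k^{11}$ and dominated by $\Delta'^{11} \le n/2^{11}$, leaving ample room for the $O(\log n)$ small-$k$ components of size $n^{10/11}$.

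Given the combined network $H^{**}$ built in this way, I would then verify the corollary directly. For any $k \in [20 \log n, \Delta']$, set $k^{*} = 2^{\lfloor \log_2 k \rfloor}$, so that $k/2 \le k^{*} \le k$. Within the component $H^{*}(n_{k^{*}}, k^{*})$, \Cref{thm:worst-prog-dual} produces a process $u$ together with an active neighbor which together defeat any progress bound of $o(k^{*} \log n_{k^{*}}) = o(k \log n)$. Because every process inside that component has $G'$-contention at most $k^{*} \le k$, this shows that any local broadcast algorithm must have $f_{prog}(k) = \Omega(k \log n)$, as the corollary claims.

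The main technical obstacle is the tension between constraints (i) and (ii)+(iii): requirement (i) forces the top-scale component to have size at least $\Delta'^{11}$, and constraint (ii) still demands every smaller-scale component have size $n^{\Theta(1)}$, so uniform-size gadgets cannot fit. The $\max$-based choice of $n_k$ resolves this by scaling up only where necessary, and the factor $1/2$ in the hypothesis $\Delta' \le n^{1/11}/2$ is precisely what supplies the slack needed to keep the sum over $O(\log n)$ dyadic scales within the budget of $n$ nodes; bookkeeping the constants across this summation is the most delicate part of the argument.
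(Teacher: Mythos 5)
Your proposal takes essentially the same route as the paper, which also derives the corollary by taking a disjoint union of the hard networks $H^*(\cdot,k)$ from Theorem~\ref{thm:worst-prog-dual} across the contention scales $k$; in fact your sizing $n_k=\Theta(\max\{n^{10/11},k^{11}\})$ is more careful than the paper's one-line construction (which literally unions $H^*(\frac{n}{2},k)$ for every $k$ and would not fit in $n$ nodes as written). The only loose end is at the bottom of the range: rounding $k$ down to a power of two can produce $k^*<20\log n_{k^*}$, violating the hypothesis of Theorem~\ref{thm:worst-prog-dual}, but this is repaired by adding one component at scale exactly $20\log n$ to cover all $k\in[20\log n,40\log n]$, for which $\Omega(\log^2 n)=\Omega(k\log n)$ already suffices.
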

\begin{proof} The corollary follows from \ref{thm:worst-prog-dual} by considering the dual network graph that is derived from union of networks $H^*(\frac{n}{2}, k)$ as $k$ goes from $20 \log n$ to $\Delta'$.
\end{proof}

\begin{corollary}In the dual graph model, for each $n$ and each $\Delta' \in [20 \log n, \frac{n^{\frac{1}{11}}}{2}]$, there exists a bipartite network $H^*(n, \Delta')$ with $n$ nodes and maximum receiver $G^\prime$-degree at most $\Delta'$ such that no algorithm can have acknowledgment bound of $o(\Delta' \log n)$ rounds. 
\end{corollary}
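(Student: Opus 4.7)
The plan is to derive this corollary as an immediate consequence of Theorem~\ref{thm:worst-prog-dual} by exploiting the monotonicity relation $f_{prog}(k) \le f_{ack}(k)$ that holds for any local broadcast algorithm. The intuition is that requirement~(3) of the local broadcast specification forces a sender's $ack$ output to imply that every $G$-neighbor has already produced a $rcv$ output, so any upper bound on acknowledgment time on a given network automatically yields the same upper bound on progress time on that network.

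First I would pick an arbitrary $\Delta' \in [20\log n,\, n^{1/11}/2]$ and invoke Theorem~\ref{thm:worst-prog-dual} with $n_1 = n$ and $\Delta'_1 = \Delta'$ (the required condition $\Delta' \le n^{1/11}$ clearly holds). This yields the bipartite dual graph network $H^*(n,\Delta')$ with $n$ nodes and maximum receiver $G'$-degree at most $\Delta'$, on which no algorithm attains progress bound $o(\Delta'\log n)$.

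Next I would argue by contradiction. Suppose some local broadcast algorithm $A$ on $H^*(n,\Delta')$ satisfies $f_{ack}(\Delta') = o(\Delta'\log n)$. Consider the one-shot setting in which every sender of $H^*(n,\Delta')$ receives its $bcast$ input at round~$1$. Fix any receiver $u$ and any active sender neighbor $v \in N_G(u)$. By the acknowledgment bound, $v$ outputs its corresponding $ack$ by some round $r' \le 1 + f_{ack}(c'(v,1,r'))$ w.h.p. Since in the one-shot bipartite setting $c'(v,1,r') \le \Delta'$, we get $r' \le 1 + f_{ack}(\Delta') = o(\Delta'\log n)$. Requirement~(3) of the local broadcast specification then guarantees that $u$ has generated a $rcv(m)_u$ output by round $r'$ w.h.p. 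A union bound over the polynomially many (sender, receiver) pairs shows that every receiver with an active neighbor outputs a $rcv$ within $o(\Delta'\log n)$ rounds w.h.p., which is precisely a progress bound of $o(\Delta'\log n)$ on $H^*(n,\Delta')$, contradicting Theorem~\ref{thm:worst-prog-dual}.

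The argument has no deep step; the only subtlety is the bookkeeping needed to confirm that the contention parameters $c(u,\cdot)$ and $c'(v,\cdot)$ appearing in the progress and acknowledgment delay functions are both at most $\Delta'$ in the one-shot bipartite network, so that the asymptotic statements $o(\Delta'\log n)$ for the two bounds line up as claimed. I expect this routine verification, together with propagating the w.h.p.\ guarantees cleanly through the union bound, to be the main (and relatively minor) obstacle.
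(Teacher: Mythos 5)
Your proposal is correct and follows essentially the same route as the paper, which derives the corollary in one line from Theorem~\ref{thm:worst-prog-dual} together with the observation that the acknowledgment time always dominates the progress time; your write-up merely makes explicit (via requirement~(3) of the local broadcast specification) why an $o(\Delta'\log n)$ acknowledgment bound would yield an $o(\Delta'\log n)$ progress bound on $H^*(n,\Delta')$. The contention bookkeeping you flag is indeed routine and poses no obstacle.
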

\begin{proof} Proof follows immediately from \ref{thm:worst-prog-dual} and the fact that the acknowledgment time is greater than or equal to the progress time.
\end{proof}
}
\fullOnly{\subsection{The Intrinsic Gap Between the Receive and Acknowledgment Time Bounds}\label{subsec:rcv_ack_gap}
In \Cref{sec:upper}, we saw that the SPP protocol has a reception time bound of $f_{rcv}(k) = O(k \log(\Delta') \log n)$. In this section, we show that in the distributed setting, there is a relatively large gap between the time that the messages can be delivered in and the time needed for acknowledging them. More formally, we show the following.
\begin{lemma}In the dual graph model, for each $n_1$ and each $\Delta' \in [20 \log n_1, \frac{n_1^{\frac{1}{11}}}{2}]$, there exists a bipartite network $\mathcal{H}_{rcv}(n_1, \Delta'_1)$ with $n_1$ nodes and maximum receiver $G^\prime$-degree at most $\Delta'$ such that for any distributed algorithm, many senders have $c'(v, r) \leq 1$, but they can not acknowledge their packets in $o(\Delta'_1 \log n_1)$ rounds, i.e., $ f_{ack}(1) = \Omega (\Delta'_1 \log n_1)$.
\end{lemma}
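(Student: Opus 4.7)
The plan is to prove the bound by an indistinguishability argument that reduces to the dual-graph acknowledgment lower bound already established in the corollary following \Cref{thm:worst-prog-dual}. I take $\mathcal{H}_{rcv}(n_1,\Delta'_1)$ to be a disjoint union of many copies of the hard dual-graph instance $H^*(\cdot,\Delta'_1)$ from that corollary; each such component is a bipartite $\mathrm{Dual}(\cdot)$-network in which every sender is $G'$-adjacent only to proxy receivers and receivers never transmit. These two structural facts are the linchpin of the proof.

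Fix an arbitrary distributed local broadcast algorithm $A$. For each sender $v$, I consider two one-shot executions. In the \emph{all-active} execution $E_{\mathrm{all}}$ every sender receives a $bcast$ input at round $1$ and the adversary uses the worst-case link-activation policy from the proof of \Cref{lem:trans}. In the \emph{only-$v$-active} execution $E_v$ only $v$ receives a $bcast$ input; then each $u \in \mathcal{N}_G(v)$ has $v$ as its sole active $G'$-neighbor, so $c'(v,r) \leq 1$ throughout $E_v$, regardless of how the adversary activates unreliable edges.

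The core step is to note that in this bipartite network $v$ has no sender as a $G'$-neighbor and receivers never transmit, so the channel inputs delivered to $v$ are empty in both executions, independent of the adversary's choices. Hence the view of $v$ (its $bcast$ input together with a silent channel) is identically distributed in $E_{\mathrm{all}}$ and $E_v$, and because $A$ is distributed, the joint distribution of $v$'s transmission schedule and of the round on which $v$ outputs $ack$ is the same in both executions. Applied componentwise, the acknowledgment corollary to \Cref{thm:worst-prog-dual} tells us that in $E_{\mathrm{all}}$ a constant fraction of senders fail to output $ack$ before round $\Omega(\Delta'_1 \log n_1)$ with high probability; otherwise, via \Cref{lem:trans}, the transmission schedule produced by $A$ would furnish a covering schedule for the underlying classical hard instance of length $o(\Delta \log n)$, contradicting \Cref{thm:Ack_LB}. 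By indistinguishability, each such slow sender $v$ also fails to acknowledge within $o(\Delta'_1 \log n_1)$ rounds in $E_v$, and since $c'(v,r,r') \leq 1$ in $E_v$, the acknowledgment property $r'-r \leq f_{ack}(c'(v,r,r'))$ forces $f_{ack}(1) = \Omega(\Delta'_1 \log n_1)$.

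The main obstacle I anticipate is upgrading the single-instance lower bound on $E_{\mathrm{all}}$ into the ``many senders'' claim of the lemma. The disjoint-union construction is the cleanest way around this: each independent copy contributes at least one slow sender with high probability, and a union bound over the copies yields linearly many such senders simultaneously. A more minor subtlety is that the adversary may make different choices on the unreliable edges in $E_{\mathrm{all}}$ and $E_v$, but since every $G'$-neighbor of $v$ is a silent receiver, those choices cannot influence $v$'s view and the indistinguishability is unaffected.
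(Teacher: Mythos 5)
Your construction and your indistinguishability argument differ from the paper's, and the difference is exactly where the gap lies. You keep the network fixed (copies of the hard dual-graph instance) and vary the \emph{input} (all senders active in $E_{\mathrm{all}}$ versus only $v$ active in $E_v$), and you then claim $v$'s view is identical in the two executions because ``receivers never transmit.'' That is not a structural fact about the network; it is an unjustified restriction on the algorithm. In this model any process may transmit in any round, and a receiver's decision to transmit may depend on what it has received, which differs drastically between $E_{\mathrm{all}}$ and $E_v$. Since $v$'s $G'$-neighbors are exactly these receivers, what $v$ hears can differ between the two executions and the indistinguishability collapses. This is not a removable technicality: an algorithm in which receivers echo feedback after hearing a message would let a sender statistically detect that it is in the low-contention execution $E_v$ and acknowledge quickly there, while remaining slow in $E_{\mathrm{all}}$ --- where slowness is permitted, since $c'$ is large there. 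Your reduction therefore does not rule out a fast $f_{ack}(1)$. (A secondary weakness: your ``many slow senders with $c'\le 1$'' live in different executions $E_v$, one per $v$, whereas the lemma wants many such senders in a single setting; and the ``constant fraction of senders are slow in $E_{\mathrm{all}}$'' claim is stronger than what the cited corollary gives, namely one slow sender per component.)

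The paper avoids all of this by making the \emph{entire execution} identical in the two scenarios, not just one sender's marginal view. It takes the classical hard network $H$ of Theorem~\ref{thm:Ack_LB}, adds a perfect matching $M$ from the senders to fresh degree-one receivers, and forms two components with the \emph{same} $G'=H+M$ but different reliable subgraphs: $G_1=M$ in one component and $G_2=H+M$ in the other. The adversary activates every $G'$-edge in every round, and the indistinguishability is carried by the unknown assignment of processes to components. Then nothing any process does --- including receiver transmissions --- can separate the two cases; only the correctness obligation at acknowledgment time and the value of $c'$ differ. In the $G_2$-component, Lemma~\ref{lem:Ack_Schedules_LB} (no short covering schedule for $H$) forces acknowledgments to wait $\Omega(\Delta'_1\log n_1)$ rounds, and the $G_1$-component inherits this delay while every one of its senders simultaneously has $c'(v,r)\le 1$, which also yields the ``many senders'' clause in a single execution. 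To repair your proof you would essentially have to reintroduce this ``same $G'$, different reliable subgraph'' device in place of the input-varying reduction.
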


\begin{proof} Let $n_2= \lfloor \frac{n_1^{\frac{10}{11}}}{2}\rfloor$ and $\Delta_2 = \Delta'_1$. Then, let $H( n_2, \Delta_2)$  with size $n_2$ and maximum degree $\Delta_2$ be the bipartite network in the classic model that we showed its existence in \Cref{thm:Ack_LB}. Recall that in $H(n_2, \Delta_2)$, we have $\eta = (n_2)^{\,0.1}$ sender processes. Now, we first introduce two simple graphs using $H(n_2, \Delta_2)$. Add $\eta$ receivers to the receiver side of $H(n_2, \Delta_2)$, call them \emph{new receivers}, and match these new receivers to the senders. Let us call the matching graph itself $M$. Then, define $G' = H(n_2, \Delta_2)+M$, $G_1 = M$ and $G_2 = H(n_2, \Delta_2)+M$. Also, let $\mathcal{H}_{rcv}(n_1, \Delta'_1)$ be the dual graph network that is composed of two components, one being the pair $(G_1, G')$ and the other being $(G_2,G')$. In each pair, the first element is the reliable part of the component and the second is the whole component. Note that the total number of nodes in $\mathcal{H}(n_1, \Delta'_1)$ is at most $n_1^{\frac{10}{11}} + n_1^{\frac{1}{11}}$ which is less than or equal to $n_1$ for large enough $n$. Without loss of generality, we can assume the number of nodes in $\mathcal{H}(n_1, \Delta'_1)$ is exactly $n_1$ by adding enough isolated nodes. 

Now note that the second component of $\mathcal{H}_{rcv}(n_1, \Delta'_1)$, which is the pair $(G_2,G')$, $G'$ is a super graph of $H(n_2, \Delta_2)$. Hence, Lemma \ref{lem:Ack_Schedules_LB}, for any algorithm, acknowledgment in the second component needs at least $\Omega(\Delta_2 \log(n_2)) = \Omega(\Delta'_1 \log n_1)$ rounds. On the other hand, since for every new receiver $u$ in the first component, we have $|\mathcal{N}_{G_1}(u)| = 1$, we know that for every sender $v$ in the first component, for any round $r$ of any algorithm, $c'(v, r) \leq 1$. Now consider an arbitrary subset $P$ of all processes with $|P|=\eta'$. As an adversary, we can map these processes into either the senders in the first component or the senders in the second component. Since processes don't know the mapping between the processes, if we resolve the nondeterminism by always activating all the edges, the processes can not distinguish between the aforementioned two cases of mapping. Hence, since acknowledgment in the second component takes at least $\Omega(\Delta'_1 \log n_1)$ rounds, it takes at least the same amount of time in the first component as well. Thus, this dual graph network satisfies all the desired properties for $\mathcal{H}_{rcv}(n_1, \Delta'_1)$ mentioned in the theorem statement and therefore, we are done with the proof.
\end{proof}
}

%%%%%%%%%%%%%%%%%%%%%%%%%%%%%%%%%%%%%%%%%%%%%%%%%%%%%%%%%%%%%%%%%%%%%%%%%%%%%%%%%%%%%%%%%%%%%%%%%%%%%%%%%%%%%%%%%%%%%%%%%%%%%%%%%%%%%%%%%%%%%%%%%%%%%%%%%%%%%%%%%%%%%%%%%%%%%%%%%%%%%%%%%%%%%%%%%%%%%%%%%%%%%%%%%%%%%%%%%%%%%%%%%%%%%%%%%%%%%%%%%%%%%%%%%%%%%%%%%%%%%%%%%%%%%%%%%%%%%%%%%%%%%%%%%%%%%%%%%%
%\subsection{Lower Bound on the Competitive Progress Time of Oblivious Algorithms}
%Here, we present the $O(c(u) \log c(u) \, \frac{\log(n)}{\log^2 \log n})$ lower bound.
%\mohsen{To be completed ...}
%%%%%%%%%%%%%%%%%%%%%%%%%%%%%%%%%%%%%%%%%%%%%%%%%%%%%%%%%%%%%%%%%%%%%%%%%%%%%%%%%%%%%%%%%%%%%%%%%%%%%%%%%%%%%%%%%%%%%%%%%%%%%%%%%%%%%%%%%%%%%%%%%%%%%%%%%%%%%%%%%%%%%%%%%%%%%%%%%%%%%%%%%%%%%%%%%%%%%%%%%%%%%%%%%%%%%%%%%%%%%%%%%%%%%%%%%%%%%%%%%%%%%%%%%%%%%%%%%%%%%%%%%%%%%%%%%%%%%%%%%%%%%%%%%%%%%%%%%%

\section{Centralized vs. Distributed Algorithms in the Dual Graph Model}

%\subsection{The Gap in the Worst Case Progress Time}
%In this section, we show that for every algorithm, there exists a broadcast configuration in the dual graph moedl in which it is possible to achieve an $O(1)$-round progress for all processes, but with at least a constant probability, at least one process has a progress of $\Omega(\frac{\Delta' \log n}{\log^2 \log n})$ rounds.
%

In this section, we show that there is a gap in power between
distributed and centralized algorithms in the dual graph model,
but not in the classical model---therefore highlighting
another difference between these two settings.
Specifically, we produce dual graph network graphs where centralized
algorithms achieve $O(1)$ progress while 
distributed algorithms have unavoidable slow progress.
In more detail,
our first result shows that distributed algorithms
will have {\em at least one process}
experience 
$\Omega(\Delta'\log{n})$
progress,
while the second result shows
the {\em average} progress is
$\Omega(\Delta')$.
Notice, such gaps do not exist in the classical model,
where our distributed algorithms from Section~\ref{sec:upper}
can guarantee fast progress in all networks.

\begin{theorem}\label{thm:worst-prog-gap}
For any $k$ and $\Delta'\in [20\log{k},k^{1/10}]$,
there exists a dual graph network  of size $n$, $k < n \leq k^4$,
with maximum receiver degree $\Delta'$,
such that the optimal centralized local broadcast
algorithm achieves a progress bound of $O(1)$ in this network
while every distributed
local broadcast algorithm has a progress bound of 
$\Omega(\Delta'\log{n})$.
\end{theorem}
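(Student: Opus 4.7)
My plan is to construct a small network in which a centralized algorithm exploits knowledge of the bijection $proc$ but no distributed algorithm can. I would designate one node as a receiver $u$, one as a ``reliable'' sender $s^*$, and $\Delta'-1$ as ``decoy'' senders $d_1,\dots,d_{\Delta'-1}$; the remaining $n-\Delta'-1$ nodes are isolated. The reliable graph $G$ contains only the edge $\{u,s^*\}$, while $G'$ adds $\{u,d_j\}$ for every decoy, so $u$ has $G'$-degree exactly $\Delta'$. In the one-shot setting all $\Delta'$ senders receive a $bcast$ input at round $1$, so the contention at $u$ equals $\Delta'$ throughout the execution.

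The centralized upper bound is essentially immediate: the algorithm knows which process sits at $s^*$ and instructs only that process to transmit in round $1$. Since $\{u,s^*\}\in E$ must appear in every adversarial reach set, and no other node transmits, $u$ receives $s^*$'s message in round $1$, yielding $f_{prog}(\Delta')=1=O(1)$.

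For the distributed lower bound I would reuse the restrictive adversary from the proof of \Cref{lem:trans}: in each round, if exactly one $G'$-neighbor of $u$ transmits, activate only the $G$-edge $\{u,s^*\}$; otherwise activate all $G'$-edges of $u$. Under this adversary, $u$ receives a message in round $t$ iff $s^*$ is the \emph{unique} transmitter among the $\Delta'$ senders --- a lone decoy delivers silence (its edge is excluded from the reach set), and two or more transmitters cause a collision. So progress for $u$ reduces to isolating $s^*$ among $\Delta'$ processes whose identifiers are chosen adversarially.

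To finish I would run an averaging argument over identifier assignments. Fix a distributed algorithm $A$ and an identifier set $S\subseteq\{1,\dots,n\}$ of size $\Delta'$. Because the graph is bipartite and no sender has a sender-neighbor, no sender receives any message before $u$ does, so sender transmissions during the analyzed interval depend only on identifiers and private randomness. Let $\pi_i(t)$ denote the probability that the process with identifier $i\in S$ is the unique transmitter among $S$ in round $t$; these events are disjoint across $i$, so $\sum_{i\in S}\pi_i(t)\le 1$ and hence $\sum_{i\in S}\sum_{t=1}^T\pi_i(t)\le T$. The adversary places $s^*$ at the identifier $i^*\in S$ minimizing $\sum_{t=1}^T\pi_{i^*}(t)$, ensuring this sum is at most $T/\Delta'$. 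A standard independent-Bernoulli concentration bound requires $\sum_t\pi_{i^*}(t)=\Omega(\log n)$ for progress within $T$ rounds to hold with probability at least $1-1/n$, so $T=\Omega(\Delta'\log n)$. The main technical subtlety is justifying the oblivious treatment of adaptive algorithms, and the bipartite construction resolves this precisely because senders receive no feedback before $u$ does.
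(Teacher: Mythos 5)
Your construction is genuinely different from the paper's: the paper reuses the multi-receiver network of Theorem~\ref{thm:Ack_LB} (for which \emph{no} transmission schedule of length $o(\Delta\log n)$ covers the graph, a zero-error statement), pushes it through the $Dual(\cdot)$ reduction of Lemma~\ref{lem:trans}, and then builds many copies in which all but one reliable edge is secretly downgraded, so that a distributed algorithm cannot tell which edge it must serve. You instead use a single star and try to prove, from scratch, an information-theoretic lower bound against \emph{randomized} algorithms for isolating an adversarially designated sender $s^*$ among $\Delta'$ senders. Your centralized $O(1)$ bound and the reduction of progress at $u$ to ``$s^*$ transmits alone'' (under the Lemma~\ref{lem:trans}-style adversary) are fine. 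The problem is the last step, which is where all the content lives and where your argument breaks.

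Concretely: the disjointness of the events ``$i$ is the unique transmitter in round $t$'' over $i\in S$ gives $\sum_{t\le T}\pi_{i^*}(t)\le T/\Delta'$ for the worst $i^*$, and the union bound then gives $\Pr[\text{$u$ receives by round }T]\le T/\Delta'$ --- which yields only $T=\Omega(\Delta')$, not $\Omega(\Delta'\log n)$. To extract the extra $\log n$ you invoke an ``independent-Bernoulli concentration bound,'' but the isolation events for $i^*$ across different rounds are \emph{not} independent: a sender may determine its entire transmission schedule from a single coin flip at round $1$, so $\Pr[\text{$i^*$ never isolated}]$ need not be lower-bounded by $\prod_t(1-\pi_{i^*}(t))$. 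What you actually need is a statement of the form ``if every $i\in S$ is isolated with probability $\ge 1-1/n$ then $T=\Omega(\Delta'\log n)$,'' which is a randomized, single-receiver analogue of Lemma~\ref{lem:Ack_Schedules_LB}; note that the paper's proof of that lemma leans heavily on having $n^{0.96}$ receivers precisely so that a fixed schedule fails with probability $1-e^{-n^{0.72}}$ and the probabilistic method can beat all $2^{n^{0.36}}$ schedules --- with a single receiver that amplification is unavailable, and a different (nontrivial) argument is required. A secondary gap: your claim that senders' behavior depends only on identifiers and private coins ignores that the receiver $u$ may itself transmit, and the reliable edge $\{u,s^*\}$ forces delivery to $s^*$; unless you constrain the adversary to activate \emph{all} $G'$-edges in rounds where $u$ transmits, $u$ can leak to $s^*$ that it is the distinguished sender, after which $s^*$ broadcasts alone and progress is $O(1)$.
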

\shortOnly{Our proof argument leverages
the bipartite network proven to exist in Lemma~\ref{lem:trans} to show that
all algorithms have slow progress in the dual graph model.
Here, we construct a network consisting of many copies
of this counter-example graph. In each copy, we leave
one of the reliable edges as reliable, but {\em downgrade}
the others to unreliable edges that act reliable. 
A centralized algorithm can achieve fast progress in each
of these copies as it only needs the processes connected
to the single reliable edge to broadcast.
A distributed algorithm, however, does not know
which edge is actually reliable, so it still has slow
progress. We prove that in one of these copies, the last
message to be delivered comes across the only reliable edge,
w.h.p. This is the copy that provides the slow progress needed
by the theorem.}
\fullOnly{\begin{proof}
Let $G_1 = H(\Delta',n)$ be the classic network,
with size $n$ and maximum receiver degree $\Delta'$,
proved to exist by Theorem~\ref{thm:Ack_LB}.
(Notice the bounds on $\Delta'$ from the theorem
statement match the requirement by Theorem~\ref{thm:Ack_LB}.)
As also proved in this previous theorem, 
every centralized
algorithm has an acknowledgment bound 
of $\Omega(\Delta'\log{n} )$
in $G_1$.

Next, let $G_2 = Dual(G_1)$ be the dual graph
network, with maximum receiver degree $\Delta'$ 
and network size $n_2 = n\Delta'$,
that results from applying the $Dual$ transformation,
defined in the proof of Lemma~\ref{lem:trans}, to $G_1$. 
This Lemma proves that every centralized
algorithm has a progress bound
 of $\Omega(\Delta'\log{n_2})$
rounds in $G_2$.
We can restate this bound as follows:
for every algorithm, there is an assignment
of messages to senders such
that in every execution
some process has a reliable
edge to at least one sender,
and yet does not receive its first message from
a sender for $\Omega(\Delta'\log{n_2})$
rounds. Call the reliable edge on which this slow process
receives its first message the {\em slow edge} in the execution.\footnote{We
are assuming w.l.o.g. that in these worst case executions
identified by the lower bound, that the last receiver to receive
a message does not receive this message on an unreliable edge
(as, in this case, we could always drop that message, contradicting
the assumption that we are considering the worst case execution).}

We now use $G_2$ to construct a larger dual graph network, $G^*$.
To do so, label the $m$ reliable edges in $G_2$ as $e_1,...,e_m$.
We construct $G^*$  to consist of $n_2m^2$ modified copies of 
$G_2$. 
In more detail, $G^*$ has $n_2m$ components, 
which we label $C_{i,j}$, $i\in [m], j\in [n_2m]$.
Each $C_{i,j}$ has the same structure
as  $G_2$ but with the following exception:
we keep only $r_i$ as a reliable edge;
all other reliable edges $r_j$, $j\neq i$,
are {\em downgraded} to unreliable edges.

We are now ready to prove a lower bound on progress 
on $G^*$.
Fix some distributed local broadcast algorithm ${\cal A}$.
We assign the $n_2^2m^2$ process to nodes in $G^*$
as follows.
Partition these processes into sets
$S_1,...,S_{n_2m^2}$, each consisting of $n_2$ processes.
For each $S_i$, $i\in [n_2m]$,
we make an independent random choice of a value $j$ from $[m]$,
and assign $S_i$ to component $C_{j,i}$ in $G^{*}$.
Notice, no two such sets can be assigned to same to the same
component, so the choice of each assignment can be independent
of the choice of other assignments. We also emphasize
that these choices are made independent of the algorithm 
${\cal A}$ and its process' randomness.
Finally, we assign the remaining $S$ sets to the
remaining $G^*$ components in an arbitrary fashion.

For each $C_{j,i}$,
we fix the behavior of each downgraded edge to
behave as if it was a reliable edge.
With this restriction in place, $C_{j,i}$ now
behaves indistinguishably from $G_2$.
It follows from Lemma~\ref{lem:trans},
that no algorithm can guarantee fast
progress in $C_{j,i}$.

Leveraging this insight, we assume
the worst case behavior,
in terms of the non-downgraded unreliable
edge behavior and message assignments,
in each component.
In every $C_{j,i}$, therefore,  
some process does not receive a message for the first
time on a reliable or downgraded edge
for $\Omega(\Delta'\log{n_2} )$ rounds.
With this in mind, let us focus
on our sets of processes $S_1$ to $S_{n_2m^2}$.
Consider some $S_i$ from among
these sets. Let $C_{j,i}$ be
the component to which we randomly assigned $S_i$.
As we just established, 
some process in $S_i$ does
not receive a message for the first
time until many rounds have passed.
This message either comes across
the single reliable edge in $C_{j,i}$
or a downgraded edge.
If it is a reliable edge, 
then this process yields the slow progress we need.

The crucial observation here is that 
for any fixed randomness for the processes
in $S_i$,
the choice of this edge is the same regardless
of the component where $S_i$ is assigned.
Therefore we can treat the determination
of this slow edge as independent of
the assignment of $S_i$ to a component.
Because we assigned $S_i$ at random
to a component, the probability
that we assigned it to a component
where the single reliable edge matches
the fixed slow edge is $1/m$.
Therefore, the probability
that this match occurs for
at least one of our $n_2m$ $S$
sets is $(1 - 1/m)^{n_2m} \leq 1 - e^{n_2}$.
In other words,
some receiver in our network does not
receive a message over a reliable edge
for a long time, w.h.p.
Because a progress bound must hold w.h.p.,
the progress bound of ${\cal A}$ is slow.

Finally, to establish our gap, we must also describe
a centralized algorithm can achieve $O(1)$
progress in this same network, $G^*$.
To do so, notice each component $C_{i,j}$ 
has exactly one reliable edge.
With this in mind, we define our fixed centralized algorithm to divide rounds
in pairs and do the following: in the first round of a pair,
if the first endpoint of a component's single reliable edge (by some
arbitrary ordering of endpoints)
has a message then it broadcasts; in the second
round do the following for the second endpoint.
After a process has been active for a full round pair, it acknowledges the message.
This centralized algorithm satisfies the following property:
if some process $u$ receives a messages as input in round $r$, 
every reliable neighbor of $u$ receives the message by $r+O(1)$.
It follows that this centralized algorithm
has a progress bound of $O(1)$.
\end{proof}
}

Notice, in some settings, practioners might tolerate a slow worst-case
progress (e.g., as established in Theorem~\ref{thm:worst-prog-gap}),
so long as {\em most} processes have fast progress.
In our next theorem, we show that this ambition is also impossible
to achieve.
To do so, we first need a definition that captures
the intuitive notion of many processes having slow progress.
In more detail, given an execution of the one-shot local broadcast
problem (see Section~\ref{sec:model}), with 
processes in {\em sender set} $S$ being passed messages,
label each receiver that neighbors
$S$ in $G$ with the round when it
first received a message. The {\em average progress}
of this execution is the average of these values.
We say an algorithm has an {\em average progress of $f(n)$},
with respect to a network of size $n$ and sender set $S$, 
if executing
that algorithm in that network with those senders
generates an
average progress value of no more than $f(n)$, w.h.p.
We now bound this metric in the same style as above
%(note, we keep both theorems as they are incomparable---one
%has a stronger bound, while the other
%bounds a stronger property):

\begin{theorem}\label{thm:average-prog-gap}
For any $n$, there exists a dual graph network of size $n$
and a sender set,
such that the optimal centralized local broadcast
algorithm has an average progress of $O(1)$
while every distributed local broadcast algorithm
has an average progress of $\Omega(\Delta')$.
\end{theorem}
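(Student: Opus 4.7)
The plan is to use a construction consisting of many small, disjoint ``gadgets,'' each a simple bipartite sub-network that individually forces slow distributed progress, so that the average over all receivers is $\Omega(\Delta')$. Concretely, I would set $\Delta'=\lfloor\sqrt{n}\rfloor$ (any moderate choice suffices) and $k=\lfloor n/(\Delta'+1)\rfloor$. The network consists of $k$ disjoint gadgets $C_1,\ldots,C_k$, plus isolated padding nodes to reach size $n$ if needed. Gadget $C_i$ contains $\Delta'$ sender nodes $v^i_1,\ldots,v^i_{\Delta'}$ and a single receiver node $u_i$; in $G$, only the edge $\{u_i,v^i_1\}$ is present, while in $G'$ all edges $\{u_i,v^i_j\}$ are present. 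The sender set consists of all $k\Delta'$ sender nodes, each provided a distinct message.

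For the centralized upper bound, the algorithm knows the bijection $proc$ and hence which process sits at each $v^i_1$. In round $1$ it schedules exactly those $k$ processes to transmit. Because the gadgets are disjoint there is no cross-gadget interference, and each $u_i$ hears its $G$-neighbor alone via the reliable edge, producing $rcv$ in round $1$. All remaining senders have no $G$-neighbors and may $ack$ essentially immediately (their $c'$ value is zero). The resulting average progress over the $k$ receivers is $O(1)$.

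For the distributed lower bound, I would fix an adversary that, for each gadget, suppresses every reception at $u_i$ unless the process sitting at $v^i_1$ transmits while every other sender in $C_i$ stays silent: whenever another sender in $C_i$ transmits, the adversary activates its unreliable $G'$-edge to $u_i$ (causing a collision if $v^i_1$ also transmits, or a ``useless'' delivery that the adversary simply declines by deactivating the edge). Under this adversary, $u_i$ progresses only when the process at $v^i_1$ is the unique transmitter in $C_i$. The key averaging step is: over a uniformly random assignment of processes to the $\Delta'$ sender slots of $C_i$, and in any round $r$ with per-process transmission probabilities $p_1(r),\ldots,p_{\Delta'}(r)$, the probability that the process at $v^i_1$ is the sole transmitter is $(1/\Delta')\sum_j p_j(r)\prod_{k\ne j}(1-p_k(r))\le 1/\Delta'$, since the inner sum is just the probability of exactly one transmission. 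A union bound over $T=c\Delta'$ rounds then shows that $u_i$ fails to progress within $T$ rounds with constant probability.

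To upgrade each receiver's constant slow-progress probability into a high-probability bound on the average, I would exploit the fact that the $k$ gadgets use disjoint processes, so the progress times $X_1,\ldots,X_k$ are (after conditioning on the trivial per-gadget structure) mutually independent, each at least $c\Delta'$ with probability at least $1/2$. A Chernoff bound then gives $(1/k)\sum_i X_i=\Omega(\Delta')$ with probability $1-e^{-\Omega(k)}\ge 1-1/n$, and the probabilistic method over the random bijection extracts a specific mapping $proc$ witnessing the claimed bound. The main obstacle is handling the global bijection constraint, since a literally uniform per-gadget assignment is not independent across gadgets; I would resolve this either by defining the random bijection via independent per-gadget shuffles (using that processes are interchangeable up to ID) or by invoking negative association, neither of which affects the asymptotics.
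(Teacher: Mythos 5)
Your construction is genuinely different from the paper's. The paper proves this theorem by reduction: it builds a single ``spread network'' (senders $b_1,\dots,b_{n/2}$, receivers $r_1,\dots,r_{n/2}$, with $b_i$ reliably joined to $r_i$, $b_1$ reliably joined to \emph{every} receiver, and $G'$ complete, so $\Delta'=n-1$), and shows that a distributed algorithm with $o(n)$ average progress could be used to simulate a fast solution to global broadcast in the ``lollipop'' network, contradicting the $\Omega(n)$ lower bound of~\cite{KLN09BA}. Your route instead uses $k=\Theta(n/\Delta')$ disjoint gadgets and a direct symmetry argument: conditioned on the (assignment-independent) history, the probability that the slot-$1$ process is the unique transmitter in its gadget is at most $\frac{1}{\Delta'}\Pr[\text{exactly one transmission}]\le 1/\Delta'$, a union bound over $\Theta(\Delta')$ rounds gives constant failure probability per gadget, and independence across gadgets plus Chernoff gives the average bound. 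This is self-contained (no external lollipop lemma), and the per-gadget independence makes the concentration step clean; the paper's version buys a single network with $\Delta'=\Theta(n)$ and hence an $\Omega(n)$ average, whereas yours gives $\Omega(\sqrt n)$ for your parameter choice --- both satisfy the theorem as stated. Your averaging-over-the-bijection step (per-gadget independent shuffles, then extract a fixed $proc$ by averaging) is fine, since the adversary chooses $proc$.

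There is one gap you must close: your adversary is only specified for rounds in which \emph{senders} transmit, but the receiver $u_i$ may also transmit, and the edge $\{u_i,v^i_1\}$ is reliable. If $u_i$ transmits while the adversary deactivates the unreliable edges $\{u_i,v^i_j\}$, $j\ne 1$, then the process at slot $v^i_1$ hears $u_i$ while the others hear silence --- it thereby learns it occupies the special slot and can transmit alone in the next round, collapsing your lower bound to $O(1)$. This breaks the premise of your key step, namely that the transmission probabilities $p_1(r),\dots,p_{\Delta'}(r)$ are independent of the slot assignment. The fix is easy but needs to be stated: in any round where $u_i$ transmits, the adversary places \emph{all} unreliable edges of the gadget in the reach set, so every listening sender receives $u_i$'s message identically and no slot information leaks (and $u_i$, being a transmitter, receives nothing regardless). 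With that rule added, an induction shows every process's view before $u_i$'s first reception is invariant under permutations of the sender slots, and the rest of your argument goes through. You should also state the remaining cases explicitly (activate all sender edges when two or more senders transmit, so $u_i$ sees a collision even with collision detection; deactivate the single edge when exactly one non-special sender transmits), since as written the parenthetical ``activates \dots or \dots declines by deactivating'' is self-contradictory.
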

\shortOnly{Our proof uses a reduction
argument. We show how a distributed algorithm that achieves
fast average progress in a specific type of dual graph network
can be transformed to a distributed algorithm that
solves global broadcast fast in a different type
of dual graph network.
We then apply a lower bound from~\cite{KLN09BA}
that proves no fast solution exists for the latter---providing
our needed bound on progress.}
\fullOnly{
\paragraph{Lollipop Network.}
We begin our argument
by recalling a result proved in a previous study of
the dual graph model.
This result concerns the {\em broadcast
problem}, in which a single source process is provided
a message at the beginning of the execution which it must
subsequently
propagate to all processes in the network.
The result in question
 concerned a specific dual graph construction
we call a {\em lollipop network}, which can be defined
with respect to any network size $n>2$.
For a given $n$,
the $G$ edges in this network
define a clique of $n-1$ nodes,
$c_1$ to $c_{n-1}$.
There is an additional node
$r$ that is connected
to one of the clique clique nodes.
By contrast, $G'$ is complete.
In~\cite{KLN09BA} we proved the following:

\begin{lemma}[From~\cite{KLN09BA}]
Fix some $n>2$ and randomized broadcast algorithm ${\cal A_B}$.
With probability at least $1/2$,
${\cal A_B}$ requires at least $\lfloor (n-1)/2 \rfloor$
rounds to solve broadcast in the lollipop network of size $n$.
\label{lem:podc2009}
\end{lemma}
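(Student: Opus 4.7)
The plan is to let $N$ be the disjoint union of $m = \lfloor n/(\Delta'{+}1)\rfloor$ node-disjoint \emph{lollipop components}, with no edges of either kind between components. Within each component $i$, the reliable graph $G$ consists of a clique on $\Delta'$ nodes together with one extra reliable pendant edge from a distinguished clique node $d_i$ to a pendant $r_i$, while $G'$ is the complete graph on the $\Delta'{+}1$ nodes of the component. The sender set $S$ is declared to be every clique node of every component, so the only progress receivers are the $m$ pendants $r_1,\dots,r_m$. The maximum $G'$-degree of any receiver is exactly $\Delta'$, and the total node count can be padded to $n$ with isolated vertices if necessary.

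For the centralized direction, the algorithm knows the node-to-process bijection and in particular which clique node is each $d_i$. In round $1$ it schedules exactly $d_1,\dots,d_m$ to transmit. Since components share no $G'$ edges there is no cross-component interference, and within each component $d_i$ is the unique transmitter, so $r_i$ reliably hears $d_i$'s message. Every pendant therefore has progress~$1$, giving average progress $O(1)$.

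For the distributed direction, the central per-component claim is: for any distributed local broadcast algorithm, restricted to a single lollipop component, the pendant fails to receive any message during the first $\lfloor(\Delta'{-}1)/2\rfloor$ rounds with probability at least $1/2$. I would prove this by reduction from global broadcast on a lollipop of size $\Delta'{+}1$. Given a distributed local broadcast algorithm $A$ violating the per-component bound, build a broadcast algorithm $B$ on a single lollipop $L$ as follows: the source $s$ transmits its message in round~$1$, which reaches every other clique node through the reliable $G$-clique; from round~$2$ onward each clique node feeds $\mathit{bcast}(m)_v$ into a private copy of $A$ (taking $m$ to be the received message tagged by the process's id so inputs are unique), and simulates $A$. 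Whenever $A$ would cause $r$ to first receive a message, $B$ has completed broadcast. The reduction costs only one extra round, so fast pendant-progress in $A$ gives a broadcast algorithm contradicting Lemma~\ref{lem:podc2009}.

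Because components are disjoint in both $G$ and $G'$, the behavior of any distributed algorithm inside component $i$ depends only on the random coins of processes placed there, so the per-pendant progress times $T_1,\dots,T_m$ are mutually independent. Combining the per-component bound $\Pr[T_i \geq \Omega(\Delta')] \geq 1/2$ with a Chernoff concentration argument gives $\frac{1}{m}\sum_i T_i = \Omega(\Delta')$ with high probability, which is exactly the average-progress lower bound sought. The main obstacle is in the reduction itself: I must verify that the reach-set adversary available in the broadcast model can be matched round-for-round to a legal adversary in the local broadcast model, and that the $\mathit{bcast}$ inputs manufactured at round~$2$ in $B$ look to $A$ exactly like a well-formed local broadcast instance, so that $A$'s w.h.p.\ progress guarantee still applies. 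Once this per-component bound is secured, the independence across components is what elevates it from a worst-case statement to one about the average.
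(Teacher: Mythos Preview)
Your proposal does not attempt to prove the stated lemma at all. Lemma~\ref{lem:podc2009} is a lower bound on \emph{global broadcast} in a single lollipop network, and the paper does not prove it---it is quoted verbatim from~\cite{KLN09BA} and used as a black box. Your write-up, by contrast, explicitly \emph{invokes} Lemma~\ref{lem:podc2009} as a tool (``contradicting Lemma~\ref{lem:podc2009}''), so whatever you are proving, it cannot be that lemma itself.

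What you have actually sketched is a proof of Theorem~\ref{thm:average-prog-gap}, the average-progress separation between centralized and distributed algorithms. If that was your intent, your route is genuinely different from the paper's. The paper builds a single \emph{spread network} (senders $b_1,\dots,b_{n/2}$, receivers $r_1,\dots,r_{n/2}$, with $b_1$ reliably connected to every receiver and $G'$ complete), and then runs a delicate two-round-per-step simulation of the spread network inside a single lollipop instance, using a coordinator to keep all simulators consistent; the punchline is that until the simulated $b_1$ broadcasts alone, at most one receiver per round can make progress, so small average progress forces an early solo-$b_1$ round, which solves lollipop broadcast too fast. Your approach instead takes $m=\Theta(n/\Delta')$ disjoint lollipop components, proves a per-component $\Omega(\Delta')$ bound by a direct reduction to lollipop broadcast, and then exploits independence across components plus Chernoff to lift a probability-$1/2$ per-component statement to a w.h.p.\ statement about the average. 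Your argument is more modular and avoids the simulation machinery, but it buys a weaker parameter regime: you need $m=\Omega(\log n)$ for the Chernoff step, so your $\Delta'$ is at most $O(n/\log n)$, whereas the paper's single spread network achieves $\Delta'=n-1$. Both satisfy the theorem as stated, but the paper's gap is quantitatively larger.

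If you meant to prove Lemma~\ref{lem:podc2009} itself, none of this is relevant: that requires an argument about the adversary in a single lollipop (roughly, the adversary hides which clique node is $b$ by making all $G'$ edges behave reliably, so from the algorithm's viewpoint the clique is symmetric and isolating $b$ takes $\Omega(n)$ rounds), and you would need to consult~\cite{KLN09BA} for it.
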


\paragraph{Spread Network.}
Our strategy in proving Theorem~\ref{thm:average-prog-gap}
is to build a dual graph network
in which achieving fast average progress would
yield a fast solution to the broadcast problem in the lollipop
network, contradicting Lemma~\ref{lem:podc2009}.
To do so, we need to define the network in which we achieve
our slow average progress.
We call this network a {\em spread network}, 
and define it as follows.
Fix any even size $n \geq 2$.
Partition the $n$ nodes in $V$ into 
{\em broadcasters} ($b_1,b_2,...,b_{n/2}$)
and {\em receivers} ($r_1,r_2,...,r_{n/2}$).
%To show the configuration geographic, let $r$
%be the positive constant provided by the definition
%of geographic. 
%We assign locations to the nodes in two
%dimensional space as follows. Line up on broadcasters
%on the line $y=0$, each $\epsilon$ space apart
%(a small positive constant we can define later.)
%Position each $r_i$ exactly distance $1$ above $b_i$
%on the $y$-axis. That is, if $b_i$ is at position
%$(i\epsilon,0)$, $r_i$ is at $(i\epsilon,1)$.
%By the definition of geographic, there is a $G$
%edge between each $b_i$ and $r_i$, but not between
%any $b_i$ and $r_j$, $i\neq j$.
%
%Now we make a slight adjustment.
%Add $G$ edges between all broadcasters.
For each $b_i$, add a $G$ edge to $r_i$.
Also add a $G$ edge from
$b_1$ to all other receivers.
Define $G'$ to be complete. Note that in this network, $\Delta'=n-1$.

%\vspace{3mm}

We can now prove our main theorem.

%\begin{theorem}
%Fix some $n>15$ and local broadcast algorithm ${\cal A}$.
%There exists an instance of the one shot local 
%broadcast algorithm, defined w.r.t. the spread network,
%where: (1) with probability at least
%$1/2$ the average progress of ${\cal A}$ is $\Omega(n)$;
%and (2) there exists a centralized algorithm that achieves
%constant average progress.
%\end{theorem}

\begin{proof}[Proof of Theorem~\ref{thm:average-prog-gap}]
%We will show that executing ${\cal A}$ in the mountain
%configuration with an appropriate assignment
%of processes to graph nodes and $G'$ rules,
%satifies the theorem statement.
%To do so, we start by using ${\cal A}$, defined
%for a a network of sie $n$,
%to construct an algorithm ${\cal A'}$ that can
%be used to solve the broadcast problem in the lollipop
%graph.
Fix our sender set $S = \{b_1,...,b_{n/2}\}$.
Notice, a centralized algorithm can achieve
$1$ round progress for all receiver by simply
have $b_1$ broadcast alone.

We now turn our attention to showing
that any distributed algorithm, by contrast, is slow
in this setting.
Fix one such algorithm, ${\cal A}$.
Assume for contradiction that it defies
the theorem statement.
In particular, it will guarantee $o(n)$ progress
when executed in the spread network
with sender set $S = \{b_1,...,b_{n/2}\}$.

%The following centralized algorithm achieves
%an average progress of $1$: $b^{*}$ broadcasts alone.
%Assume for contradiction that ${\cal A}$
%defies the theorem statement.
%That is, ${\cal A}$ achieves an average progress
%of $o(n)$ in this instance, with probability at least $1/2$.
%
We use  ${\cal A}$
to construct a broadcast algorithm ${\cal A'}$
that can be used to solve broadcast in the lollipop network.
At a high-level, ${\cal A'}$ has each
process in the clique in the lollipop network simulate both
a sender and its matching receiver from the spread network.
In the following, use $b$ to refer to the single node in
the clique of the lollipop network that connects to $r$ with
a reliable edge.
In this simulation, process $b$ in the lollipop network matches up with
process $b_1$ in the spread graph.
Of course, process $b$ does not know a priori that it is 
simulating process $b_1$, as in the lollipop network
$b$ does not a priori that is assigned to this crucial node.
This will not be a problem, however, because 
we will control the $G'$ edges in our simulation
such that the behavior of $b_1$ will differ from
the other processes in $S$ only when it broadcasts alone
in the graph. It will be exactly at this point,
however, that our simulation can stop, having
successfully solved broadcast.

In more detail, our algorithm ${\cal A'}$ works as follows:

\begin{enumerate}

  \item We first allow process $r$ to identify itself.
  To do so, have the source, $u_0$, broadcast.
  Either we solve the broadcast problem (e.g., if the source is $b$)
  or $r$ is the only process to not receive a message---allowing
  it to figure out it is $r$. At this point, every
   process but $r$ has the message. To solve broadcast
  going forward, it is now sufficient for $b$ to broadcast alone.

  \item We will now have processes in ${\cal A'}$ simulate
  processes from ${\cal A}$ to determine whether or
  not to broadcast in a given round.
  In more detail, we have each process $u$ in the lollipop
   clique simulate a sender (call it, $b_u$) and its corresponding
   receiver (call it, $r_u$) from the spread network.\footnote{In
   the case of the process simulating $b_1$, we have to be careful
   because $b_1$ has a $G$ edge to all receivers. The
  simulator, however, is responsible only for simulating
   the sole receiver that is connected to only $b_1$, namely $r_1$.}
  We have $n/2$ clique processes each simulating $2$ spread network
  processes, so we are now setup to begin a  simulation of
  an $n$-process spread network. 
%If our simulation
%   remains valid, then ${\cal A}$ should solve local
%  one shot broadcast quickly, as per our assumption.

 \item Each simulated round of ${\cal A}$ will require
       two real rounds of ${\cal A'}$.

      {\em In the first real round}, each process $u$ in the lollipop
      clique advances the simulation of its simulated
     processes $b_u$ and $r_u$, to see if they
      should broadcast in the current
     round of ${cal A}$ being simulated. If either $b_u$ or $r_u$
      broadcasts (according to $u$'s simulation), 
       $u$ broadcasts these simulated messages,
      {\em and} the broadcast
     message for the instance of broadcast we are trying to solve.
    On the other hand, if neither of $u$'s simulated processes broadcast,
   $u$ remains silent.
     (Notice, if only $b$ broadcasts during this round, we are done.)

    The exception to these rules is the source, $u_0$, which does not broadcast,
    regardless of the result of its simulation.

    {\em In the second real round of our simulated round}, 
    $u_0$ announces what it learned in the previous round.
    That is, $u_0$ acts as a simulation coordinator. 

     In more detail, 
    $u_0$ can tell the difference between the following
     two cases:
     (1) either no simulated process, or two or more
         simulated processes, broadcast;
     (2) one simulated process broadcast (in which
         case $u_0$ also knows whether the processes is a sender
         or receiver in the spread network, and its
          message);
     %(3) two or more simulated processes broadcast.

    Process $u_0$ announces whether case $1$ or $2$ occurred, and in the
     case of (2), it also announces the identity of the
     sender and its message.
    This information is received by all processes in the lollipop clique.

\item Once the lollipop clique processes learn 
      the result of the simulation
     from $u_0$, they can consistently and correctly finish
     the round for their simulated processes by applying
 the following rules.

  {\em Rule \#1:} If $u_0$ announces that no simulated process
  broadcasts, or two or more simulated processes broadcast,
   then all the processes in ${\cal A'}$ 
  have their simulated processes receive nothing.
  (This is valid as $G'$ is complete in the simulated network,
   so it is valid for concurrent messages
   to lead to total message loss.)
 
  %{\em Rule \#2:} If $u_0$ announces that multiple simulated processes 
  % broadcast,
  % then all the simulator processes in ${\cal A'}$
  % have simulated processes that did not broadcast
  % receive only a collision notification.
  % (This is valid because $G'$ is complete in the mountain configuration.)

  {\em Rule \#3:} If $u_0$ announces that one simulated process broadcast,
   then the simulators' behavior depends on the identity of the
   simulated broadcaster. If this broadcaster is a sender in the 
  spread network, then it simulates its single matched receiver receiving
 the message. (Notice this behavior is valid so long as the broadcaster
   is not $b^*$. Fortunately, the broadcaster {\em cannot} be $b^*$, 
   as if it was, then
   $b$ would have broadcast alone in ${\cal A'}$ in the previous round,
   solving broadcast.)
  
   On the other hand, if the single broadcaster is a receiver,
   then we have to be more careful. It is not sufficient for
its single matched broadcaster to receive the message
  because $b_1$ must also receiver it.
  Because we do not know which process is simulating $b_1$,
  we instead, in this case, simulate all broadcasters
  receiving this message. This is valid as $G'$ is complete.

\end{enumerate}

By construction, ${\cal A'}$ will solve broadcast when simulated
$b_1$ broadcasts
alone in the simulation. 
Our simulation rules are designed such that $b_1$ {\em must} eventually
broadcast alone for the simulated instance of ${\cal A}$ to solve
local broadcast, as this is the only way for $r_1$
to receive a message from a process in $S$.
Because we assume ${\cal A}$ solves this problem,
and we proved our simulation of ${\cal A}$ is valid, 
${\cal A}$ {\em will} eventually have $b_1$ broadcast alone and
therefore ${\cal A'}$ {\em will} eventually solve broadcast.

The question is how long it takes for this event to occur.
Recall that we assumed that with high probability
the average progress of ${\cal A}$ is $o(n)$.
By our simulation rules, until $b_1$ broadcasts alone,
at most one receiver can receive a message from 
a sender, per round. It follows that $b_1$ must broadcast
alone (well) {\em before} round $n/4$. (If it waited until
$n/4$, only $n/4$ processes will have finished receiving
in those round, so even if the remaining receivers all finished in round
$n/4$, the average progress would be greater
than $n/8$ which, of course, is not $o(n)$.)

By Lemma~\ref{lem:podc2009}, 
with probability at least $1/2$,
${\cal A'}$ requires at least
 $((\frac{n}{2}+1) - 1)/2 = n/4$ rounds to solve broadcast.
We just argued, however, that with {\em high} probability 
$b_1$ broadcasts alone---and therefore ${\cal A'}$ solves
broadcast---in less than $n/4$ rounds. A contradiction.
\end{proof}
}

\vspace{-0.2cm}
\bibliographystyle{plain}

%%%%%%%%%%%%%%%%%%%%%%%%%%%%%%%%%%%%%%%%%%%%%%%%%%%
\end{document}